\documentclass[letterpaper,11pt]{article}
\usepackage[margin=1in]{geometry}
\usepackage{amsfonts,amsmath,amssymb,amsthm}
\usepackage{thmtools} \usepackage{thm-restate}
\usepackage[noadjust]{cite}
\usepackage{comment,enumerate,hyperref}
\usepackage[colorinlistoftodos,textsize=small,color=red!25!white,obeyFinal]{todonotes}
\newtheorem{theorem}{Theorem}[section]

\newtheorem{lemma}[theorem]{Lemma}

\newtheorem{conjecture}[theorem]{Conjecture}

\usepackage{soul}
\theoremstyle{definition}
\newtheorem{definition}{Definition}

\usepackage{xcolor,xspace}
\usepackage{booktabs}

\usepackage[boxruled, noend]{algorithm2e}
\setlength{\algomargin}{2em}

\usepackage{placeins}

\usepackage{csquotes}

\newcommand{\poly}{\text{poly}}
\newcommand{\dist}{\text{dist}}
\newcommand{\Oish}{\widetilde{O}}
\newcommand{\Omegaish}{\widetilde{\Omega}}

\newcommand{\ffs}{\texttt{find-fault-set}}
\newcommand{\eps}{\varepsilon}

\DeclareMathOperator*{\E}{\mathbb{E}}

\bibliographystyle{plainurl}

\title{Vertex Fault-Tolerant Emulators}

 \author{Greg Bodwin\\ University of Michigan\\ bodwin@umich.edu \and
Michael Dinitz\thanks{Supported in part by NSF award CCF-1909111.} \\ Johns Hopkins University\\ mdinitz@cs.jhu.edu\and
Yasamin Nazari \thanks{Supported in part by NSF award CCF-1909111 and by Austrian Science Fund (FWF) grant P 32863-N.}\\ University of Salzburg\\ ynazari@cs.sbg.ac.at}

\date{}

\begin{document}
\begin{titlepage}
\maketitle

\begin{abstract} 
    A \emph{$k$-spanner} of a graph $G$ is a sparse subgraph that preserves its shortest path distances up to a multiplicative stretch factor of $k$, and a \emph{$k$-emulator} is similar but not required to be a subgraph of $G$.
    A classic theorem by Alth{\" o}fer et al.~[Disc.\ Comp.\ Geom.\ '93] and Thorup and Zwick [JACM '05] shows that, despite the extra flexibility available to emulators, the size/stretch tradeoffs for spanners and emulators are equivalent.
    Our main result is that this equivalence in tradeoffs no longer holds in the commonly-studied setting of graphs with vertex failures.
    That is: we introduce a natural definition of vertex fault-tolerant emulators, and then we show a three-way tradeoff between size, stretch, and fault-tolerance for these emulators that polynomially surpasses the tradeoff known to be optimal for spanners.
    
    We complement our emulator upper bound with a lower bound construction that is essentially tight (within $\log n$ factors of the upper bound) when the stretch is $2k-1$ and $k$ is either a fixed odd integer or $2$.
    We also show constructions of fault-tolerant emulators with additive error, demonstrating that these also enjoy significantly improved tradeoffs over those available for fault-tolerant additive spanners.
\end{abstract}

\thispagestyle{empty}

\end{titlepage}

\renewcommand{\arraystretch}{1.5} 

\section{Introduction}
Two well-studied objects in graph sparsification are \emph{spanners} and \emph{emulators}.  Given a weighted input graph $G = (V,E, w)$, a $t$-spanner of $G$ is a subgraph $H$ of $G$ in which 
\begin{equation} \label{eq:stretch}
\dist_G(u,v) \leq \dist_H(u,v) \leq t \cdot \dist_H(u,v)
\end{equation}
for all $u,v \in V$.  Note that the first inequality, that $\dist_G(u,v) \leq \dist_H(u,v)$, is implied automatically by the fact that $H$ is a subgraph of $G$.  The value $t$ is called the the \emph{stretch} of the spanner.

A $t$-emulator \cite{dor2000all} is defined in the same way, except that $H$ is not required to be a subgraph of $G$.
For emulators, the first inequality is not automatic, and it implies that any edge $(u,v)$ in the emulator $H$ but not in the input graph $G$ must have weight at least $\dist_G(u,v)$.
In fact, it is easy to see that without loss of generality that we may assign it weight exactly $\dist_G(u,v)$.

\subsection{Equivalence of Spanner/Emulator Tradeoffs}

Both spanners and emulators have been studied extensively, and we have long had a complete understanding of the tradeoffs between spanner/emulator size (number of edges) and stretch.
Specifically:
\begin{itemize}
    \item Alth\"{o}fer et al.~\cite{AlthoferDDJS:93} proved that for every positive integer $k$, every weighted graph $G = (V, E, w)$ has a $(2k-1)$-spanner with at most $O(n^{1+1/k})$ edges.

    \item On the lower bounds side, one can quickly verify that any unweighted input graph $G$ of girth $>2k$ has no $(2k-1)$ spanner, except for $G$ itself.
    Under the Erd\H{o}s girth conjecture~\cite{erdHos1964extremal}, there are graphs of girth $>2k$ with $\Omega(n^{1+1/k})$ edges.
    Thus, the upper bound of Alth\"{o}fer et al.\ cannot be improved at all on these graphs.

    \item Alth{\" o}fer et al.~\cite{AlthoferDDJS:93} and Thorup and Zwick \cite{thorup2005approximate} observed that essentially the same lower bound applies for emulators.
    For any two subgraphs $H_1, H_2$ of a graph of girth $>2k$, they disagree on some pairwise distance by more than $\cdot (2k-1)$.
    This implies that $H_1, H_2$ need different representations as $(2k-1)$-emulators.
    There are $2^{\Omega(n^{1+1/k})}$ subgraphs of a girth conjecture graph, and so by a pigeonhole argument, one of these subgraphs requires an emulator on $\Omegaish(n^{1+1/k})$ edges.
    (In fact, the same method gives a lower bound on the size of any \emph{data structure} that approximately encodes graph distances, and hence this is often called an \emph{incompressibility argument}.)
\end{itemize}

Thus, even though emulators are substantially more general objects than spanners, they do not enjoy a meaningfully better tradeoff between size and stretch.

\subsection{Fault-Tolerant Spanners}

Spanners are commonly applied as a primitive in distributed computing, in which network nodes or edges are prone to sporadic failures.
This has motivated significant interest in \emph{fault-tolerant} spanners.
Intuitively, a vertex fault-tolerant spanner is a subgraph that remains a spanner even after a small set of nodes fails in both the spanner and the original graph.
More formally, the following definition was given by Chechik, Langberg, Peleg, and Roditty~\cite{ChechikLPR:10}.

\begin{definition}[VFT Spanner] \label{def:ftspanner}
Let $G = (V, E, w)$ be a weighted graph.  A subgraph $H$ of $G$ is an $f$-vertex fault-tolerant ($f$-VFT) $t$-spanner of $G$ if, for all $F \subseteq V$ with $|F| \le f$, $H \setminus F$ is a $t$-spanner of $G \setminus F$.
\end{definition}

After significant work following~\cite{ChechikLPR:10}, we now completely understand the achievable bounds on fault-tolerant spanners: Bodwin and Patel~\cite{BP19} proved that every graph has an $f$-VFT $(2k-1)$ spanner with at most $O\left(f^{1-1/k} n^{1+1/k}\right)$ edges (and the same bounds were shown to be achievable in polynomial time by~\cite{DR20,BDR21}), and Bodwin, Dinitz, Parter, and Williams~\cite{BDPW18} gave examples (under the girth conjecture) of graphs on which this bound cannot be improved in any range of parameters.\footnote{We note that there is a related definition of \emph{edge} fault-tolerant spanners which has also been studied extensively, and which admits different bounds~\cite{BDR22}.  But in this paper we focus on the vertex case and do not study edge fault-tolerant emulators, leaving that as an interesting direction for future research.}

\subsection{Fault-Tolerant Emulators}

In this paper we ask a natural question: what if we add a fault-tolerance requirement to emulators?
Are stronger bounds possible than the ones known for spanners?  Making progress on this requires answers to two related questions:
\begin{enumerate}
    \item How should we even \emph{define} a fault-tolerant emulator?  As we discuss shortly, there are two different definitions that both seem plausible at first glance.
    
    \item The lower bound on VFT spanners of~\cite{BDPW18} can also be generalized into an incompressibility argument, like the one by Thorup and Zwick~\cite{thorup2005approximate}.
    Since an emulator is just a different way of compressing distances, why wouldn't this lower bound apply to fault-tolerant emulators, ruling out hope for a better size/stretch tradeoff?
\end{enumerate}

These questions turn out to have some surprising answers.
We first argue that, of the two \emph{a priori} reasonable definitions of fault-tolerant emulators, only one of them is actually sensible.  We then show that this definition escapes the incompressibility lower bound, and we design fault-tolerant emulators that are sparser than the known \emph{lower bounds} for fault-tolerant spanners by $\poly(f)$ factors.
We also discuss fault-tolerant emulators with \emph{additive} stretch, and show that these also enjoy substantial improvements in size/stretch tradeoff over fault-tolerant additive spanners.

\subsubsection{VFT Emulator Definition}

Before we can even discuss bounds or constructions, we need to define fault-tolerant emulators.
Following Definition~\ref{def:ftspanner}, we get the following definition:
\begin{definition} [VFT Emulator Template Definition] \label{def:ftems}
Let $G = (V, E, w)$ be a weighted graph.
A graph $H$ is an $f$-vertex fault-tolerant ($f$-VFT) $t$-emulator of $G$ if, for all $F \subseteq V$ with $|F| \le f$, $H \setminus F$ is a $t$-emulator of $G \setminus F$.
\end{definition}

However, there are two reasonable definitions of (non-faulty) $t$-emulators that we could plug into this template definition.
These definitions are functionally equivalent in the non-faulty setting, but they give rise to two importantly different definitions of VFT emulators.

\begin{enumerate}
    \item One natural possibility is to define a weighted graph $H$ to be a $t$-emulator of $G$ if it satisfies
    $$\dist_G(u,v) \leq \dist_H(u,v) \leq t \cdot \dist_H(u,v)$$
    for all nodes $u, v$.
    Plugging this into Definition~\ref{def:ftems}, we get that a weighted graph $H$ is an $f$-VFT emulator of $G$ if, for any fault set $F \subseteq V, |F| \le f$ and vertices $u, v \in V \setminus F$, we have
    $$\dist_{G \setminus F}(u,v) \leq \dist_{H \setminus F}(u,v) \leq t \cdot \dist_{G \setminus F}(u,v).$$

    \item Recall that in the non-faulty setting, we always set the weight of an emulator edge $\{u,v\}$ to be exactly $w(u,v) = \dist_G(u,v)$: we need $w(u,v) \geq \dist_G(u,v)$ in order to ensure that $\dist_G(u,v) \leq \dist_H(u,v)$, and there is no benefit to setting $w(u,v) > \dist_G(u,v)$.  
    In other words, we can define an emulator $H$ as an \emph{unweighted} graph, where the weight of each edge $\{u, v\}$ simply \emph{becomes} the corresponding distance $\dist_G(u,v)$ in the input graph.
    We then say that $H$ is a $t$-emulator if it satisfies the usual distance inequalities
    $$\dist_G(u,v) \leq \dist_H(u,v) \leq t \cdot \dist_G(u,v)$$
    after this reweighting.
    This is a subtle distinction, since there is no important difference from the previous one in the non-faulty setting.
    But passed through Definition \ref{def:ftems}, it gives an importantly different definition of VFT emulators:

\begin{definition}[VFT Emulators] \label{def:ftemsintro}
Let $G = (V, E, w)$ be a weighted graph, and let $H$ be an unweighted graph on vertex set $V$.  For every fault set $F \subseteq V$ with $|F| \leq f$, for every $u,v \not\in F$ with $(u,v) \in E(H)$, we define weight function $w_F$ where $w_F(u,v) = \dist_{G \setminus F}(u,v)$. 

We then define $\dist_{H \setminus F}(u,v)$ to be the $u \leadsto v$ shortest path distance in $H \setminus F$ under weight function $w_F$.  We say that $H$ is an $f$-VFT $t$-emulator if
\[
\dist_{G \setminus F}(u,v) \leq \dist_{H \setminus F}(u,v) \leq t \cdot \dist_{G \setminus F}(u,v)
\]
for all $u,v \in V$ and for all $F \subseteq V$ with $|F| \leq f$ and $u,v\not\in F$.
\end{definition}
   
    In other words, for emulator edges in $H$, the edge weight in the post-fault graph $H \setminus F$ \emph{automatically updates} to be equal to the shortest-path distance between the endpoints in the remaining graph $G \setminus F$.
\end{enumerate}

Our next task is to point out that the second definition is the natural one to study, both mathematically and because it captures applications of fault-tolerant emulators in distributed systems.
Going forward, Definition \ref{def:ftemsintro} is the one we use.

\subsubsection{Theoretical Motivation for the Second Definition}

Although the first definition of VFT emulators may seem simpler, there is a pitfall when one attempts any construction under this definition.
Imagine that we add an edge $(u,v)$ to an emulator $H$, where $(u, v)$ is not also an edge in $G$.
Suppose we set its weight to $w(u, v) = \dist_G(u,v)$.
Then after any set of vertex faults that stretches $\dist_G(u, v)$ at all, the $u \leadsto v$ distance will be smaller in $H$ than in $G$, violating the lower distance inequality!
In general, one would always have to set emulator edge weights to be at least the \emph{maximum distance} $\dist_{G \setminus F}(u, v)$ over all possible vertex fault sets $F$.
This is an unnatural constraint, and it precludes most reasonable uses of emulator edges.
For example, if $G$ is a path with three nodes $u-x-v$ and we create an emulator edge $(u,v)$, then if $F = \{x\}$ we will have $\dist_{G \setminus F}(u,v) = \infty$.
Thus we are forced to set emulator edge weight $w(u, v) = \infty$, essentially disallowing this as an emulator edge at all.

The other issue is the incompressibility lower bounds from \cite{BDPW18}.
The lower bound on VFT spanners from~\cite{BDPW18} actually holds for all compression schemes: one cannot generally build a data structure on $o(f^{1-1/k} n^{1+1/k})$ bits that can report $(2k-1)$-approximate distances between all pairs of vertices under at most $f$ vertex faults.
The first definition of VFT emulators functions as such a compression scheme, so it cannot achieve improved bounds.

Why can we hope for the second definition of VFT emulators to \emph{bypass} this lower bound?
The answer lies in the fact that our emulator definition updates its edge weights under faults.
A VFT emulator \emph{cannot} actually be represented by a data structure of size approximately equal to the number of edges in the emulator, since a static data structure would not have this updating behavior.
In other words, since we assume that weight updates occur \emph{automatically}, we are not charging ourselves for the extra information one would have to carrry around in order to actually compute these updates.
This means it is \emph{a priori} possible that the second definition of fault-tolerant emulators can be significantly sparser than fault-tolerant spanners.

\subsubsection{Practical Motivation for the Second Definition.}

Now we explain the practical motivation behind the second definition.
Automatically updating edge weights may seem at first like an incredibly strong and unrealistic assumption.
Indeed, in some of the contexts in which spanners and emulators are used this would not be reasonable, e.g., as a preprocessing step for computing shortest paths~\cite{aingworth1999fast, dor2000all}. 
But spanners were originally designed for use in distributed computing~\cite{PelegU:89,PelegS:89}, and in distributed contexts, emulator edges typically represent \emph{logical} links rather than physical links.  That is, each emulator edge is treated as if it represents a path between the endpoints, since that is how packets/messages would actually travel between the endpoints.  An example of this is \emph{overlay networks}, where one builds a logical network that lives ``on top of'' another network (usually the Internet).
Overlay networks are extremely useful (even though they simply run on top of the Internet), and have been extensively studied, often either directly or indirectly using spanners, emulators, or related objects (e.g., \cite{BWDA17,RON,ADGHT06,Detour,OverQos}).

In a logical link on top of an underlying network, packets are automatically rerouted post-failures using some routing protocol on the underlying topology.
The vast majority of these routing protocols use shortest paths.  So for a logical link $(u,v)$, we would actually expect its distance to ``automatically" become $\dist_{G \setminus F}(u,v)$, where the seeming ``magic" of the edge weight update is implemented by the underlying routing algorithm converging on new shortest paths.  

So in applications of emulators to distributed computing, edges that take on weight equal to the remaining shortest path length is a very reasonable assumption.  Note that this does not obviate the need for emulators.  In an overlay network there will typically be a layer of routing in the overlay network itself in order to implement application-specific techniques and protocols (see, e.g., \cite{BWDA17}), so packets sent from $u$ to $v$ will follow shortest paths between $u$ and $v$ in the overlay.  Thus, these packets will experience stretch according to the stretch of the weight-updating emulator.

\subsection{Our Results}

Our previous discussion explains why it is \emph{possible} for VFT emulators to improve on the size/stretch tradeoff available to VFT spanners.
Our main results confirm this possibility; we construct VFT emulators that polynomially surpass the lower bounds for VFT spanners. 

\subsubsection{Multiplicative Stretch}
Our most general results (and main technical contributions) are in the multiplicative stretch setting, where we prove the following upper bound.
\begin{theorem} \label{thm:UB-main}
For all $k \geq 1$ and $f \leq n$, every $n$-node weighted graph $G=(V, E, w)$ admits an $f$-VFT $(2k-1)$-emulator $H$ with 
\[|E(H)| \leq \begin{cases} \Oish_k \left(f^{\frac12 - \frac{1}{2k}} n^{1+1/k} +fn \right)& \text{if $k$ odd} \\ \Oish_k \left(f^{\frac12} n^{1+1/k} +fn \right) & \text{if $k$ even.} \end{cases}
\]
Moreover, there is a randomized polynomial-time algorithm which constructs such an emulator with high probability.
\end{theorem}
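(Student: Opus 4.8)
The plan is to construct $H$ as a fault-tolerant, ``meet-in-the-middle'' variant of the Thorup--Zwick $(2k-1)$-emulator. I would first reduce to unweighted $G$ --- the bunch/pivot definitions below make sense for arbitrary edge weights, so this loses nothing essential. I then sample a nested hierarchy $V = A_0 \supseteq A_1 \supseteq \cdots \supseteq A_L$, where $L=\lceil (k-1)/2\rceil$ is roughly half the depth of the usual Thorup--Zwick hierarchy, and each $A_i$ is a random subset of $A_{i-1}$ with $\E|A_i| = \Thetaish\bigl(f^{\alpha_i} n^{1-i/k}\bigr)$ for exponents $\alpha_i$ to be fixed by the size calculation. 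For each vertex $v$ and each level $i$, I add emulator edges from $v$ to its $\Theta(\sqrt f)$ nearest vertices of $A_{i+1}$ (a thickened pivot set) and to every vertex of $A_i$ strictly closer to $v$ than that $\Theta(\sqrt f)$-th nearest $A_{i+1}$-vertex (a thickened bunch); and I add a complete graph on $A_L$. The emulator $H$ is the union of all these edges, viewed as an unweighted graph whose edge weights update per Definition~\ref{def:ftemsintro}. For even $k$ the construction is made slightly asymmetric --- one ``half'' of the hierarchy is carried to level $k/2$ and the other only to level $k/2-1$ --- which is the source of the parity in the statement.

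The size bound is the routine direction: a balancing calculation. Thickened pivot sets contribute $\Oish(\sqrt f\, n) = \Oish(fn)$ edges. By the standard outward-exploration argument a vertex's level-$i$ bunch has expected size $\Oish\bigl(\sqrt f\, |A_i|/|A_{i+1}|\bigr)$, so the bunches contribute $\sum_i \Oish\bigl(\sqrt f\, n\, |A_i|/|A_{i+1}|\bigr)$, and the complete graph on $A_L$ contributes $|A_L|^2$. Substituting $|A_i|=\Thetaish(f^{\alpha_i} n^{1-i/k})$, every term is either $\Thetaish(fn)$ or $\Thetaish(f^{c} n^{1+1/k})$ with $c$ an affine function of the $\alpha_i$; choosing the $\alpha_i$ to equalize these exponents gives $c = \tfrac12-\tfrac1{2k}$ when $k$ is odd and $c=\tfrac12$ when $k$ is even, the gap being exactly the cost of the rounding in $L=\lceil(k-1)/2\rceil$. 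Chernoff bounds make the $|A_i|$ and the bunch sizes concentrate, so $|E(H)|$ obeys the stated bound with high probability.

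The stretch analysis is the crux, and the step I expect to be the main obstacle. Fix a fault set $F$ with $|F|\le f$ and vertices $u,v\notin F$; let $\pi$ be a shortest $u\leadsto v$ path in $G\setminus F$ and $D=\dist_{G\setminus F}(u,v)$. In outline I would route from $u$ \emph{and} from $v$ ``upward'' through the hierarchy to level $L$ and connect the two lifts across the complete layer on $A_L$, and a careful accounting of the $2L$ (or $2L-1$, for even $k$) ascents and the single crossing edge telescopes to total stretch $2k-1$. The real difficulty is that the pivot sets and bunches are defined via distances in $G$ while we must bound distances in $H\setminus F$ against $\dist_{G\setminus F}$: a pivot of some $x\in\pi$ may itself be faulty, or its $G$-shortest path may be destroyed by $F$, so its weight in $H\setminus F$ can be far larger than its $G$-distance suggests. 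The two facts I would use to get around this are (i) monotonicity, $\dist_{G\setminus F}\ge\dist_G$, so that the $G$-bunch of radius $r$ about $x$ contains the \emph{entire} $(G\setminus F)$-ball of radius $r$ about $x$, all of whose vertices survive $F$, hence the corresponding emulator edges are present in $H\setminus F$; and (ii) a charging argument: if $F$ destroys all of $x$'s $\Theta(\sqrt f)$ nearest pivots then it has spent $\Omega(\sqrt f)$ of its budget near $x$, so it cannot do this along more than $O(\sqrt f)$ disjoint stretches of $\pi$, and a surviving subpath then carries us --- using only surviving bunch edges whose weight already equals their $G\setminus F$-length --- to a vertex whose pivots genuinely survive and are genuinely close in $G\setminus F$. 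Making this quantitatively tight, so that $\Theta(\sqrt f)$ thickening suffices in place of the $f+1$ needed for spanners, is the heart of the argument and is exactly where the polynomial improvement comes from.

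Finally, the polynomial-time claim is routine. The only randomness is in forming the $A_i$; afterward the pivot sets and bunches are computed by $O(n)$ truncated Dijkstra (or BFS) explorations, and all the high-probability events invoked above --- concentration of the $|A_i|$ and bunch sizes, and the net/covering properties of the random samples used in the stretch proof --- follow from Chernoff bounds. Hence a randomized polynomial-time algorithm outputs a valid $f$-VFT $(2k-1)$-emulator of the stated size with high probability.
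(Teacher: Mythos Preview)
Your approach is entirely different from the paper's, and the gap you yourself flag as ``the main obstacle'' is real and not closed by the sketch you give. The paper does \emph{not} use a Thorup--Zwick hierarchy at all; it runs the VFT greedy spanner algorithm (edges in weight order, add $(u,v)$ if some fault set $F$ forces it) together with a random ``path-sampling'' step that, upon adding $(u,v)$, also flips biased coins to add emulator edges across the endpoints of certain short $H^{(sp)}$-paths completed by $(u,v)$. In that framework correctness is immediate from the greedy test; all the work goes into the \emph{size} bound, via a counting argument over carefully restricted ``SALAD'' $k$-paths (Simple, Alternating, Local, Avoids-faults, Dispersed) and a subsampling/cleanness argument to relate the number of such paths to $|E(H^{(sp)})|$. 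Polynomial time then comes from replacing the exponential fault-set search with a $(2k-2)$-approximate length-bounded-cut subroutine, which only costs an $O(k)$ factor absorbed by the $\Oish_k$.

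Your hierarchy construction inverts the difficulty: size is an easy balancing calculation, but stretch is not established. The specific failure is this. For a pivot $p$ of $x$ that survives $F$, the emulator edge $(x,p)$ takes weight $\dist_{G\setminus F}(x,p)$, and you have \emph{no} control over this quantity from the fact that $\dist_G(x,p)$ was small --- a single fault on the unique $G$-shortest $x\leadsto p$ path can make $\dist_{G\setminus F}(x,p)$ arbitrarily large without ``destroying'' $p$ at all. So the charging argument (each blocked vertex costs $F$ at least $\sqrt f$ faults) is charging the wrong event: the relevant bad event is ``all $\Theta(\sqrt f)$ nearest pivots of $x$ are far in $G\setminus F$,'' and this can be caused by $O(1)$ faults shared across many vertices of $\pi$ whose pivot sets overlap. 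The fallback of walking along $\pi$ with bunch edges also fails in weighted graphs: the bunch radius $r_x$ may be smaller than the weight of the next edge on $\pi$, so consecutive vertices of $\pi$ need not be in each other's bunches. (In the paper's additive constructions this obstacle disappears precisely because the ``pivots'' are actual graph neighbors at distance~$1$, so a surviving neighbor is automatically at $G\setminus F$-distance~$1$; that shortcut is unavailable for multiplicative stretch with multi-hop pivots.) Without a mechanism to guarantee, for some $x$ on $\pi$, a pivot that is close \emph{in $G\setminus F$} and to which an emulator edge was actually added, the $(2k-1)$ stretch bound is not proved.
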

In the above theorem, $\Oish_k$ hides factors that are polylogarithmic in $n$, and also factors that are exponential in $k$.\footnote{When $k$ is super-constant, \cite{BP19, DR20, BDR21} already give an upper bound of $O(fn^{1+o(1)})$ for VFT spanners, which cannot be improved beyond $O(fn)$ even with emulators.  Thus, the size gaps are already subpolynomial except in the parameter regime where $k = O(1)$.}
We typically think of $f$ as being polynomial in $n$, and in this setting (and when $k$ is a constant at least $3$), our emulators improve polynomially on VFT spanners.

The algorithm we design to prove Theorem~\ref{thm:UB-main} starts from the basic greedy VFT spanner algorithm of~\cite{BDPW18} (and its polytime extension in~\cite{DR20}), where we consider edges in nondecreasing weight order and add an edge if there is a fault set that forces us to add it.  To take advantage of the power of emulators, though, we augment this with an extra ``path sampling'' step: intuitively, when we decide to add a spanner edge, we also flip a biased coin for every $k$-path that it completes to decide whether to also add an emulator edge between the endpoints of the path.
(By ``complete a path,'' we mean that all previous edges in the path were already in the spanner and this is the last edge in the path to be added; the edge is not necessarily the first or last one in the path.)
These extra emulator edges do not \emph{replace} the added spanner edge, i.e., we do not add the emulator edge \emph{instead} of the spanner edge.
Instead, they act to help protect \emph{future} graph edges in the ordering, making it less likely that we will need to add spanner edges downstream.

Our two main lemmas roughly form a counting argument over the $k$-paths of the graph.
They are 1) with high probability there are not too many $k$-paths in our final emulator, meaning that we probably don't sample too many emulator edges as we go, and 2) if the emulator has many spanner edges then it has many $k$-paths, so we can't have added too many spanner edges either.  Combining these with an appropriate parameter balance gives us Theorem~\ref{thm:UB-main}.

The technical details of this analysis get surprisingly tricky, and it turns out that we actually cannot consider \emph{all} $k$-paths in the algorithm and analysis outlined above, but only a carefully selected subset of them that we call ``SALAD'' paths. 
There are several entirely different technical challenges absorbed into this definition, but the main one is the following.
Focusing momentarily on $k=3$, the core of our argument is that for most added spanner edges $(u, v)$, there are \emph{many distinct node pairs} $(x, y)$ for which $(u, v)$ completes an $x \leadsto y$ $3$-path.
The natural counting arguments give the best bounds by specifically considering $3$-paths that use $(u, v)$ as their middle edge, i.e., have the form $(x, u, v, y)$.
This requires that $(u, v)$ is heavier than both $(x, u)$ and $(v, y)$, so that it is considered last in the greedy algorithm; this motivates a focus on ``middle-heavy'' $3$-paths rather than arbitrary $3$-paths.
In fact, the cost/benefit of sampling emulator edges associated to non-middle-heavy $3$-paths is not worth it, so it is important to \emph{only} consider middle-heavy $3$-paths in our algorithm.

Generalizing the middle-heavy property to larger $k$ is nontrivial.
We use two properties that we call ``alternating'' (which also appeared in \cite{BDR22}) and ``dispersed'' (which is new here); these give the AD in the SALAD acronym.
The SAL part of the acronym stands for \emph{simple, avoids faults, local}, and we overview and motivate these at the beginning of Section \ref{sec:general-k}.
The full technical details of SALAD paths are responsible for the $\exp(k)$ factors and the even/odd $k$ distinction (a similar even/odd distinction appears in \cite{BDR22}, for a similar technical reason).

We complement our emulator upper bound with a nearly matching lower bound, which is a relatively straightforward extension of the \emph{edge}-fault-tolerant lower bound for spanners from~\cite{BDPW18}.  The case of $k=2$ (stretch $3$) is slightly different, so we handle it separately.

\begin{restatable}{theorem}{LBsmall} \label{thm:LB-3}
For all positive integers $n, f$ with $f \leq n$, there exists an unweighted $n$-node graph with $\Omega(f^{1/2} n^{3/2})$ edges for which any $f$-VFT $3$-emulator must have at least $\Omega(f^{1/2} n^{3/2})$ edges.
\end{restatable}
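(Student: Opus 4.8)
The plan is to adapt the edge-fault-tolerant $3$-spanner lower bound of~\cite{BDPW18} to the vertex-fault, emulator setting. The base graph is a blow-up of a $C_4$-free bipartite graph. Fix a bipartite graph $G_0 = (A_0, B_0, E_0)$ with $|A_0| = |B_0| = N$, girth at least $6$, and $|E_0| = \Theta(N^{3/2})$ (for instance an incidence graph of a projective plane), and fix a blow-up parameter $t = \Theta(f)$. Replace each vertex $v \in A_0 \cup B_0$ by an independent set $\widehat v$ of $t$ vertices, and each edge $(a,b) \in E_0$ by a complete bipartite graph between $\widehat a$ and $\widehat b$; call the result $G$. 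Then $n := |V(G)| = \Theta(Nt)$ and $|E(G)| = \Theta(t^2 N^{3/2}) = \Theta(t^{1/2} n^{3/2}) = \Theta(f^{1/2} n^{3/2})$, which gives the claimed edge count. For each edge $(a_i, b_j)$ of $G$, with $a_i \in \widehat a$ and $b_j \in \widehat b$, let $F_{ij} := (\widehat a \setminus \{a_i\}) \cup (\widehat b \setminus \{b_j\})$; choosing the constant in $t = \Theta(f)$ so that $2(t-1) \le f$ ensures $|F_{ij}| \le f$. Since $G_0$ is $C_4$-free, in $G \setminus F_{ij}$ the vertices $a_i$ and $b_j$ have no common neighbor and every $a_i \leadsto b_j$ walk other than the direct edge has length at least $4$: any shorter detour would have to re-enter one of the two now-singleton bundles $\widehat a, \widehat b$, which would force a $C_4$ in $G_0$.

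The lower bound is then a charging argument in the spirit of the (non-faulty) $C_4$-free emulator counting. Let $H$ be any $f$-VFT $3$-emulator of $G$. For a fixed edge $(a_i, b_j)$, evaluate $H$ on the fault set $F_{ij}$: since $\dist_{G \setminus F_{ij}}(a_i, b_j) = 1$, the emulator must contain an $a_i \leadsto b_j$ walk of total updated weight at most $3$, hence of at most three hops, and the structure of $G \setminus F_{ij}$ leaves only two options --- (a) the direct emulator edge $(a_i,b_j) \in E(H)$, or (b) a two-hop walk $a_i \to p \to b_j$ in which one hop is a weight-$1$ edge into an adjacent bundle and the other is a weight-$2$ emulator edge. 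Charge $(a_i,b_j)$ to $(a_i,b_j)$ in case (a) and to the weight-$2$ emulator edge in case (b). If each edge of $H$ absorbs only $n^{o(1)}$ units of charge, summing over the $\Theta(f^{1/2} n^{3/2})$ edges of $G$ yields $|E(H)| = \Omega(f^{1/2} n^{3/2})$, up to the polylogarithmic slack in the statement; the $C_4$-freeness of $G_0$ is what limits the charge on a weight-$2$ emulator edge, since it pins down which base vertices the endpoints of such an edge project to.

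The technical heart --- and the reason the $k=2$ case is handled separately from the more uniform extension used for odd $k \ge 3$ --- is exactly this congestion bound. A naive choice of $F_{ij}$ pins down the base vertices of a rerouting emulator edge but leaves its bundle indices free, so a single weight-$2$ emulator edge could discharge an entire bundle's worth of critical edges at once, costing a factor of $f$ and only giving the far weaker $\Omega(f^{-1/2} n^{3/2})$. Closing the gap requires arranging that $G \setminus F_{ij}$ is $C_4$-free not merely at the level of base vertices but at the level of the individual bundle vertices a rerouting walk could use near $a_i$ and $b_j$ --- for instance by equipping each bundle with a sparse $C_4$-free internal structure, or by reshaping $F_{ij}$ so that it additionally separates the bundles of $a_i$ and $b_j$ from the rest of $G$ in a distinguishing way, all while keeping $|F_{ij}| \le f$ and $\dist_{G \setminus F_{ij}}(a_i, b_j) = 1$. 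Establishing that such a fault set exists --- simultaneously preserving the endpoint distance, destroying every short detour, and forcing $O(1)$-size common neighborhoods among the candidate rerouting vertices --- is the crux, and drives both the exact blow-up parameter and a stretch-$3$-specific definition of the fault sets.
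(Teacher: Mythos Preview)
Your construction is exactly the right one, and you have correctly identified where your argument breaks: with the fixed fault set $F_{ij} = (\widehat a \setminus \{a_i\}) \cup (\widehat b \setminus \{b_j\})$, a single weight-$2$ emulator edge can indeed absorb $\Theta(t)$ units of charge, and this cannot be repaired by tinkering with the internal structure of the bundles or by trying to make $G \setminus F_{ij}$ locally $C_4$-free. The directions you sketch in your last paragraph do not lead to a proof.

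The missing idea is that the fault set should depend on the emulator $H$, not just on the edge $(a_i,b_j)$. Say that an emulator edge $e' \in E(H)$ \emph{protects} $e = (a_i,b_j)$ if there is a $2$-path in $G$ between the endpoints of $e'$ that uses $e$. Take the blow-up parameter to be $t = \lfloor f/4 \rfloor$ rather than $t \approx f/2$. Now argue: if $e$ is protected by fewer than $f/4$ emulator edges, then let $F$ consist of the endpoints of all these protecting edges (at most $f/2$ vertices) together with $(\widehat a \cup \widehat b) \setminus \{a_i,b_j\}$ (at most $f/2$ vertices). With this $F$, every emulator edge that could conceivably yield a length-$\le 3$ detour has been destroyed by removing an endpoint, and the girth-$6$ property forces every remaining path to have length $\ge 5$; hence $e \in E(H)$. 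So every edge of $G$ missing from $H$ is protected by at least $f/4$ emulator edges. On the other side, $C_4$-freeness of $G_0$ means any emulator edge $e'$ at distance $2$ in $G_0$ lies over a unique $2$-path $s$--$x$--$t$, so $e'$ can protect only the $\le 2t \le f/2$ edges of $G$ incident to its two endpoints inside $\widehat x$. A straight double-count then gives $|E(H)| \ge \Omega(|E(G)|)$ with no logarithmic loss.
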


\begin{restatable}{theorem}{LBmain} \label{thm:LB-main}
Assuming the Erd\H{o}s girth conjecture, for all $k \geq 3$ and $f \leq n$ there is an unweighted $n$-node graph in which every $f$-VFT $(2k-1)$-emulator has at least $\Omega(k^{-1} f^{\frac12 - \frac{1}{2k}} n^{1+1/k})$ edges. 
\end{restatable}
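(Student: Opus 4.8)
The plan is to transfer the \emph{edge}-fault-tolerant spanner lower bound of Bodwin, Dinitz, Parter, and Williams~\cite{BDPW18} to the vertex-fault-tolerant emulator setting, observing that the weight-updating semantics of Definition~\ref{def:ftemsintro}, despite being the source of all our \emph{upper}-bound savings, is of no help to the emulator on their instances. Concretely, I would first recall that (under the Erd\H{o}s girth conjecture) \cite{BDPW18} produce, for every $k \ge 3$ and $f \le n$, an unweighted $n$-vertex graph $G$ with $\Theta(k^{-1} f^{\frac12 - \frac1{2k}} n^{1+1/k})$ edges together with a family $\mathcal{P} = \{(e_i, F_i)\}$ of that same cardinality, where the $e_i = (x_i, y_i)$ are pairwise distinct edges of $G$ and each $F_i$ is a fault set of size at most $f$ avoiding $x_i$ and $y_i$, with the property that in $G \setminus F_i$ the edge $e_i$ is the only route between $x_i$ and $y_i$ of total length at most $2k-1$ (in the strongest form one can even arrange that $\{x_i,y_i\}$ is its own connected component of $G \setminus F_i$). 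Since the construction of \cite{BDPW18} deletes edges rather than vertices, I would first convert it to a vertex-fault construction by the standard device of subdividing, with a private degree-two vertex, every edge that appears in some $F_i$, and faulting those subdivision vertices instead; the girth conjecture supplies base graphs of arbitrarily large girth, so this costs only constant factors in the parameters.

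With the instance and the family $\mathcal{P}$ in hand, the transfer argument is short. Let $H$ be any $f$-VFT $(2k-1)$-emulator of $G$, and fix $(e_i, F_i) \in \mathcal{P}$. Since $x_i, y_i \notin F_i$ and $\dist_{G \setminus F_i}(x_i, y_i) = 1$, Definition~\ref{def:ftemsintro} forces $\dist_{H \setminus F_i}(x_i, y_i) \le 2k-1$. Now take a shortest $x_i \leadsto y_i$ path in $H \setminus F_i$ and, for each of its edges $(a,b)$, splice in a shortest $a \leadsto b$ path of $G \setminus F_i$; since each emulator edge $(a,b)$ contributes exactly $\dist_{G \setminus F_i}(a,b)$ to the path length, the result is a walk in $G \setminus F_i$ from $x_i$ to $y_i$ using at most $2k-1$ edges of $G$. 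Hence there is an $x_i \leadsto y_i$ path of length at most $2k-1$ in $G \setminus F_i$, so by the defining property of $(e_i,F_i)$ this walk is forced to traverse $e_i$; together with the always-valid inequality $\dist_{H\setminus F_i}(u,v) \ge \dist_{G\setminus F_i}(u,v)$ (which forbids the emulator from cheaply relaying through a third vertex), I would conclude that the shortest $x_i\leadsto y_i$ path in $H \setminus F_i$ must be the single edge $(x_i,y_i)$, i.e.\ $(x_i,y_i) \in E(H)$. As the $e_i$ are distinct this gives $|E(H)| \ge |\mathcal{P}| = \Omega(k^{-1} f^{\frac12 - \frac1{2k}} n^{1+1/k})$.

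The step I expect to be delicate is exactly the point where emulators differ from spanners. A subgraph spanner cannot route short except through genuine edges of $G$, but an emulator edge $(a,b)$ with $a,b \notin F_i$ may carry a \emph{small} weight $\dist_{G \setminus F_i}(a,b)$ even when $a$ and $b$ are structurally far apart, and --- worst of all --- the shortest $a\leadsto b$ path realizing that weight could itself pass through $e_i$, so that a short chain of such ``shortcut'' emulator edges might simulate a cheap $x_i \leadsto y_i$ connection in $H \setminus F_i$ that never uses $(x_i,y_i)$ as an edge of $H$. Ruling this out is where one needs the strong form of the construction: $F_i$ should be chosen so that $G \setminus F_i$ leaves $x_i$ and $y_i$ with essentially no surviving structure besides $e_i$ itself (for instance, with $x_i$ and $y_i$ in a two-vertex component, so that every emulator edge out of $x_i$ other than $(x_i,y_i)$ has infinite weight after the faults). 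Verifying that the \cite{BDPW18} construction can be set up to guarantee this for all of $\mathcal{P}$ simultaneously, without inflating the fault budget beyond $f$ or shrinking $|\mathcal{P}|$ below the target, is the technical heart of the argument. It is also the reason $k=2$ must be treated separately (there the available stretch is too small to leave the required slack), which is why stretch $3$ is handled by the stand-alone construction of Theorem~\ref{thm:LB-3}, and it accounts for the $\Omega(k^{-1})$ loss relative to the upper bound of Theorem~\ref{thm:UB-main} in the odd case.
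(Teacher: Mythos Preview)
Your plan has a genuine gap at exactly the point you flag as delicate. You aim to force every edge $e_i=(x_i,y_i)$ in the family $\mathcal P$ to appear literally in $H$, and for that you need the ``strong form'' in which $\{x_i,y_i\}$ is its own connected component of $G\setminus F_i$. But this is impossible in the parameter range that matters. If $\{x_i,y_i\}$ can be isolated by $f$ vertex faults, then every neighbor of $x_i$ other than $y_i$ lies in $F_i$, so $\deg(x_i)\le f+1$; likewise for $y_i$. Thus every vertex appearing as an endpoint of some $e_i$ has degree at most $f+1$, and since the $e_i$ are distinct this forces $|\mathcal P|\le (f+1)n/2$. Yet you need $|\mathcal P|=\Omega\bigl(k^{-1}f^{\frac12-\frac1{2k}}n^{1+1/k}\bigr)$; already for $k=3$ and $f=O(1)$ this is $\Omega(n^{4/3})$, a contradiction. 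The \cite{BDPW18} construction does \emph{not} isolate edges in this sense---its fault set removes only the $t^2-1\approx f$ other copies of a single edge inside one bipartite gadget, leaving both endpoints with many surviving neighbors into other gadgets---and by the degree count above, no alternative construction can.

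The paper accordingly does \emph{not} try to show that every $e\in E'$ lies in $E_H$. Instead it runs a protection-counting argument on the same blown-up girth graph $G'$ (with blowup $t=\lceil\sqrt f\rceil$): say an emulator edge $e'=\{(s,i'),(t,j')\}$ \emph{protects} a graph edge $e=\{(u,i),(v,j)\}$ if the base graph $G$ has a simple path $s\leadsto u-v\leadsto t$ of length at most $k$. One shows (i) if $e$ has fewer than $f/3$ protectors then $e\in E_H$, by faulting all protector endpoints together with the $2(t-1)\approx 2\sqrt f$ other copies in $B_u\cup B_v$; and (ii) each emulator edge protects at most $kt^2=O(kf)$ graph edges, since girth $>2k$ forces a unique $s\leadsto t$ path of length $\le k$. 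Combining, $|E_H|\ge |E_H\cap E'|+\Omega(k^{-1})\,|E'\setminus E_H|=\Omega(|E'|/k)$. Your middle-paragraph splicing argument is in fact the correct start of step (i): the walk you produce shows some emulator edge on the $H$-path protects $e$, not that $(x_i,y_i)\in E(H)$. The missing idea is to turn that observation into a counting bound rather than an inclusion. (Separately, the edge-subdivision step is both unnecessary---the blown-up graph already admits vertex faults directly---and harmful, since it doubles all the relevant distances.)
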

This lower bound matches our upper bound for constant odd $k$, and is off by only an $f^{1/(2k)}$ factor for constant even $k$. 
An easy folklore observation implies that any $f$-regular input graph requires $\Omega(fn)$ edges for an $f$-VFT emulator, so our $+fn$ terms cannot be removed either.

\subsubsection{Additive Stretch}
Spanners and emulators are also studied in the context of \emph{additive} stretch: a \emph{$+k$-spanner/emulator} $H$ of an input graph $G$ is one that satisfies the distance inequality
$$\dist_G(u, v) \le \dist_H(u, v) \le \dist_G(u, v) + k$$
for all nodes $u, v$.
We have a complete understanding of the possibilities for additive emulators.
It is known that every unweighted input graph has a $+2$-emulator on $O(n^{3/2})$ edges \cite{aingworth1999fast} and a $+4$-emulator on $O(n^{4/3})$ edges \cite{dor2000all}.
These emulators are optimal, both in the sense that neither size bound can be improved at all, and in the sense that no $+c$-emulator can achieve $O(n^{4/3 - \eps})$ edges, even when $c$ is an arbitrarily large constant \cite{abboud20174}.
For spanners, our understanding lags only slightly behind: all graphs have $+2$-spanners on $O(n^{3/2})$ edges \cite{aingworth1999fast, knudsen2014additive}, $+4$-spanners on $\Oish(n^{7/5})$ edges \cite{chechik2013new}, and $+6$-spanners on $O(n^{4/3})$ edges \cite{baswana2010additive, knudsen2014additive,woodruff2010additive}.

Braunschvig, Chechik, Peleg, and Sealfon \cite{BGPS15} were the first to introduce fault-tolerance to additive spanners, via the natural extension of Definition \ref{def:ftemsintro}.

Unfortunately, it turns out that the price of fault-tolerance for additive spanners with fixed error is untenably high.
It is proved in \cite{BGPW17} that, for any fixed constant $c$, there are graphs on which an $f$-VFT $+c$ -spanner needs $n^{2 - \Omega(1/f)}$ edges.
In other words, tolerating \emph{one} additional fault costs $\poly(n)$ in spanner size, and there is no way to tolerate $\Omega(\log n)$ faults in subquadratic size.
Accordingly, constructions of VFT spanners of fixed size have to pay super-constant additive error of type $+O(f)$ \cite{BGPS15,bilo2015improved,parter2017vertex, BGPW17}.

We define VFT additive emulators with similar weight-updating behavior as in the multiplicative setting, with the same motivation.
We then show that these emulators actually avoid the undesirable size/fault-tolerance tradeoff suffered by VFT spanners.
We show the following extensions of the $+2$ and $+4$ emulators:

\begin{restatable}{theorem}{addtwo} \label{thm:additive-2}
For all $f \leq n$, every $n$-node unweighted graph $G=(V, E)$ admits an $f$-VFT $+2$-emulator $H$ with $|E(H)| = \Oish(f^{1/2}n^{3/2})$ edges.  There is also a randomized polynomial-time algorithm which computes such an emulator with high probability.
\end{restatable}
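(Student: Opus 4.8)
The plan is to construct the fault-tolerant analogue of the classical $+2$-emulator of \cite{aingworth1999fast}, using the weight-updating behavior of Definition~\ref{def:ftemsintro} to replace the BFS trees of that construction (which would be prohibitively expensive to make fault-tolerant) with ``star'' emulator edges that re-weight themselves automatically after faults. Set a degree threshold $\Delta = \Theta(\sqrt{fn})$, let $L$ be the set of vertices of $G$-degree below $\Delta$, and sample $S \subseteq V$ by including each vertex independently with probability $p = \min\{1,\Theta(\sqrt{f/n}\log n)\}$, so that $p\Delta = \Theta(f\log n)$. The emulator is $E(H) = \{e \in E : e \cap L \neq \emptyset\} \cup \{\{s,v\} : s \in S,\ v \in V\}$, i.e.\ all edges incident to low-degree vertices, plus a complete bipartite ``$S$ versus everything'' set of emulator edges.

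I would first record the two probabilistic facts. A Chernoff bound plus a union bound over the at most $n$ vertices of degree $\ge \Delta$ shows that, with high probability, every such vertex has more than $f$ neighbors in $S$ (its expected number of $S$-neighbors is at least $p\Delta = \Theta(f\log n)$); call this event $\mathcal E$. Crucially, $\mathcal E$ is a \emph{single} event that certifies correctness against all (exponentially many) fault sets simultaneously. A second Chernoff bound gives $|S| = \Oish(\sqrt{fn})$ with high probability, and then $|E(H)| \le n\Delta + |S|\cdot n = \Oish(f^{1/2} n^{3/2})$. The construction is plainly polynomial time; when $p = 1$ (i.e.\ $f = \Theta(n)$) one simply takes $S = V$, and the bound still holds since then $f^{1/2} n^{3/2} = \Theta(n^2)$.

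Conditioning on $\mathcal E$, I would then check the distance inequalities. Fix $F$ with $|F| \le f$ and $u,v \notin F$. If $u,v$ lie in different components of $G \setminus F$ both inequalities are vacuous, since every $H\setminus F$ edge between distinct $G\setminus F$-components has $w_F$-weight $\infty$; so assume $\dist_{G\setminus F}(u,v) < \infty$ and let $P$ be a shortest $u \leadsto v$ path in $G \setminus F$. If no vertex of $P$ has degree $\ge \Delta$, every vertex of $P$ is in $L$, so every edge of $P$ lies in $H \setminus F$ with $w_F$-weight $1$ and $\dist_{H\setminus F}(u,v) \le \dist_{G\setminus F}(u,v)$. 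Otherwise let $w$ be the first high-degree vertex on $P$; the edges of the prefix of $P$ from $u$ to $w$ are all incident to $L$ (their earlier endpoint is), so that prefix survives in $H \setminus F$ with total $w_F$-weight $\dist_{G\setminus F}(u,w)$. By $\mathcal E$ there is a neighbor $s \in S\setminus F$ of $w$, and the edges $\{w,s\}$ and $\{s,v\}$ both belong to $H\setminus F$ (they are among the $S\times V$ edges), with $w_F$-weights $\dist_{G\setminus F}(w,s)=1$ and $\dist_{G\setminus F}(s,v) \le 1 + \dist_{G\setminus F}(w,v)$. Since $w$ is on the shortest path $P$,
\[
\dist_{H\setminus F}(u,v) \;\le\; \dist_{G\setminus F}(u,w) + 1 + \bigl(1 + \dist_{G\setminus F}(w,v)\bigr) \;=\; \dist_{G\setminus F}(u,v) + 2,
\]
and the lower bound $\dist_{H\setminus F}(u,v) \ge \dist_{G\setminus F}(u,v)$ holds because each $H\setminus F$ edge has $w_F$-weight equal to the true distance between its endpoints in $G\setminus F$, so the triangle inequality applies.

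None of this is technically deep; the interesting point is where fault-tolerance and the emulator relaxation actually enter. Fault-tolerance forces the domination guarantee $\mathcal E$ --- each high-degree neighborhood must be hit \emph{more than $f$ times}, so that a pivot survives every fault set --- and this is exactly what pushes the threshold up to $\Delta \approx \sqrt{fn}$ and produces the $f^{1/2}$ overhead over the non-faulty $O(n^{3/2})$ bound. The emulator relaxation is essential only for the ``$s \leadsto v$'' leg: a fault-tolerant additive \emph{spanner} cannot even be subquadratic for fixed additive error \cite{BGPW17}, precisely because it would have to store actual post-fault paths from each surviving pivot to each target, whereas here the edge $\{s,v\}$ furnishes exactly that connection, at weight $\dist_{G\setminus F}(s,v)$, for free. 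So the ``main obstacle'' is really just identifying this construction and the right sampling rate; once the star edges and the $f$-fold domination are in place, the verification and the parameter balance are routine.
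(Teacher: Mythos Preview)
Your proof is correct and follows essentially the same template as the paper: threshold vertices by degree, keep all edges incident to low-degree vertices, and use random emulator edges to shortcut through the high-degree part of any post-fault shortest path. The implementations differ in a way worth noting: the paper samples each pair $\{u,v\}$ as an emulator edge independently with probability $\Theta(d\log n /n)$ and, for each fixed $F$ and each pair $u,v$, argues that with probability $\ge 1 - n^{-3f}$ some emulator edge lands between the surviving neighborhood of the first ``dense'' pivot and the last one, then union-bounds over the $n^{f+2}$ choices of $(F,u,v)$. You instead sample a vertex set $S$ with $p\Delta = \Theta(f\log n)$, add all $S\times V$ emulator edges, and certify correctness via the single event $\mathcal E$ that every high-degree vertex has $>f$ $S$-neighbors, which survives every fault set deterministically once it holds. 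Your version trades the paper's $n^f$ union bound for a union bound over $n$ vertices and needs only one pivot rather than two; the paper's version is more parallel to its $+4$ argument. Both give the same $\Oish(f^{1/2} n^{3/2})$ bound.
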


\begin{restatable}{theorem}{addfour} \label{thm:additive-4}
For all $f \leq n$, every $n$-node unweighted graph $G=(V, E)$ admits an $f$-VFT $+4$-emulator $H$ with $|E(H)| = \Oish\left(f^{1/3} n^{4/3}+nf \right)$ edges.  There is also a randomized polynomial-time algorithm which computes such an emulator with high probability.
\end{restatable}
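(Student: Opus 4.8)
The plan is to fault-tolerize the classical $+4$-emulator of Dor, Halperin, and Zwick~\cite{dor2000all} --- which clusters vertices around a single random dominating set of centers, keeping the edges of low-degree vertices, one edge from each high-degree vertex to a center, and all pairs of centers --- with one new ingredient: use $\Theta(f)$ \emph{independent} center sets of the \emph{original} size rather than a single oversampled one. This is exactly the point that keeps the dominant ``all pairs of centers'' term growing like $f$ and not $f^2$: oversampling a single center set $D$ so that every heavy vertex retains a center surviving any $f$ faults would inflate $|D|$ by a factor $f$ and hence $|D|^2$ by $f^2$, whereas $\Theta(f)$ independent copies of the unamplified structure cost only a factor $f$ in that term.

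Concretely, fix a degree threshold $d = \Theta(f^{1/3} n^{1/3})$ and call $v$ \emph{heavy} if $\deg_G(v) \ge d$ and \emph{light} otherwise. Sample $L := 2f+1$ independent center sets $D_1,\dots,D_L \subseteq V$, each obtained by including every vertex independently with probability $p = \Theta((\log n)/d)$. A routine Chernoff/hitting-set argument, union-bounded over the $n$ vertices and the $L \le 2n$ copies, shows that with high probability $|D_i| = \Oish(n/d)$ for every $i$ and every heavy $v$ has a neighbor in each $D_i$; fix $c_i(v)$ to be one such neighbor. Let $H$ be the unweighted graph containing: (a) every edge of $G$ incident to a light vertex (at most $O(nd)$ edges); (b) the edge $(v,c_i(v))$ for every heavy $v$ and every $i$ (at most $O(Ln) = O(fn)$ edges); and (c) an emulator edge $\{x,y\}$ for every $i$ and every pair $x,y \in D_i$ (at most $\Oish(L(n/d)^2) = \Oish(fn^2/d^2)$ edges). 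With $d = \Theta(f^{1/3}n^{1/3})$ the first and third bounds both equal $f^{1/3}n^{4/3}$, giving $|E(H)| = \Oish(f^{1/3}n^{4/3}+fn)$; since $f \le n$ we have $d \le n^{2/3}$, so the threshold is in range, and this is clearly a randomized polynomial-time construction.

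For correctness, fix a fault set $F$ with $|F|\le f$ and $u,v \notin F$, and let $P$ be a shortest $u\leadsto v$ path in $G\setminus F$. The lower inequality of Definition~\ref{def:ftemsintro} is immediate: under $w_F$, every edge of $H$ has weight at least the $G\setminus F$-distance of its endpoints ($1$ for edges of $G$, exactly $\dist_{G\setminus F}$ for type-(c) edges), so no path in $H\setminus F$ beats $\dist_{G\setminus F}$. For the upper inequality: if $P$ contains no heavy vertex, all its edges are type-(a) edges present in $H\setminus F$ with updated weight $1$, and we are done. Otherwise let $h_1,h_2$ be the first and last heavy vertices on $P$. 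Every vertex of $P$ strictly before $h_1$ (resp.\ after $h_2$) is light, so every prefix edge up to $h_1$ and every suffix edge from $h_2$ is a type-(a) edge surviving in $H\setminus F$ with weight $1$, contributing exactly $\dist_{G\setminus F}(u,h_1)+\dist_{G\setminus F}(h_2,v)$ (if $h_1=h_2$ these pieces already meet, yielding stretch $0$). For the middle, $h_1$ has $L=2f+1$ centers $c_i(h_1)$ of which at most $f$ lie in $F$, so at least $f+1$ indices $i$ have $c_i(h_1)\notin F$; likewise for $h_2$; by pigeonhole over $\{1,\dots,L\}$ there is a common index $i^\star$ with $c_{i^\star}(h_1),c_{i^\star}(h_2)\notin F$. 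Then $h_1 \to c_{i^\star}(h_1) \to c_{i^\star}(h_2) \to h_2$ is a path in $H\setminus F$ (type-(b), type-(c), type-(b) edges) of $w_F$-length $1 + \dist_{G\setminus F}(c_{i^\star}(h_1),c_{i^\star}(h_2)) + 1 \le 1 + (1 + \dist_{G\setminus F}(h_1,h_2) + 1) + 1 = \dist_{G\setminus F}(h_1,h_2)+4$, using that $c_{i^\star}(h_j)$ is a surviving neighbor of $h_j$. Summing the three pieces yields $\dist_{H\setminus F}(u,v) \le \dist_{G\setminus F}(u,v)+4$, which is Theorem~\ref{thm:additive-4} (the $+fn$ term being unavoidable by the folklore regularity bound).

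I expect the ``main obstacle'' to really be the \emph{idea} rather than a technical difficulty: recognizing that fault tolerance should be purchased via a linear number of independent center structures, and that the weight-updating semantics of Definition~\ref{def:ftemsintro} is precisely what makes this work --- it hands us $w_F(c_{i^\star}(h_1),c_{i^\star}(h_2)) \le \dist_{G\setminus F}(h_1,h_2)+2$ for free, whereas a genuine spanner would have to store an actual short $c_{i^\star}(h_1)\leadsto c_{i^\star}(h_2)$ path surviving $F$, the expensive object ruled out by the fault-tolerant spanner lower bounds. Everything else is routine: the Chernoff/union-bound for the center sets, the edge-count bookkeeping and parameter balance, and care for the degenerate cases ($h_1=u$, $h_2=v$, $h_1=h_2$, $u=v$, and extreme ranges of $d$ relative to $\log n$ and $n$), all of which fall to trivial bounds.
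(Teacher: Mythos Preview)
Your proof is correct and takes a genuinely different route from the paper's. The paper uses a single undifferentiated random-pairs step: it adds $d$ arbitrary neighbor edges for each dense node, then samples \emph{every} vertex pair as an emulator edge independently with probability $\Theta((d\log n)/n)$. Its stretch argument locates the first and last dense vertices $x_i,x_j$ along the surviving shortest path, observes that each retains $\ge d/2$ surviving $H$-neighbors (this is where the paper needs $d\ge 2f$, forcing a case split on $f$ versus $\sqrt{n}$), and shows that among the $\ge(d/2)^2$ candidate emulator edges between these two neighborhoods one was sampled with probability $\ge 1-n^{-3f}$; a union bound over all $\le n^{f+2}$ choices of $(F,u,v)$ then finishes.

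Your $2f{+}1$-independent-dominating-sets construction replaces this probabilistic union bound over fault sets with a deterministic pigeonhole: once the center sets are fixed (a single Chernoff/hitting-set argument independent of $F$), every fault set of size $\le f$ leaves a common surviving index $i^\star$, and the stretch bound follows with no further randomness. This buys you a cleaner analysis with no $f$-vs-$\sqrt{n}$ case split, and it makes the role of the weight-updating semantics especially transparent (the single type-(c) edge absorbs the whole middle of the path). The paper's approach, in exchange, is a one-line algorithm whose correctness proof, while requiring the exponential-in-$f$ union bound, is completely uniform across pairs. Both balance to the same $\Oish(f^{1/3}n^{4/3}+fn)$.
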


The main point of these results is that the price of fault-tolerance for additive emulators is a \emph{multiplicative factor depending only on $f$}, rather than the parameter $f$ appearing in the exponent of the dependence on $n$, as it does for VFT additive spanners.
Moreover, the $f$-factors we obtain are essentially tight by our previous lower bound.  Any $+2$-emulator is also a $3$-emulator and hence by Theorem~\ref{thm:LB-3} must have size at least $\Omega(f^{1/2} n^{3/2})$, and any $+4$-emulator is also a $5$-emulator and so by Theorem~\ref{thm:LB-main} must have size at least $\Omega(f^{1/3} n^{4/3})$.

\subsection{Outline}
We begin by giving an \emph{exponential-time} algorithm for Theorem \ref{thm:UB-main} in the special case $k=3$ in Section~\ref{sec:k=3}.  This introduces the main ideas and approach that we use to prove Theorem \ref{thm:UB-main} in general, but it also happens to avoid a few technical details that become necessary only when we move to larger $k$ (allowing us to replace the complicated SALAD paths with simpler ``middle-heavy fault-avoiding'' paths).
We then prove the edge bound for Theorem~\ref{thm:UB-main} in its full generality: in Section~\ref{sec:general-k} we design a (still exponential-time) algorithm which proves \emph{existence} of sparse fault-tolerant emulators for all $k$.
Finally, in Section~\ref{sec:polytime} we show how to use ideas from~\cite{DR20} to make the algorithm polynomial-time without significant loss in emulator sparsity.  We then prove our lower bounds (Theorems \ref{thm:LB-3} and \ref{thm:LB-main}) in Section~\ref{sec:LB}, and we conclude with our results on additive spanners in Section~\ref{sec:additive}.

\section{Warmup: $k=3$ (Stretch $5$)} \label{sec:k=3}

We will warm up by proving the following special case of Theorem \ref{thm:UB-main}:
\begin{theorem} [Theorem \ref{thm:UB-main}, $k=3$]
For all $f \le n$, every $n$-node weighted graph $G = (V, E, w)$ has an $f$-VFT $5$-emulator $H$ with
$|E(H)| \le \Oish\left(f^{1/3} n^{4/3}\right) + O(fn)$.
\end{theorem}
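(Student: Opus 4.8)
The plan is to run the exponential-time greedy vertex-fault-tolerant spanner algorithm of \cite{BDPW18}, augmented with a randomized \emph{path-sampling} step. We process the edges of $G$ in nondecreasing weight order, maintaining an emulator $H$ that consists of a set $S$ of \emph{spanner edges} (edges of $G$) and a set $M$ of \emph{emulator edges} (arbitrary pairs, whose weight updates with the fault set as in Definition~\ref{def:ftemsintro}). When the current edge $e=(u,v)$ is processed, we add $e$ to $S$ if there is an admissible fault set $F$ (with $|F|\le f$ and $u,v\notin F$) such that $\dist_{H\setminus F}(u,v)>5\,w(u,v)$, i.e., if some fault set currently ``forces'' $e$; fix one such witness $F=F(u,v)$. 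Then we path-sample: for every \emph{middle-heavy fault-avoiding} $3$-path $(x,u,v,y)$ completed by $e$ — meaning $(x,u),(v,y)\in S$ were added before $e$ (so their weights are at most $w(u,v)$) and $x,y\notin F(u,v)$ — we independently with probability $p$ add the emulator edge $(x,y)$ to $M$. Here $p$ is a parameter to be optimized at the end. The emulator edges do not replace the spanner edge $e$; rather, an emulator edge $(x,y)$ later picks up weight $\dist_{G\setminus F}(x,y)$, which is at most $3\,w(u',v')$ whenever $(x,y)$ sits ``between'' the endpoints of a future edge $(u',v')$, so it can prevent that edge from being forced.

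Correctness is the easy direction. Emulator edges only decrease distances in every $H\setminus F$, so $S$ by itself is still an $f$-VFT $5$-spanner by the standard greedy argument of \cite{BDPW18}, which gives the upper distance inequality for $H$; and every edge of $H$, in $S$ or $M$, carries weight at least the true $G\setminus F$ distance between its endpoints, so the lower inequality $\dist_{G\setminus F}(u,v)\le\dist_{H\setminus F}(u,v)$ holds automatically. Hence $H$ is an $f$-VFT $5$-emulator and all the content is in the size bound. Call a spanner edge $(u,v)$ \emph{heavy} if both $u$ and $v$ already had at least $2f$ spanner-neighbors via previously-processed edges when $(u,v)$ was added, and \emph{light} otherwise; a routine counting argument bounds the number of light edges by $O(fn)$, so it remains to bound the number $h$ of heavy edges and the size of $M$.

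The key structural point — a girth-like consequence of stretch $5$ — is that when a heavy edge $(u,v)$ is added with witness $F$, no emulator edge can join $N_u:=\{x:(x,u)\in S,\ x\notin F\}$ to $N_v:=\{y:(y,v)\in S,\ y\notin F\}$: otherwise $u\to x\to y\to v$ would be a path of length at most $w(u,v)+3\,w(u,v)+w(u,v)=5\,w(u,v)$ in $H\setminus F$, contradicting that $(u,v)$ was forced by $F$. Since $|N_u|,|N_v|\ge f$ and $N_u\cap N_v=\emptyset$ (the same stretch argument), this yields at least $f^2$ pairs in $N_u\times N_v$ that are absent from $M$ at that moment, each of which is a middle-heavy fault-avoiding $3$-path completed by $(u,v)$ and hence independently sampled into $M$ with probability $p$. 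Moreover, the ``fresh'' pair-sets of different heavy edges are pairwise disjoint: a pair already placed in $M$ by any earlier step cannot be absent from $M$ when a later heavy edge is processed. A Chernoff bound over these disjoint batches shows that with high probability $|M|\gtrsim h\,p\,f^2$, i.e., $h\lesssim |M|/(p f^2)$, and hence $|S|\le h+O(fn)\lesssim |M|/(p f^2)+O(fn)$.

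The remaining — and, I expect, hardest — step is an upper bound on $|M|$. Since we only ever sample middle-heavy fault-avoiding $3$-paths, $|M|$ is at most the number of such paths in the final $H$, and in expectation at most $p$ times that number; so we need to bound the number of middle-heavy fault-avoiding $3$-paths. The delicate part is that this count must be shown to be dominated by the number of middle-heavy $3$-paths \emph{inside $S$ alone} (the contribution of emulator edges being lower order precisely because $|M|$ is small — a self-referential loop that has to be closed carefully), and then one must show the greedy FT spanner $S$ cannot contain too many middle-heavy $3$-paths. The latter needs a packing/counting argument leveraging the girth-like structure of the greedy construction together with a fault-set-averaging trick in the spirit of \cite{BDPW18}; this is exactly the technical core that the general-$k$ argument later abstracts into ``SALAD'' paths, and that here is carried by the ``middle-heavy fault-avoiding'' restriction. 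Once this bound on the number of $3$-paths is in hand, the two inequalities $h\lesssim |M|/(p f^2)$ and $|M|\lesssim p\cdot(\#\,\text{$3$-paths})$ together with $|S|\le h+O(fn)$ leave a single free parameter $p$; optimizing it balances $|S|$ against $|M|$ and yields $|E(H)|=|S|+|M|\le \Oish\!\left(f^{1/3}n^{4/3}\right)+O(fn)$.
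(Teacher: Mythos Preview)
Your algorithm and correctness argument match the paper's, and your structural observation---that when a heavy edge $(u,v)$ is added with witness $F$, all $\ge f^2$ pairs in $N_u\times N_v$ must be absent from $M$---is correct and is essentially the content of the paper's ``countermass'' lemma. But the inequality you extract from it, $h\lesssim |M|/(pf^2)$, is too weak to close the argument, and this is a genuine gap rather than a missing detail.

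Combine your two inequalities: $h\lesssim |M|/(pf^2)$ and $|M|\lesssim p\cdot\mu$ (where $\mu$ is the number of middle-heavy fault-avoiding $3$-paths) give $h\lesssim \mu/f^2$. Now suppose you had the tightest possible bound on $\mu$. In the intended regime the final spanner $S$ has average degree $\Theta(f^{1/3}n^{1/3})$, and each spanner edge sits in at most $\delta^2$ such $3$-paths, so $\mu=\Theta(|S|\cdot\delta^2)=\Theta(fn^2)$; indeed this is exactly what the paper establishes (Lemma~\ref{lem:mfpathcount}). Plugging in, your chain yields only $h\lesssim n^2/f$, which matches the target $f^{1/3}n^{4/3}$ only at $f=\sqrt{n}$ and is strictly worse for all smaller $f$. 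No sharper bound on $\mu$ is available (it is tight for the intended construction), so the slack is in your first inequality, not the second.

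What the paper does instead is relate $|S|$ to $\mu$ much more tightly: it passes to a random induced subgraph of $H^{(sp)}$ keeping each node with probability $\Theta(1/\delta)$, argues that surviving edges are ``clean'' (their fault sets are killed) with constant probability, and then counts middle-heavy $3$-paths in the subsampled graph to obtain $|S|\le O(\mu^{1/3}n^{2/3}+nf)$ (Lemma~\ref{lem:3spanedge}). This cube-root dependence, not a linear one, is what makes the optimization land at $f^{1/3}n^{4/3}$. Your heavy/light decomposition and the ``$\ge f^2$ absent pairs'' observation are good intuition for why emulator edges help, but they do not substitute for this subsampling step; you need to replace your bound on $h$ with an argument of this type.
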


Our exponential-time algorithm for $5$-emulators implementing this theorem is given in Algorithm~\ref{alg:k=3}.
We incrementally build an emulator $H$ by starting with an empty graph and adding edges.
We designate our added edges as \emph{spanner edges} (which are always contained in the input graph) and \emph{emulator edges} (which are not generally in the input graph).
We then let $H^{(sp)}$ be the subgraph of $H$ containing only its spanner edges, and let $H^{(em)}$ be the subgraph of $H$ containing only its emulator edges.

The algorithm is defined with respect to a parameter $d$.
Intuitively we can think of $d$ as (roughly) the desired average node degree in our final emulator: we will set $d= f^{1/3}n^{1/3}$.

\FloatBarrier
\begin{algorithm} [h!]
\textbf{Input:} Graph $G = (V, E, w)$, positive integer $f$\;
~\\

Let $H \gets (V, \emptyset, w)$ be the initially-empty emulator\;
\ForEach{edge $(u, v) \in E$ in order of nondecreasing weight $w(u, v)$}{
    \If{there is $F \subseteq V \setminus \{u, v\}$ of size $|F| \le f$ with $\dist_{H \setminus F}(u, v) > 5 \cdot w(u, v)$}{
        Add $(u, v)$ as a spanner edge to $H$\;
        \ForEach{$s \in N_{H^{(sp)} \setminus F}(u)$ and $t \in N_{H^{(sp)} \setminus F}(v)$}{
           Add $(s, t)$ as an emulator edge with probability $d^{-2}$\;
        }
    }
}
\Return{$H$};
\caption{Algorithm for $f$-VFT $5$-emulators}
\label{alg:k=3}
\end{algorithm}
\FloatBarrier

We begin by proving correctness.

\begin{lemma} \label{lem:k=3-correctness}
The graph $H$ returned by Algorithm~\ref{alg:k=3} is an $f$-VFT $5$-emulator of $G$.
\end{lemma}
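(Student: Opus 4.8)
The plan is to verify the two inequalities of Definition~\ref{def:ftemsintro} directly, treating the lower bound as essentially automatic and spending the effort on the upper bound. Fix any $F \subseteq V$ with $|F| \le f$ and any $u,v \in V \setminus F$. For the lower bound $\dist_{G \setminus F}(u,v) \le \dist_{H \setminus F}(u,v)$: under the weight function $w_F$ every edge $(a,b)$ of $H$ with $a,b \notin F$ has weight $w_F(a,b) = \dist_{G\setminus F}(a,b)$, which is realized by an $a \leadsto b$ walk in $G \setminus F$; hence any $u \leadsto v$ path in $H \setminus F$ expands edge-by-edge into a $u \leadsto v$ walk in $G \setminus F$ of the same total length, so $\dist_{H\setminus F}(u,v) \ge \dist_{G\setminus F}(u,v)$. (In particular, if $\dist_{G\setminus F}(u,v) = \infty$ then also $\dist_{H\setminus F}(u,v) = \infty$ and both inequalities hold trivially, so we may assume $\dist_{G\setminus F}(u,v) < \infty$ from now on.)

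For the upper bound it suffices to prove the edge version: for every edge $(u,v) \in E$ with $u,v \notin F$ we have $\dist_{H \setminus F}(u,v) \le 5\,w(u,v)$. Granting this, for arbitrary $u,v \notin F$ take a shortest $u \leadsto v$ path $p_0, p_1, \dots, p_\ell$ in $G \setminus F$; each $p_i \notin F$, so applying the edge bound to each $(p_{i-1},p_i)$ and summing via the triangle inequality in $H \setminus F$ gives $\dist_{H\setminus F}(u,v) \le 5\sum_{i=1}^{\ell} w(p_{i-1},p_i) = 5\,\dist_{G\setminus F}(u,v)$, as required.

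To prove the edge version, consider the iteration of the main loop of Algorithm~\ref{alg:k=3} in which $(u,v)$ is processed, and let $H'$ be the emulator at the start of that iteration. If $(u,v)$ was added as a spanner edge, then $(u,v) \in E(H)$ with $u,v \notin F$, so $\dist_{H\setminus F}(u,v) \le w_F(u,v) = \dist_{G\setminus F}(u,v) \le w(u,v)$ and we are done. Otherwise the ``if'' test failed; since $F$ is a legal candidate for that test ($F \subseteq V \setminus \{u,v\}$ and $|F| \le f$), the failure means $\dist_{H'\setminus F}(u,v) \le 5\,w(u,v)$. Finally, $H'$ is a subgraph of the final emulator $H$ (the algorithm only ever adds edges), and the reweighting $w_F$ depends only on $G$ and $F$, not on which edges of $H$ are present; adding edges therefore cannot increase distances, so $\dist_{H\setminus F}(u,v) \le \dist_{H'\setminus F}(u,v) \le 5\,w(u,v)$.

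I do not expect a serious obstacle here: correctness holds for every outcome of the random coin flips, and in fact already holds for the spanner subgraph $H^{(sp)}$ alone — the sampled emulator edges are irrelevant to this lemma and are needed only for the later size analysis. The only two points that require care are (i) correctly invoking the weight-updating semantics of Definition~\ref{def:ftemsintro} so that emulator edges are handled like genuine $G \setminus F$ distances, and (ii) the monotonicity observation that passing from $H'$ to the final $H$ can only shrink distances, which is precisely what lets us transfer the guarantee from the ``if'' test (evaluated at the time $(u,v)$ was processed) to the final emulator.
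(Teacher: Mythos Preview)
Your proof is correct and follows essentially the same approach as the paper: reduce the upper bound to the edge case, then split on whether the \textbf{if} test passed or failed when $(u,v)$ was processed. You are actually more careful than the paper in two places: you explicitly verify the lower inequality of Definition~\ref{def:ftemsintro}, and you spell out the monotonicity step (passing from $H'$ at processing time to the final $H$ only adds edges and hence cannot increase $w_F$-distances), which the paper leaves implicit.
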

\begin{proof}
It is easy to see (and essentially standard) that we just need to prove that $\dist_{H \setminus F}(u,v) \leq 5 \cdot w(u,v)$ for each edge $(u, v) \in E$ and possible fault set $F$:
by considering shortest paths in $G \setminus F$, this suffices to imply that $\dist_{H \setminus F}(x,y) \leq 5 \cdot \dist_{G \setminus F}(x,y)$ for all $x,y,F$ with $x,y\not\in F$, and hence implies that $H$ is an $f$-VFT $5$-emulator of $G$.

So let $(u,v) \in E$, and let $F \subseteq V \setminus \{u,v\}$ with $|F| \leq f$.  If $(u,v) \in E(H)$, then this is trivially true since then $\dist_{H \setminus F}(u,v)  \leq w(u,v) = \dist_{G \setminus F}(u,v)$.  Otherwise, Algorithm~\ref{alg:k=3} did not add $(u,v)$ to $H$.  By the condition of the if statement, this implies that $\dist_{H \setminus F}(u,v) \leq 5 \cdot w(u,v)$ as claimed.
\end{proof}

We now move to the more difficult (and interesting) task of proving sparsity.
We assume for convenience that all edge weights in the input graph $G$ are unique, so that we may unambiguously refer to \emph{the heaviest} edge among a set of edges.
If not, the following argument still goes through if we break ties between edge weights by the order in which the edges are considered by the algorithm.
We need to bound the number of spanner edges \emph{and} the number of emulator edges in the construction; our strategy is to count the number of instances of a particular structure in $H^{(sp)}$ called \emph{middle-heavy fault-avoiding $3$-paths}, and then we will use this counting in two different ways to bound the number of spanner edges in $H$, and then the number of emulator edges in $H$.

\subsection{Sparsity Analysis}
We start with the definitions of the paths that we care about, and then prove some of their properties and begin to count them.

\begin{definition} [Middle-Heavy $3$-Paths]
A $3$-path $\pi$ with node sequence $(s, u, v, t)$ is \emph{middle-heavy} if its middle edge is its heaviest one; that is, $w(u, v) > w(s, u)$ and $w(u, v) > w(v, t)$.
\end{definition}
When the edge $(u,v)$ is added to a middle-heavy path $\pi$ in $H^{(sp)}$, we say that $\pi$ is \textit{completed by} $(u,v)$ (i.e.~after adding $(u,v)$ $\pi$ exists in $H^{(sp)}$).

For every edge $(u,v)$ added by the algorithm, there must exist some set $F_{(u,v)}$ with $|F_{(u,v)}| \leq f$ such that $\dist_{H \setminus F_{(u, v)}}(u, v) > 5 \cdot w(u, v)$ (or else the algorithm would not have added $(u,v)$).
The algorithm uses this set in order to determine the neighborhoods $N_{H^{(sp)} \setminus F}(u),N_{H^{(sp)} \setminus F}(v)$ from which it samples its emulator edges.
If there are multiple such sets then the choice is arbitrary.
In the following, $F_{(u, v)}$ refers to the specific set selected by the algorithm when considering the edge $(u, v)$.

\begin{definition} [Fault-Avoiding Paths]
A path $\pi$ in $H^{(sp)}$ with heaviest edge $(u, v)$ is \emph{fault-avoiding} if $\pi \cap F_{(u, v)} = \emptyset$.
\end{definition}

We first prove an auxiliary counting lemma.
Let $C_{(s, t)}$ count the number of middle-heavy fault-avoiding $3$-paths from $s$ to $t$ in $H^{(sp)}$ at a given moment in the algorithm.
Whenever we choose to add a spanner edge $(u, v)$, we define the set
$$\Psi(u, v) := \left\{ (s, t) \in V \times V \ \mid \ (u, v) \text{ completes a middle-heavy fault-avoiding $3$-path from $s$ to $t$} \right\}.$$
The following lemma gives a certain kind of control on the values that $C_{(s, t)}$ can reach:

\begin{restatable}{lemma}{countermass} \label{lem:countermass}
With high probability, whenever we add a new spanner edge $(u, v)$ in our algorithm, we have
$$\sum \limits_{(s, t) \in \Psi(u, v)} C_{(s, t)} \le \Oish(fd^2)$$
where the values $C_{(s, t)}$ are defined just \emph{before} $(u, v)$ is added to $H$.
\end{restatable}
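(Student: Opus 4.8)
The plan is to bound $\sum_{(s,t) \in \Psi(u,v)} C_{(s,t)}$ by separately bounding (a) the size of $\Psi(u,v)$, and (b) the value of each individual $C_{(s,t)}$ for $(s,t) \in \Psi(u,v)$. For (a): when $(u,v)$ is added, it completes a middle-heavy fault-avoiding $3$-path from $s$ to $t$ only if this path has the form $(s, u, v, t)$ with $(u,v)$ as its middle (heaviest) edge, or one of $(s, u, v, t)$-type configurations where $(u,v)$ sits in the middle — actually, since the path is middle-heavy and $(u,v)$ is being added last (so it is the heaviest edge of the completed path among those present), $(u,v)$ must be the middle edge, so $s \in N_{H^{(sp)}}(u)$ and $t \in N_{H^{(sp)}}(v)$ at the moment of addition. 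Moreover, fault-avoidance forces $s, t \notin F_{(u,v)}$, so in fact $s \in N_{H^{(sp)} \setminus F_{(u,v)}}(u)$ and $t \in N_{H^{(sp)} \setminus F_{(u,v)}}(v)$. Hence $|\Psi(u,v)| \le |N_{H^{(sp)} \setminus F_{(u,v)}}(u)| \cdot |N_{H^{(sp)} \setminus F_{(u,v)}}(v)|$.

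The heart of the argument is the degree bound: I claim that with high probability, every vertex $w$ has $|N_{H^{(sp)} \setminus F}(w)| = \Oish(d)$ for every fault set $F$ of size $\le f$, or at least that the relevant neighborhoods $N_{H^{(sp)} \setminus F_{(u,v)}}(u)$ encountered by the algorithm have size $\Oish(d)$. The reason to expect this: whenever a spanner edge $(u,v)$ is added and $u$ already has many surviving neighbors in $H^{(sp)} \setminus F_{(u,v)}$, the algorithm tries to add an emulator edge $(s,t)$ for each pair $s \in N(u), t \in N(v)$ with probability $d^{-2}$; if $|N(u)| \cdot |N(v)|$ is large relative to $d^2$ then with good probability some emulator edge $(s,t)$ gets added, and such an emulator edge should give a short path that would have prevented $(u,v)$ from needing to be added in the first place (contradiction with $\dist_{H \setminus F_{(u,v)}}(u,v) > 5 w(u,v)$). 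Making this precise: if $s \in N_{H^{(sp)} \setminus F_{(u,v)}}(u)$ and $t \in N_{H^{(sp)} \setminus F_{(u,v)}}(v)$ and the emulator edge $(s,t)$ is present in $H \setminus F_{(u,v)}$, then there is an $s$-$t$ path in $H \setminus F_{(u,v)}$ of the appropriate length, and appending the surviving spanner edges $(u,s)$ and $(t,v)$ (both lighter than $(u,v)$, so their weight is at most $w(u,v)$, and their weight in $H \setminus F$ is at most their weight in $G$) yields $\dist_{H \setminus F_{(u,v)}}(u,v) \le 5 w(u,v)$ — contradiction. So at the moment $(u,v)$ is considered, there is no such emulator edge; but the algorithm got many independent $d^{-2}$-probability chances to add one, so by a union/Chernoff bound over all $O(m)$ edges considered and all vertices, the "sampling-source" neighborhoods have size $\Oish(d)$ with high probability. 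This gives $|\Psi(u,v)| = \Oish(d^2)$.

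For (b), I would bound $C_{(s,t)} \le \Oish(f)$ for each relevant $(s,t)$: a middle-heavy fault-avoiding $3$-path from $s$ to $t$ is determined by its midpoint pair (the two internal vertices), but fault-avoidance ties it to the fault set $F_{(u',v')}$ of its own heaviest edge. The key observation is that the middle-heavy and fault-avoiding conditions, together with the uniqueness of edge weights, severely restrict how many such paths can coexist: roughly, each fault set has size $\le f$, and a counting argument (essentially the one underlying the $f^{1-1/k}$ VFT-spanner bounds, specialized to $k=3$) shows at most $\Oish(f)$ middle-heavy fault-avoiding $3$-paths between any fixed pair. Combining (a) and (b): $\sum_{(s,t) \in \Psi(u,v)} C_{(s,t)} \le |\Psi(u,v)| \cdot \max_{(s,t) \in \Psi(u,v)} C_{(s,t)} = \Oish(d^2) \cdot \Oish(f) = \Oish(fd^2)$, as desired.

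The main obstacle I anticipate is establishing the $\Oish(d)$ bound on the sampling-source neighborhoods uniformly over all fault sets $F_{(u,v)}$ that the algorithm might select: the fault set is chosen adversarially/arbitrarily by the algorithm, so one cannot simply fix $F$ in advance and union-bound — one has to argue that whatever $F_{(u,v)}$ is used, the conditional probability of having failed to sample a blocking emulator edge despite a large neighborhood is tiny, and then union-bound over edges (not over all $\binom{n}{\le f}$ fault sets). The cleanest route is probably to argue per-edge: condition on the history up to when $(u,v)$ is considered, observe that if $|N_{H^{(sp)} \setminus F_{(u,v)}}(u)| \cdot |N_{H^{(sp)} \setminus F_{(u,v)}}(v)| \ge C d^2 \log n$ then regardless of which valid $F_{(u,v)}$ is chosen, the probability no blocking emulator edge was sampled among those pairs is $\le n^{-C'}$, then union-bound over the $O(n^2)$ edges. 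Handling the dependence between the sampling coin flips and the evolving graph $H^{(sp)}$ correctly (the neighborhoods depend on past flips) is the delicate bookkeeping step, and is likely where the bulk of the care goes.
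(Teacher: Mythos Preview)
Your decomposition into (a) $|\Psi(u,v)| = \Oish(d^2)$ and (b) $\max_{(s,t)} C_{(s,t)} = \Oish(f)$ does not go through, and in fact neither piece holds in general.

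For (a), the argument has a temporal confusion. The coin flips you invoke (``the algorithm tries to add an emulator edge $(s,t)$ for each pair $s \in N(u), t \in N(v)$'') are the flips that occur \emph{after} $(u,v)$ is added, so they cannot retroactively block $(u,v)$. The coin flips that occurred \emph{before} $(u,v)$ was considered were triggered by earlier spanner edges $(x,y)$, using their own fault sets $F_{(x,y)}$ and their own neighborhoods; there is no reason those flips cover the pairs in $N_{H^{(sp)} \setminus F_{(u,v)}}(u) \times N_{H^{(sp)} \setminus F_{(u,v)}}(v)$. Indeed the paper explicitly treats the case $|\Psi(u,v)| > d^2$ as a live case in the very next lemma, so $|\Psi(u,v)| = \Oish(d^2)$ is not something you can establish here. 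For (b), the hand-wave to VFT-spanner counting does not yield a per-pair bound $C_{(s,t)} = \Oish(f)$; the paper only ever extracts this bound for a \emph{fraction} of the pairs, via Markov on the sum you are trying to prove.

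The correct route is to avoid the decomposition entirely and read the sum $\sum_{(s,t)\in\Psi(u,v,F)} C_{(s,t)}$ as exactly the number of past independent $d^{-2}$ coin flips whose success would have placed an emulator edge $(s,t)$ with $s,t \notin F$, $(u,s),(v,t) \in H^{(sp)}$; any such edge gives a $5$-stretch $u\leadsto v$ path in $H\setminus F$, so if the sum exceeds $c f d^2 \log n$ then $\Pr[\text{no blocking edge}] \le (1-d^{-2})^{cfd^2\log n} \le n^{-cf}$. Contrary to your last paragraph, you \emph{do} then union-bound over all $\le n^f$ fault sets $F$ (and over all edges $(u,v)$); this is precisely where the factor $f$ in the bound comes from, and it leaves high probability. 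Finally, if $F_{(u,v)}$ were mass-avoiding it would be ``good'', contradicting that $(u,v)$ was added.
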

\begin{proof}[Proof Sketch.]
We defer the full proof to Appendix~\ref{app:proofs}, since the details are technical and do not provide much additional insight.  Intuitively, this lemma is true because the counter value $C_{(s,t)}$ corresponds to the number of different times we flipped a coin to decide whether or not to add $(s,t)$ as an edge (since $C_{(s,t)}$ is the number of middle-heavy $3$-paths between $s$ and $t$, and for each such path we flip such a coin).  Since each coin has bias $d^{-2}$ by the definition of the algorithm, if $\sum_{(s, t) \in \Psi(u, v)} C_{(s, t)}$ were much larger than $d^2$ then with high probability there would already be an emulator edge $(s,t)$ where $(s,t) \in \Psi(u,v)$.  And if such an edge existed, the path $u-s-t-v$ would have stretch at most $5$, and hence we would not have added $(u,v)$.  

Making this formal requires union bounding over all possible fault sets $F$ in the definition of fault-avoiding rather than just considering $F_{(u,v)}$, which also causes the extra factor of $f$ in the lemma statement.  This introduces significant extra notation but is a straightforward calculation, so we defer it to Appendix~\ref{app:proofs}. 
\end{proof}

We can now use Lemma~\ref{lem:countermass} to bound the number of middle-heavy fault-avoiding 3-paths.

\begin{lemma} \label{lem:mfpathcount}
With high probability, there are $d^2 |E(H^{(sp)})| + \Oish\left(fn^2 \right)$ total middle-heavy fault-avoiding $3$-paths in the final graph $H^{(sp)}$.
\end{lemma}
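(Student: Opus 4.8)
The plan is to write the quantity $P$ (the number of middle-heavy fault-avoiding $3$-paths in the final $H^{(sp)}$) as a sum over added spanner edges, and then bound that sum using Lemma~\ref{lem:countermass} together with a Cauchy--Schwarz step. First I would record the structural fact that a middle-heavy $3$-path $(s,u,v,t)$ is completed precisely by its middle edge $(u,v)$: since $(u,v)$ is the heaviest of the three edges and the algorithm processes edges in nondecreasing weight order, $(u,v)$ is the last of the three to be considered, so if the whole path survives into the final $H^{(sp)}$ then all three of its edges are present exactly when $(u,v)$ is added. Conversely, when $(u,v)$ is added, the middle-heavy fault-avoiding $3$-paths it completes are exactly those of the form $(s,u,v,t)$ or $(s,v,u,t)$ with $s,t \notin F_{(u,v)}$ and the two outer edges already in $H^{(sp)}$ --- i.e.\ exactly the pairs $(s,t) \in \Psi(u,v)$, each carrying multiplicity at most $2$ (the two orientations). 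Hence $P = \sum_{(u,v)\text{ added}} \sum_{(s,t)\in\Psi(u,v)} N_{(s,t)}^{(u,v)}$, where $N_{(s,t)}^{(u,v)}\in\{1,2\}$ counts the middle-heavy fault-avoiding $3$-paths from $s$ to $t$ completed by $(u,v)$.

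Next I would track the counter $C_{(s,t)}$ across the run. For a fixed pair $(s,t)$, write $0 = c_0, c_1, \dots, c_m = C_{(s,t)}^{\text{final}}$ for its successive values, so $c_j = c_{j-1} + N_j$ with each $N_j \in \{1,2\}$. Telescoping $c_j^2 - c_{j-1}^2 = 2c_{j-1}N_j + N_j^2$ gives $(C_{(s,t)}^{\text{final}})^2 \le 2\sum_j c_{j-1}N_j + 4 m_{(s,t)}$, where $m_{(s,t)}$ is the number of added edges that complete a path to $(s,t)$. Summing over all pairs and regrouping each increment under the edge responsible for it, $\sum_{(s,t)}\sum_j c_{j-1}N_j$ is exactly $\sum_{(u,v)\text{ added}}\sum_{(s,t)\in\Psi(u,v)} C_{(s,t)}^{\text{before}} N_{(s,t)}^{(u,v)} \le 2\sum_{(u,v)\text{ added}}\sum_{(s,t)\in\Psi(u,v)} C_{(s,t)}^{\text{before}} \le \Oish(fd^2)\cdot|E(H^{(sp)})|$ by Lemma~\ref{lem:countermass} (which holds, with high probability, simultaneously at every added spanner edge). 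Likewise $\sum_{(s,t)} m_{(s,t)} = \sum_{(u,v)\text{ added}} |\Psi(u,v)| \le P$. Thus, with high probability, $\sum_{(s,t)}(C_{(s,t)}^{\text{final}})^2 \le \Oish(fd^2\,|E(H^{(sp)})|) + O(P)$.

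Finally I would apply Cauchy--Schwarz: since at most $n^2$ ordered pairs have $C_{(s,t)}^{\text{final}}\ge 1$, we get $P^2 = \big(\sum_{(s,t)} C_{(s,t)}^{\text{final}}\big)^2 \le n^2\sum_{(s,t)}(C_{(s,t)}^{\text{final}})^2 \le n^2\big(\Oish(fd^2\,|E(H^{(sp)})|) + O(P)\big)$. Solving this quadratic inequality in $P$ yields $P \le O(n^2) + \Oish\!\big(\sqrt{fn^2\cdot d^2|E(H^{(sp)})|}\big)$, and AM--GM bounds the square root by $d^2|E(H^{(sp)})| + \Oish(fn^2)$, with the leftover $O(n^2)$ absorbed into $\Oish(fn^2)$. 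This gives the claimed bound.

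The step I expect to be the main obstacle is the bookkeeping in the middle paragraph: one must verify carefully that the increments $c_{j-1}N_j$, summed over all pairs, reassemble exactly into the quantities $\sum_{(s,t)\in\Psi(u,v)} C_{(s,t)}^{\text{before}}$ that Lemma~\ref{lem:countermass} controls --- in particular that ``$C_{(s,t)}^{\text{before}} \ge 1$ when $(u,v)$ completes a path to $(s,t)$'' is precisely the condition that identifies the non-leading increments, and that the fault-avoidance of a completed $3$-path is automatically witnessed by $F_{(u,v)}$ for its heaviest edge $(u,v)$. The remaining manipulations (the convexity telescope and the AM--GM split) are routine, and all probabilistic content is inherited from Lemma~\ref{lem:countermass}, so we simply condition on the high-probability event that it holds at every added spanner edge.
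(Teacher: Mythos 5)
Your proof is correct, but it is not the argument the paper uses. The paper proceeds edge-by-edge: it splits the added spanner edges according to whether $|\Psi(u,v)| \le d^2$ or not; small-$\Psi$ edges contribute at most $d^2$ completed paths each (hence the $d^2|E(H^{(sp)})|$ term), while for large-$\Psi$ edges it applies Markov's inequality to the bound of Lemma~\ref{lem:countermass} to conclude that at least half of the completed paths end at pairs with $C_{(s,t)} = \Oish(f)$ (the ``dispersed'' paths), and then amortizes: each pair can host only $\Oish(f)$ dispersed paths, giving $\Oish(fn^2)$ in total. You instead run a global second-moment argument: the telescoping identity bounds $\sum_{(s,t)} \bigl(C^{\mathrm{final}}_{(s,t)}\bigr)^2$ by the per-edge sums controlled by Lemma~\ref{lem:countermass} (summed over the at most $|E(H^{(sp)})|$ added edges) plus $O(P)$, and Cauchy--Schwarz over the at most $n^2$ pairs followed by AM--GM converts this into $P \le d^2|E(H^{(sp)})| + \Oish(fn^2)$; in fact before the AM--GM relaxation your route gives the slightly sharper geometric-mean form $\Oish\bigl(nd\sqrt{f\,|E(H^{(sp)})|}\bigr) + O(n^2)$. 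Both arguments draw all probabilistic content from the same event (Lemma~\ref{lem:countermass} holding simultaneously at every added edge), and your multiplicity bookkeeping ($N\le 2$ paths per pair per completing edge, and the ordered/unordered double count) only perturbs constants, which is harmless since the downstream uses (Lemma~\ref{lem:3emedge} and the final size calculation) tolerate constant factors. What the paper's route buys is the explicit notion of dispersed paths, which is not incidental: it is precisely the ingredient promoted to a defining property of SALAD paths in the general-$k$ analysis of Section~\ref{sec:general-k}; your route is arguably cleaner and more self-contained for the $k=3$ warm-up, but it does not produce that reusable notion.
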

\begin{proof}
For each edge $(u, v)$ added to the emulator, let us split into two cases by the size of $\Psi(u, v)$.
Notice that, since a middle-heavy fault-avoiding path completed by $(u, v)$ is uniquely determined by $(u, v)$ and its endpoints, the size of $\Psi(u, v)$ is the same as the number of middle-heavy fault-avoiding paths completed by $(u, v)$.

\paragraph{Case 1: $|\Psi(u, v)| \le d^2$.}
In this case, the edge $(u, v)$ completes $\le d^2$ new middle-heavy fault-avoiding $3$-paths.
By a unioning, only $d^2 |E(H^{(sp)}|$ middle-heavy fault-avoiding $3$-paths can be completed by edges of this type.

\paragraph{Case 2: $|\Psi(u, v)| > d^2$.}
Assuming the high-probability event from Lemma \ref{lem:countermass} holds, we also have
$$\sum \limits_{(s, t) \in \Psi(u, v)} C_{(s, t)} = \Oish\left(f d^2 \right).$$
Thus, the average value of $C_{(s, t)}$ over the $>d^2$ node pairs in $\Psi(u, v)$ is $\Oish(f)$.
So by Markov's inequality, for at least half of the node pairs $(s, t) \in \Psi(u, v)$, we have $C_{(s, t)} = \Oish(f)$.

This implies that only $\Oish(fn^2)$ middle-heavy fault-avoiding $3$-paths may be completed by edges from this case, by a straightforward amortization argument over all pairs $(s,t)$.
Whenever a middle-heavy fault-avoiding $3$-path $\pi = (s, u, v, t)$ is completed by an edge in this second case, let us say that $\pi$ is \emph{dispersed} if $C_{(s, t)} = \Oish(f)$.

By the previous argument, at least half of all paths completed by edges in this case are dispersed, so it suffices to only count the dispersed paths.
Moreover, by definition of $C_{(s, t)}$ every dispersed path from $s$ to $t$ is among the first $\Oish(f)$ middle-heavy $3$-paths from $s$ to $t$; thus, unioning over all choices of $s, t$ there are $\Oish(fn^2)$ dispersed paths in total.

Combining the two cases, we get at most $d^2 |E(H^{(sp)}| + \Oish(fn^2)$ middle-heavy fault-avoiding $3$-paths in $H^{(sp)}$.
\end{proof}

 We now show how to use the above bound on middle-heavy fault-avoiding $3$-paths to bound the number of edges in our emulator.  We first bound the number of emulator edges (edges which were added by the path sampling and so might not be in $E$) in terms of the number of spanner edges (edges from $E$), and then bound the number of spanner edges.

\begin{lemma} [Emulator Edge to Path Counting] \label{lem:3emedge}
With high probability, we have
$$\left|H^{(em)}\right| \le O\left(\left|E\left(H^{(sp)}\right)\right|\right) + \Oish\left(\frac{fn^2}{d^2}\right).$$
\end{lemma}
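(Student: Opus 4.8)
The plan is to bound the number of emulator edges by the number of \emph{successful} coin flips performed by Algorithm~\ref{alg:k=3}, to identify the \emph{total} number of coin flips with the number of middle-heavy fault-avoiding $3$-paths in the final spanner $H^{(sp)}$, and then to combine Lemma~\ref{lem:mfpathcount} with a Chernoff bound.

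First I would prove the combinatorial identity underlying the argument: mapping each coin flip for a pair $(s,t)$ performed while adding a spanner edge $(u,v)$ (with fault set $F = F_{(u,v)}$) to the path $(s,u,v,t)$ gives a bijection onto the middle-heavy fault-avoiding $3$-paths present in the final $H^{(sp)}$. Forward: when such a flip occurs we have $s \in N_{H^{(sp)} \setminus F}(u)$ and $t \in N_{H^{(sp)} \setminus F}(v)$, so all three edges $(s,u),(u,v),(v,t)$ are in $H^{(sp)}$ at that moment (hence in the final $H^{(sp)}$); the path is middle-heavy because $(s,u)$ and $(v,t)$ are already present when $(u,v)$ is added and are therefore lighter in the nondecreasing-weight order; and it is fault-avoiding because its heaviest edge is $(u,v)$ and none of $s,u,v,t$ lies in $F_{(u,v)}$ (for $u,v$ because $F_{(u,v)} \subseteq V \setminus \{u,v\}$, and for $s,t$ because they are vertices of $H^{(sp)} \setminus F$). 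Reverse: given a middle-heavy fault-avoiding $3$-path $(s,u,v,t)$ in the final $H^{(sp)}$ with heaviest (middle) edge $(u,v)$, the edges $(s,u)$ and $(v,t)$ are already present when $(u,v)$ is processed and added, and $s,t \notin F_{(u,v)}$, so $s \in N_{H^{(sp)} \setminus F_{(u,v)}}(u)$ and $t \in N_{H^{(sp)} \setminus F_{(u,v)}}(v)$ and the corresponding coin flip happens exactly once. The map is clearly injective, so it is a bijection, and Lemma~\ref{lem:mfpathcount} gives that with high probability the number of coin flips is at most $d^2 \left| E(H^{(sp)}) \right| + \Oish(fn^2)$.

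Next I would convert this into a bound on the number of successes $S$ (which is an upper bound on $\left| H^{(em)} \right|$). The subtlety is that the number of flips $N$ is itself a random variable correlated with the flip outcomes, so it cannot be fed directly into Chernoff; I would instead model the flips as an i.i.d.\ sequence $Y_1, Y_2, \dots$ of Bernoulli($d^{-2}$) variables consumed in order (so $S = \sum_{i=1}^{N} Y_i$) and union over the realized value of $\left| E(H^{(sp)}) \right|$. Let $\mathcal E$ be the high-probability event of Lemma~\ref{lem:mfpathcount}. For each $p \in \{0,\dots,\binom{n}{2}\}$, on $\{\left| E(H^{(sp)}) \right| = p\} \cap \mathcal E$ we have $N \le M_p := d^2 p + \Oish(fn^2)$, hence $S \le \sum_{i=1}^{M_p} Y_i$ since $Y_i \ge 0$; but $\sum_{i=1}^{M_p} Y_i$ is an honest $\mathrm{Binomial}(M_p, d^{-2})$ with mean $p + \Oish(fn^2/d^2)$, so a Chernoff bound gives $\Pr\!\left[\sum_{i=1}^{M_p} Y_i > 2\bigl(p + \Oish(fn^2/d^2)\bigr) + c\log n\right] \le n^{-10}$ for a large enough constant $c$. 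Union-bounding over the at most $n^2$ values of $p$ and adding $\Pr[\overline{\mathcal E}]$, I would conclude that with high probability $\left| H^{(em)} \right| \le S \le O\!\left(\left| E(H^{(sp)}) \right|\right) + \Oish(fn^2/d^2)$, absorbing the $c\log n$ term into the $\Oish(\cdot)$.

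I expect the main obstacle to be making the concentration step rigorous in the face of $N$ being a data-dependent random variable; the union over $\left| E(H^{(sp)}) \right|$ together with the bijection to \emph{final}-graph $3$-paths (which is precisely what Lemma~\ref{lem:mfpathcount} bounds) is the mechanism that resolves it.
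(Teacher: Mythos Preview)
Your proposal is correct and follows essentially the same approach as the paper: identify each coin flip with a middle-heavy fault-avoiding $3$-path in the final $H^{(sp)}$, invoke Lemma~\ref{lem:mfpathcount} to bound the number of flips, and use a Chernoff bound on the i.i.d.\ Bernoulli($d^{-2}$) outcomes to bound the number of successes. The paper states the bijection as a one-line fact (``the number of attempts is precisely equal to the number of middle-heavy fault-avoiding $3$-paths'') where you spell it out, and it handles the dependence between $N$ and the flip outcomes by padding the sequence with extra Bernoullis up to the (random) bound $\alpha$ rather than by your union over the realized value of $|E(H^{(sp)})|$; your treatment of that dependency is arguably cleaner, but the two arguments are interchangeable.
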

\begin{proof}
Let $\alpha = d^2 |E(H^{(sp)}| + \Oish(fn^2)$ be the bound on the number of middle-heavy fault-avoiding 3-paths in $H^{(sp)}$ which holds with high probability from Lemma~\ref{lem:mfpathcount}.  Consider the following two events.
\begin{itemize}
    \item Let $\mathcal A$ be the event that $H^{(sp)}$ has at most $\alpha$ middle-heavy fault-avoiding paths.  We know from Lemma~\ref{lem:mfpathcount} that this holds with high probability.
    
    \item Whenever the algorithm considers adding some emulator edge, we call this an \emph{attempt}.  Let $X_i$ be an indicator random variable for the event that the $i^{th}$ attempt is successful (meaning that the emulator edge is actually added).
    If there is no $i^{th}$ attempt since the algorithm has terminated before $i$ attempts are made, then we set $X_i = 1$ with probability $d^{-2}$ and $X_i = 0$ with probability $1-d^{-2}$.  Note that $\E[X_i] = d^{-2}$ for all $i$.  Moreover, note that $X_i$ and $X_j$ are independent for $i \neq j$. 
    Let $X = \sum_{i=1}^{\alpha} X_i$, and let $\mathcal B$ be the event that $X < 2d^{-2} \alpha$.  So if $\mathcal B$ occurs, then of the first $\alpha$ attempts, at most $2d^{-2} \alpha$ emulator edges are added.  
    By linearity of expectations we know that $\E[X] = d^{-2} \alpha$.  Moreover, we know that the $X_i$'s are independent and that $d^{-2} \alpha = \omega(\log n)$.  Hence a standard Chernoff bound implies that $\mathcal B$ occurs with high probability.  
\end{itemize}

Since both $\mathcal A$ and $\mathcal B$ occur with high probability, a simple union bound implies that $\mathcal A \cap \mathcal B$ occurs with high probability.  Note that every emulator edge is caused by some attempt, and that the number of attempts is precisely equal to the number of middle-heavy fault-avoiding 3-paths.  Hence if both $\mathcal A$ and $\mathcal B$ occur, the number of emulator edges in $H$ is at most $O(d^{-2} \alpha)$, as claimed. 
\end{proof}

\begin{lemma} [Spanner Edge to Path Counting] \label{lem:3spanedge}
Letting $\mu$ be the number of middle-heavy fault-avoiding $3$-paths in $H^{(sp)}$, we have
$\left| E\left(H^{(sp)}\right) \right| \leq O\left( \mu^{1/3} n^{2/3} + nf\right)$.
\end{lemma}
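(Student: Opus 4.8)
The plan is to lower-bound $\mu$ in terms of $m := |E(H^{(sp)})|$ by showing that each spanner edge, when it is added, typically completes many middle-heavy fault-avoiding $3$-paths in which it is the \emph{middle} edge.

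Fix a spanner edge $e = (u,v)$ added by the algorithm, with associated fault set $F_e := F_{(u,v)}$, $|F_e| \le f$. Let $B_u := N_{H^{(sp)}\setminus F_e}(u)$ and $B_v := N_{H^{(sp)}\setminus F_e}(v)$, measured at the moment just before $e$ is added. The first key step is the disjointness claim $B_u \cap B_v = \emptyset$. Indeed, if $s \in B_u \cap B_v$ then at that moment $H \setminus F_e$ contains the path $u - s - v$; crucially, both $(u,s)$ and $(s,v)$ are spanner edges, i.e.\ genuine edges of $G$ with both endpoints surviving $F_e$, so their weights in the reweighted graph satisfy $w_{F_e}(u,s) = \dist_{G\setminus F_e}(u,s) \le w(u,s) < w(u,v)$, and likewise for $(s,v)$. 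Hence $\dist_{H\setminus F_e}(u,v) < 2w(u,v) \le 5w(u,v)$, contradicting the if-condition that caused $e$ to be added. This is the one place the argument uses something special about fault-tolerant emulators: spanner edges, unlike emulator edges, never grow in weight under faults missing their endpoints, so naive distance bounds still hold through the reweighting. Consequently, every pair $(s,t) \in B_u \times B_v$ gives a path $(s,u,v,t)$ on four distinct vertices which, since $w(s,u), w(v,t) < w(u,v)$, is middle-heavy with middle (hence heaviest) edge $e$; since $s,t,u,v \notin F_e$ it is fault-avoiding; and it is completed exactly when $e$ is added. Distinct choices of $(s,t)$ and of $e$ yield distinct such paths (the path determines $e$ as its middle edge, and then determines $s$ as the endpoint adjacent to $u$), so $\mu \ge \sum_{e=(u,v)} |B_u|\cdot|B_v|$.

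Next I convert this into the claimed bound by a standard min-degree-subgraph argument, after accounting for the loss of up to $f$ neighbors to $F_e$. Note $|B_u| \ge d_u(e) - f$, where $d_u(e)$ is the number of neighbors of $u$ in the final $H^{(sp)}$ via edges lighter than $e$; equivalently, $d_u(e)+1$ is the rank of $e$ among the edges incident to $u$ in weight order. If $m \le 100(f+1)n$ we are done ($m = O(fn)$, assuming $f \ge 1$). Otherwise, peel off vertices of current degree below $\delta := m/(4n)$; this deletes fewer than $\delta n = m/4$ edges, leaving a nonempty $H' \subseteq H^{(sp)}$ with minimum degree $\ge \delta$ and at least $3m/4$ edges. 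Call $e = (u,v) \in E(H')$ \emph{good} if $e$ has rank $\ge \delta/2$ among the $H'$-edges at $u$ and also at $v$. For each vertex fewer than $\delta/2$ incident $H'$-edges have rank below $\delta/2$, so summing over vertices there are fewer than $n\delta/2 = m/8$ non-good edges, hence at least $5m/8$ good ones. For a good edge, $d_u(e) \ge (\text{rank of }e\text{ at }u\text{ in }H') - 1 \ge \delta/2 - 1$, and since $\delta > 25(f+1)$ here we get $|B_u| \ge d_u(e) - f \ge \delta/4$, and similarly $|B_v| \ge \delta/4$.

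Putting it together, $\mu \ge (\#\text{good edges})\cdot(\delta/4)^2 \ge \tfrac{5m}{8}\cdot\tfrac{1}{16}\big(\tfrac{m}{4n}\big)^2 = \Omega(m^3/n^2)$, so $m = O(\mu^{1/3}n^{2/3})$; combining with the easy case gives $m \le O(\mu^{1/3}n^{2/3} + fn)$. I expect the main obstacle to be the disjointness step: it is the only place the argument leverages the weight-updating emulator definition rather than a generic path-counting bound, and getting it right requires carefully separating spanner edges (honest $G$-edges, hence of small weight under faults missing their endpoints) from emulator edges (whose reweighting could blow up). The remaining counting is routine, but the two-case split and subgraph peeling are needed to handle spanner edges that were added while one endpoint still had fewer than $\sim f$ lighter incident edges.
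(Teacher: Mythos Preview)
Your proof is correct, and it takes a genuinely different route from the paper's.

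The paper proves this lemma by random subsampling: it passes to a random induced subgraph $H'$ of $H^{(sp)}$ keeping each node with probability $(c\delta)^{-1}$ (where $\delta$ is the average degree of $H^{(sp)}$), then to a further subgraph $H''$ keeping only ``clean'' edges whose fault set $F_{(u,v)}$ was entirely deleted. It then sandwiches the expected number of simple middle-heavy $3$-paths in $H''$ between an upper bound $\mu \cdot (c\delta)^{-4}$ (each such path survives with at most this probability and must come from a fault-avoiding path in $H^{(sp)}$) and a lower bound $\Omega(n\delta^{-1})$ (from $\mathbb{E}[|E(H'')|] - \mathbb{E}[|V(H'')|]$), and rearranges.

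Your argument is instead fully deterministic: you directly show $\mu = \sum_{e=(u,v)} |B_u|\cdot|B_v|$ by the disjointness observation, then lower-bound the sum via a min-degree-core extraction plus a rank argument to guarantee that most edges in the core have $\Omega(\delta)$ lighter neighbors on each side even after losing $f$ to $F_e$. The disjointness step is exactly the same idea the paper uses (phrased there as ``if $s=t$ then we are forced to include $s \in F_{(u,v)}$''), so the two proofs share that ingredient; the difference is purely in the path-counting machinery.

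Your approach is more elementary for $k=3$ and gives explicit constants. The paper's subsampling approach is the one that scales cleanly to general $k$: in Section~\ref{sec:general-k} the same ``keep each node with probability $d^{-1}$, retain only clean edges'' template is reused, now with cleanness also requiring that no concentrated splittable subpath survives. Extending your direct-counting argument to general $k$ would require tracking, for each spanner edge, how many SALAD $k$-paths it completes as the heaviest edge, which is less straightforward to control than two neighborhoods.
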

\begin{proof}
Let $c>0$ be a sufficiently small absolute constant and let $\delta$ be the average degree in $H^{(sp)}$.
If $f > c\delta$ then we have $O(nf)$ edges in $H^{(sp)}$, and we are done.
So assume in the following that $f \le c \delta$.

From here, we will use a slight variant on the ``subsampling'' method from extremal graph theory.
Starting with $H^{(sp)}$, we will define a random subgraph $H''$.
The definition of $H''$ is such that we can straightforwardly relate the number of middle-heavy fault-avoiding $3$-paths $\mu$ in $H^{(sp)}$ to the expected number of simple (does not repeat nodes) middle-heavy fault-avoiding $3$-paths $\mu''$ that survive in $H''$.
Separately, we will argue that enough edges probably remain in $H''$ to use a straightforward counting argument to lower bound the expectation of $\mu''$.
Together, these two parts imply a bound on $\mu$ that can be rearranged into our desired lemma statement.

We now state the definition of $H''$.
First, let $H'$ be a random induced subgraph of $H^{(sp)}$ obtained by independently keeping each node with probability $(c\delta)^{-1}$.
Let us say that an edge $(u, v)$ in $H'$ is \emph{clean} if none of the nodes in its associated fault set $F_{(u, v)}$ survive in $H'$.
We define $H''$ as the subgraph of $H'$ that contains only its clean edges.

\paragraph{Lower Bound on $\mathbb{E}[\mu'']$.}

First, let us analyze the probability that a given edge $(u, v)$ in $H^{(sp)}$ survives in $H''$.
The probability that $(u, v)$ survives in $H'$ is exactly $(c\delta)^{-2}$ (the event that $u, v$ each survive).
Conditional on $(u, v)$ surviving in $H'$, it is clean iff none of the nodes in $F_{(u, v)}$ also survive.
Since $|F_{(u, v)}| \le f \le c\delta$, $(u, v)$ is clean with constant probability. 
So $(u, v)$ survives in $H''$ with probability $\Omega((c\delta)^{-2})$, which implies
\[
\mathbb{E}\left[ \left| E(H'') \right| \right] = \left|E(H^{(sp)})\right| \cdot \Omega\left((c\delta)^{-2}\right) = \Omega\left( n \delta^{-1} c^{-2} \right).
\]
Meanwhile, the expected number of nodes that survive in $H''$ is exactly
$\mathbb{E}\left[ \left| V(H'') \right| \right] = n \delta^{-1} c^{-1}$.
Let us imagine that we start with an initially-empty graph on the vertex set $V(H'')$, and we add the edges in $E(H'')$ one by one in order of increasing weight.
For each added edge $(u, v)$ that is the \emph{first} edge incident to one of its endpoints $u$ or $v$, this edge does not create any new middle-heavy $3$-paths.
There are at most $|V(H'')|$ such edges.
Any other edge $(u,v)$ creates at least one simple middle-heavy $3$-path in $H''$.
Specifically, the $3$-path $(s, u, v, t)$ in which it is the middle edge must be middle-heavy by the order in which we are adding the edges, and it is simple since if $s=t$ then we are forced to include $s \in F_{(u, v)}$, but then $s$ must not survive in $G'$ (since $(u, v)$ is clean).
It follows that
\begin{align*}
\mathbb{E}[\mu''] &\ge \mathbb{E}\left[\left|E(H'')\right| - \left|V(H'')\right| \right] = \mathbb{E}\left[\left|E(H'')\right|\right] - \mathbb{E}\left[ \left|V(H'')\right| \right] = \Omega\left( n\delta^{-1} c^{-2}\right) - n\delta^{-1} c^{-1}\\ 
&= \Omega\left( n\delta^{-1} c^{-2}\right) \tag*{by choice of small enough $c>0$.}
\end{align*}

\paragraph{Upper Bound on $\mathbb{E}[\mu'']$.}

We can relate $\mu$ and $\mu''$ as follows.
We notice that every simple middle-heavy $3$-path $\pi$ in $H''$ must correspond to a \emph{fault-avoiding} $3$-path in $H^{(sp)}$.
This holds because if the middle edge $(u, v)$ of $\pi$ survives in $H''$, then it must be clean, implying that no nodes in $F_{(u, v)}$ survive in $H''$.

Now let $q$ be a middle-heavy fault-avoiding $3$-path in $H^{(sp)}$.
We notice that $q$ must be simple, since (as before) if $q = (s, u, v, s)$ then we would have to include $s \in F_{(u, v)}$ and so $q$ would not be fault-avoiding.
Since $q$ is simple it survives in $H'$ with probability exactly $(c\delta)^{-4}$, and thus it survives in $H''$ with probability $\le (c\delta)^{-4}$.
We therefore have
$\mathbb{E}[\mu''] \leq O\left( \mu (c\delta)^{-4} \right).$

\paragraph{Putting It Together.}
By the previous two parts, we have
$\Omega\left( n\delta^{-1} c^{-2}\right) \leq \mathbb{E}[\mu''] \leq O\left( \mu (c\delta)^{-4} \right)$,
which implies that $\delta = O((\mu/(nc^2))^{1/3})$, and thus
$n\delta = O_c\left( \mu^{1/3} n^{2/3} \right).$
Since $\delta$ is defined as the average degree in $H^{(sp)}$, this proves the lemma.
\end{proof}

Our size bound now follows by directly combining our previous three lemmas; see Appendix \ref{app:proofs}. 

\begin{restatable}{lemma}{kthreealgebra}
The emulator $H$ returned by Algorithm \ref{alg:k=3} has
$|E(H)| = \Oish\left(f^{1/3} n^{4/3}\right) + O(fn)$ with high probability.
\end{restatable}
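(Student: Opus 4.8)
The plan is to fix $d = f^{1/3} n^{1/3}$ (the value declared in the algorithm) and then chain the three preceding lemmas together. Write $S := |E(H^{(sp)})|$ for the number of spanner edges, $M := |H^{(em)}|$ for the number of emulator edges, and let $\mu$ denote the number of middle-heavy fault-avoiding $3$-paths in the final $H^{(sp)}$. Each of Lemma~\ref{lem:mfpathcount}, Lemma~\ref{lem:3emedge}, Lemma~\ref{lem:3spanedge} holds with high probability (the last is in fact deterministic), so a single union bound lets me assume all of their conclusions hold simultaneously: $\mu \le d^2 S + \Oish(fn^2)$, $M \le O(S) + \Oish(fn^2/d^2)$, and $S \le O(\mu^{1/3} n^{2/3} + nf)$.

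First I would bound $S$ on its own. Substituting the bound on $\mu$ into Lemma~\ref{lem:3spanedge} and using subadditivity of $x \mapsto x^{1/3}$ gives
\[
S \le O\!\left( (d^2 S)^{1/3} n^{2/3} \right) + \Oish\!\left( (f n^2)^{1/3} n^{2/3} \right) + O(nf) = O\!\left( d^{2/3} n^{2/3} S^{1/3} \right) + \Oish\!\left( f^{1/3} n^{4/3} \right) + O(nf).
\]
This is a self-referential inequality of the form $S \le a S^{1/3} + b$ with $a = O(d^{2/3} n^{2/3})$ and $b = \Oish(f^{1/3} n^{4/3}) + O(nf)$, which I would resolve by the standard case split: either $a S^{1/3} \le b$, in which case $S \le 2b$; or $a S^{1/3} > b$, in which case $S \le 2 a S^{1/3}$, so $S^{2/3} \le 2a$ and hence $S \le (2a)^{3/2} = O(a^{3/2}) = O(dn)$. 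In both cases $S = O(dn) + \Oish(f^{1/3} n^{4/3}) + O(nf)$, and plugging in $d = f^{1/3} n^{1/3}$ turns the leading term into $O(f^{1/3} n^{4/3})$, so $S = \Oish(f^{1/3} n^{4/3}) + O(nf)$.

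It then remains to bound $M$. By Lemma~\ref{lem:3emedge} and the bound on $S$ just obtained, $M \le O(S) + \Oish(fn^2/d^2)$, and with $d = f^{1/3} n^{1/3}$ we have $fn^2/d^2 = fn^2/(f^{2/3}n^{2/3}) = f^{1/3}n^{4/3}$, so $M = \Oish(f^{1/3} n^{4/3}) + O(nf)$ as well. Adding the two estimates gives $|E(H)| = S + M = \Oish(f^{1/3} n^{4/3}) + O(nf)$, as claimed.

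I do not expect a genuine obstacle here: the one place requiring care is solving the self-referential inequality $S \lesssim d^{2/3} n^{2/3} S^{1/3} + (\text{lower order})$ for $S$, and the rest is bookkeeping of the high-probability events and substitution of $d$. This is precisely the ``appropriate parameter balance'' step promised in the introduction, and the choice $d = f^{1/3} n^{1/3}$ is exactly what equalizes the $O(dn)$ term produced by the spanner-edge recursion against the $\Oish(fn^2/d^2)$ term produced by the emulator-edge count.
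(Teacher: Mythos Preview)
Your proposal is correct and follows essentially the same route as the paper's proof: chain Lemmas~\ref{lem:mfpathcount}, \ref{lem:3emedge}, and \ref{lem:3spanedge}, solve the resulting self-referential inequality for $S$ by a two-case split, and then substitute $d = f^{1/3} n^{1/3}$. The only cosmetic difference is that the paper first splits on which term of the $\mu$-bound dominates and then (inside one branch) rearranges $S \le O((d^2 S)^{1/3} n^{2/3})$, whereas you apply subadditivity of $x \mapsto x^{1/3}$ up front and do a single case split on the inequality $S \le a S^{1/3} + b$; the two organizations are equivalent.
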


\section{Vertex Fault-Tolerant $(2k-1)$-Emulators} \label{sec:general-k}

Our goal in this section is to prove Theorem \ref{thm:UB-main}.  We start by defining several properties of certain desired paths that let us generalize the algorithm.

\subsection{SALAD Paths and Proof Overview}

We begin by explaining, at a high level, the relationship between this argument for general $k$ and the one given previously for $k=3$.
The core of our previous proof was a counting argument over middle-heavy fault-avoiding $3$-paths in $H^{(sp)}$.
The core of our general argument will be a counting argument over ``SALAD'' $k$-paths in $H^{(sp)}$.
SALAD is an acronym for \textbf Simple, \textbf Alternating, \textbf Local, \textbf Avoids faults, \textbf Dispersed.
We will explain these five properties and their role in the analysis momentarily, but first let us state our algorithm. This algorithm uses a notion of \emph{local} paths, which we define immediately after the algorithm and do not have an analog in our simpler $k=3$ case. We say that a path in $H^{(sp)}$ is \textit{completed by} an edge $(u,v)$ if the path exists in $H^{(sp)}$ and $(u,v)$ is the heaviest edge in the path (i.e., the path exists in $H^{(sp)}$ once $(u,v)$ has been added).

\FloatBarrier
\begin{algorithm} [h!]
\textbf{Input:} Graph $G = (V, E, w)$, positive integer $f$, odd positive integer $k$\;
~\\

Let $H \gets (V, \emptyset, w)$ be the initially-empty emulator\;
\ForEach{edge $(u, v) \in E$ in order of nondecreasing weight $w(u, v)$}{
    \If{there is $F \subseteq V \setminus \{u, v\}$ of size $|F| \le f$ with $\dist_{H \setminus F}(u, v) > (2k-1) \cdot w(u, v)$}{
        Add $(u, v)$ as a spanner edge to $H$\;
        \ForEach{local path $\pi$ in $H^{(sp)}$ with $j \le k$ edges completed by $(u, v)$}{
           Add the endpoints $(s, t)$ of $\pi$ as an emulator edge with probability $d^{-(j-1)}$\;
        }
    }
}
\Return{$H$};
\label{alg:general-k}
\caption{Algorithm for $f$-VFT $(2k-1)$-emulators}
\end{algorithm}
\FloatBarrier

Stretch analysis of this algorithm is essentially the same as Lemma \ref{lem:k=3-correctness}; we include it here for completeness.  

\begin{lemma} \label{lem:stretch-k}
The emulator $H$ returned by Algorithm~\ref{alg:general-k} is an $f$-VFT $(2k-1)$-emulator.
\end{lemma}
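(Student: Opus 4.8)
The plan is to mirror the correctness argument for Algorithm~\ref{alg:k=3} (Lemma~\ref{lem:k=3-correctness}) almost verbatim, since the structure of the algorithm is unchanged: we still process edges of $G$ in nondecreasing weight order, and we still add an edge $(u,v)$ as a spanner edge exactly when there is a fault set $F$ of size at most $f$ witnessing that the current $H$ fails to $(2k-1)$-approximate $w(u,v)$ after deleting $F$. The emulator-edge sampling step only \emph{adds} edges to $H$, so it can never hurt the distance upper bound. Thus the only thing we need is the same reduction: it suffices to certify that $\dist_{H\setminus F}(u,v)\le (2k-1)\cdot w(u,v)$ for every original edge $(u,v)\in E$ and every fault set $F\subseteq V\setminus\{u,v\}$ with $|F|\le f$.

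First I would record that reduction explicitly. Fix $x,y\in V$ and $F$ with $|F|\le f$, $x,y\notin F$, and let $P$ be a shortest $x\leadsto y$ path in $G\setminus F$. For each edge $(a,b)$ of $P$ we have $a,b\notin F$, so the per-edge bound gives $\dist_{H\setminus F}(a,b)\le (2k-1)\cdot w(a,b)$; concatenating along $P$ and using the triangle inequality in $H\setminus F$ (with the weight function $w_F$ from Definition~\ref{def:ftemsintro}, under which every emulator edge on one of these sub-walks has weight at most the corresponding $G\setminus F$ distance) yields $\dist_{H\setminus F}(x,y)\le (2k-1)\sum_{(a,b)\in P} w(a,b)=(2k-1)\cdot\dist_{G\setminus F}(x,y)$. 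The lower inequality $\dist_{G\setminus F}(x,y)\le\dist_{H\setminus F}(x,y)$ is immediate from the weight-updating definition, since every edge of $H\setminus F$ has weight equal to a genuine $G\setminus F$ distance, so any $H\setminus F$ walk projects to a $G\setminus F$ walk of the same length. Hence $H$ is an $f$-VFT $(2k-1)$-emulator once the per-edge bound is established.

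Next I would establish the per-edge bound itself. Take $(u,v)\in E$ and $F\subseteq V\setminus\{u,v\}$ with $|F|\le f$. If $(u,v)$ was added to $H$ as a spanner edge, then $(u,v)\in E(H)$, $u,v\notin F$, and $w_F(u,v)=\dist_{G\setminus F}(u,v)\le w(u,v)$ (deleting vertices cannot shorten distances, and $(u,v)$ itself is a length-$w(u,v)$ path when $u,v\notin F$), so $\dist_{H\setminus F}(u,v)\le w(u,v)\le (2k-1)\cdot w(u,v)$. If $(u,v)$ was \emph{not} added, then the condition of the \textbf{if} statement failed at the moment $(u,v)$ was considered, i.e.\ for \emph{every} fault set $F'$ of size at most $f$ avoiding $u,v$ we had $\dist_{H'\setminus F'}(u,v)\le (2k-1)\cdot w(u,v)$, where $H'$ is the emulator at that time; in particular this holds for our $F$. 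Since $H'\subseteq H$ and edge weights under $w_F$ only decrease as more (shorter-distance) edges are added, $\dist_{H\setminus F}(u,v)\le\dist_{H'\setminus F}(u,v)\le (2k-1)\cdot w(u,v)$, as required.

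There is no real obstacle here: the argument is essentially identical to Lemma~\ref{lem:k=3-correctness} with $5$ replaced by $2k-1$, and the emulator-sampling step is irrelevant to correctness because it is monotone in $H$. The only point deserving a word of care is the bookkeeping around the weight function $w_F$ of Definition~\ref{def:ftemsintro} — one should note that for any subgraph $H'$ of $H$ and any $F$, the $w_F$-length of an $H'\setminus F$ path is at least the $G\setminus F$ distance between its endpoints (giving the lower inequality) and that adding edges to the emulator can only decrease $\dist_{H\setminus F}$ (giving monotonicity) — but these are the same one-line facts already used implicitly in the $k=3$ case, so I would simply state them and move on.
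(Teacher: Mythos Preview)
Your proposal is correct and follows exactly the same approach as the paper's proof, which is itself just the $k=3$ correctness argument (Lemma~\ref{lem:k=3-correctness}) with $5$ replaced by $2k-1$. You are simply more explicit than the paper about the reduction to the per-edge bound and about the monotonicity step (the paper's proof checks the \textbf{if} condition against the final $H$ without remarking that the condition was actually tested against the partial emulator $H'$; your observation that $H'\subseteq H$ implies $\dist_{H\setminus F}\le\dist_{H'\setminus F}$ fills that small gap), though your parenthetical that ``edge weights under $w_F$ only decrease'' is slightly off --- the weights $w_F(a,b)=\dist_{G\setminus F}(a,b)$ do not change at all as $H$ grows, it is only the edge set that grows --- but the conclusion is of course unaffected.
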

\begin{proof}
Consider some $(u,v) \in E$ and $F \subseteq V \setminus \{u,v\}$ with $|F| \leq f$.  If $(u,v) \in E(H)$ then clearly $\dist_{H \setminus F}(u,v) \leq w(u,v)$.  Otherwise, Algorithm~\ref{alg:general-k} did not add $(u,v)$ to $H$, and so by the ``if'' condition we know that $d_{H \setminus F}(u,v) \leq (2k-1) \cdot w(u,v)$ as required.
\end{proof}

We now define SALAD paths:

\begin{itemize}
    \item \textbf{Simple}: $\pi$ does not repeat nodes.
    We implicitly required simplicity in our previous $k=3$ proof, since (as used in Lemma \ref{lem:3spanedge}) a non-simple middle-heavy path of the form $(s, u, v, s)$ is not fault-avoiding.
    In our extension, it is more convenient to make the simplicity requirement explicit.
    
    \item \textbf{Alternating}: $\pi$ is alternating if every even-numbered edge in $\pi$ is heavier than the two adjacent odd-numbered edges.
    That is: if $\pi$ has edge sequence $(e_1, \dots, e_k)$, then for all $i$, we have $w(e_{2i}) > w(e_{2i-1})$ and $w(e_{2i}) > w(e_{2i+1})$.
    If $k$ is even, then $e_k$ only needs to be heavier than $e_{k-1}$.
    
    ``Alternating'' turns out to be the natural extension of ``middle-heavy'' to paths of length $k \ge 3$ (notice for $k=3$, alternating and middle-heavy are the same).
    Roughly, our analysis will involve ``splitting'' paths over their heaviest edge and recursively analyzing the subpath on either side.
    Like for $k=3$, this splitting process is most efficient when the heaviest edge is somewhere in the middle of the path (neither the first nor last one).
    An alternating path is one where the heaviest edge \emph{remains} somewhere in the middle at every step of the recursion, until finally the path decomposes into individual edges.
    In fact, this is not \emph{quite} true in the case where $k$ is even (due to the last edge), which is precisely why our bounds are a little worse for even $k$.
    
    \item \textbf{Local}: this is a new property that does not have an analog in our previous $k=3$ argument.
    Let $b = \Theta(kd)$ be a parameter (it will be more convenient to specify the implicit constant later in the analysis).
    For each node $v$, we put the edges incident to $v$ in $H^{(sp)}$ into \emph{buckets} $\{B^i_v\}$: the first $b$ edges incident to $v$ are in its first bucket $B_v^1$, the next $b$ edges are in the second bucket $B_v^2$, etc.
    We define $\pi$ to be \emph{local} if, for any three-node contiguous subpath $(x, y, z) \subseteq \pi$, the edges $(x, y), (y, z)$ belong to the same bucket for $y$. 
    
    Locality is necessary because we sample SALAD paths of \emph{all} lengths $j \le k$.
    Our proof strategy from $k=3$ works just fine to limit the emulator edges \emph{contributed by SALAD paths of length $k$}.
    But it does not help us limit the emulator edges contributed by SALAD paths of shorter length $j < k$.
    By including locality explicitly, we gain an easy way to limit this quantity, at the price of a little more complexity in some of the downstream proofs.
    
    \item \textbf{Avoids Faults}: This is a slightly more stringent property than ``fault-avoiding'' used previously.
    Whenever we add a spanner edge $(u, v)$, let $F_{(u, v)}$ be a choice of fault set that forces $(u, v)$, just like in the $k=3$ proof.
    We say that $\pi$ avoids faults if, for \emph{every} edge $(u, v) \in \pi$ (not just the heaviest one), we have $\pi \cap F_{(u, v)} = \emptyset$.
    
    \item \textbf{Dispersed}: this property showed up briefly in the $k=3$ case, but we were able to bury it in the technical details of Lemma \ref{lem:mfpathcount}.
    Here, we need to bring it more to the forefront of the analysis.
    We will say that $\pi$ is a SALA path if it satisfies the first four properties described previously.
    Among the SALA paths, we will declare them either \emph{concentrated} or \emph{dispersed} as follows, and we will only use the dispersed ones in our analysis:
    
    \begin{itemize}
        \item Notice that we can split $\pi$ into two (possibly empty) shorter SALA paths $\pi_1, \pi_2$ by removing its heaviest edge $(u, v)$.
        If either of $\pi_1, \pi_2$ is concentrated, then $\pi$ is concentrated as well.
        If $\pi_1, \pi_2$ are both dispersed, then we will say that $\pi$ is \emph{splittable}, and it may still be concentrated or dispersed according to the following point:
        
        \item Set a \emph{threshold parameter} $\tau = \Oish(f)$.
        For all $1 \le j \le k$, among the splittable $j$-paths between each pair of endpoints $(s, t)$, the first $\tau^{\left\lceil\frac{j-1}{2}\right\rceil}$ completed paths are \emph{dispersed} and the rest are \emph{concentrated}.
        If two $s \leadsto t$ splittable paths are completed by the same edge, and thus arise in $H^{(sp)}$ at the same time, then we pick an arbitrary order so that the ``first'' paths are unambiguous.
    \end{itemize}
\end{itemize}

The inclusion of locality among our properties actually significantly changes the structure of the proof.
Because we consider a more restricted kind of path, it gets much easier to control the number of emulator edges (this is the whole point of locality):
\begin{lemma}\label{lem:general_em_size}
With high probability, $\left|E(H^{(em)})\right| \le \Oish\left(\left| E\left(H^{(sp)}\right) \right| \right) \cdot O(k)^k$.
\end{lemma}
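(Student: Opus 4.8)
The plan is to bound the number of emulator edges by charging each one to the spanner edge that completed the SALAD (really, local SALA) path responsible for sampling it, and then arguing that each spanner edge cannot be charged too many times. Fix a spanner edge $(u,v)$ added by the algorithm. When $(u,v)$ is added, the algorithm iterates over local paths $\pi$ of length $j \le k$ completed by $(u,v)$, and for each flips a coin of bias $d^{-(j-1)}$. So the expected number of emulator edges sampled at this step is $\sum_{j=1}^{k} d^{-(j-1)} \cdot (\text{number of local $j$-paths completed by $(u,v)$})$. The key structural point, which is exactly why locality was introduced, is that the number of local $j$-paths completed by $(u,v)$ is at most $O(kd)^{j-1}$: a local path through a node $y$ must use two edges of $y$ lying in the same bucket, and each bucket has size $b = \Theta(kd)$, so once we commit to passing through $(u,v)$ and walk outward, at each of the $j-1$ remaining ``turns'' there are at most $b$ choices for the next edge. (One has to be slightly careful about which endpoint of $(u,v)$ the path enters and how the $j$ edges are distributed on the two sides of $(u,v)$, but this only costs another factor $O(k)$ from summing over the split point.) Hence the number of local $j$-paths completed by $(u,v)$ is $O(k)^{j-1} d^{j-1}$ up to the $O(k)$ split factor, and so $d^{-(j-1)}$ times this is $O(k)^{j-1}$, and summing over $j \le k$ gives expected contribution $O(k)^{k}$ emulator edges per spanner edge.

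From there I would pass from expectation to a high-probability bound. The total number of coin flips performed over the whole algorithm is $\sum_{(u,v) \in E(H^{(sp)})} \sum_{j \le k}(\text{local $j$-paths completed by }(u,v)) \le |E(H^{(sp)})| \cdot O(k)^k d^{k-1}$ deterministically (using the per-edge local-path bound above, now without the $d^{-(j-1)}$ weighting), but this crude count is too big; instead I want to weight flips by their bias. So the right move is to set up, exactly as in the proof of Lemma~\ref{lem:3emedge}, an indicator $X_i$ for the $i$-th ``attempt'' (coin flip), padding with independent $\mathrm{Bernoulli}(d^{-(j_i-1)})$ variables past termination, and apply a Chernoff bound to $X = \sum_i X_i$. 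The subtlety is that different attempts have different biases $d^{-(j-1)}$, so I would either group attempts by their length $j$ and run a separate Chernoff bound within each group (there are only $k$ groups), or invoke a weighted Chernoff bound directly. In each group the expected number of successes is $\Oish(|E(H^{(sp)})|) \cdot O(k)^{k}$; as long as this is $\omega(\log n)$ the Chernoff bound gives concentration, and if it is not then $|E(H^{(sp)})|$ is already polylogarithmic and the bound holds trivially after adding an $\Oish(1)$ slack. Union-bounding over the $k$ groups and over the high-probability event that the algorithm behaves as analyzed, we get $|E(H^{(em)})| \le \Oish(|E(H^{(sp)})|) \cdot O(k)^k$ with high probability.

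The main obstacle I expect is not any single inequality but getting the local-path counting clean: one has to be careful that ``local path completed by $(u,v)$'' is counted with $(u,v)$ in an arbitrary interior position of the path (not necessarily the heaviest edge's usual middle slot), that the two sides of $(u,v)$ are enumerated independently, and that the bucketing bound $|B_v^i| \le b$ is applied at the correct intermediate node of each three-node subpath. I would isolate this as a standalone sublemma: \emph{for every edge $e=(u,v)$ and every $1 \le j \le k$, the number of local $j$-paths in $H^{(sp)}$ containing $e$ is at most $(2b)^{j-1} \cdot j = O(k)^{j-1} d^{\,j-1}$}, proved by fixing the position of $e$ among the $j$ edges ($j$ choices) and then greedily extending in both directions, at most $b$ options per step because consecutive edges at a shared node lie in a common bucket. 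Everything downstream — the per-spanner-edge expectation $O(k)^k$, the grouped Chernoff argument, and the final union bound — is then routine bookkeeping, and the stated $O(k)^k$ factor falls out, matching the shape of the bound in Lemma~\ref{lem:general_em_size}.
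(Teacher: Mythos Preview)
Your proposal is correct and takes essentially the same approach as the paper. The only difference is cosmetic: the paper counts the \emph{total} number of local $j$-paths in $H^{(sp)}$ directly as $O(|E(H^{(sp)})| \cdot b^{j-1})$ (pick an oriented starting edge, then extend $j-1$ times with $\le b$ choices each), rather than counting local $j$-paths per completing spanner edge and summing. This global count sidesteps your split-position factor $j$ and the $2^{j-1}$ from extending in both directions, but both routes land on the same $O(k)^k$ bound after multiplying by the bias $d^{-(j-1)}$, applying Chernoff, and unioning over $j \le k$.
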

\begin{proof}
One generates a local $j$-path by picking an oriented edge to be the starting edge, and then repeatedly extending the path by choosing $1$ edge among the at most $b$ possible edges satisfying the locality constraint.
Hence there are at most $O\left(|E(H^{(sp)}| \cdot b^{j-1} \right)$ local paths in $H^{(sp)}$.

Each local $j$-path  completed by a spanner edge $(u,v)$ is independently sampled as an emulator edge with probability $d^{-(j-1)}$.  
Thus, the \emph{expected} number of emulator edges contributed by local $j$-paths is
$$O\left(\left| E\left(H^{(sp)}\right) \right| \cdot \left(\frac{b}{d}\right)^{j-1} \right) \le \left| E\left(H^{(sp)}\right) \right| \cdot O(k)^{k-1} .$$
Since the edges are sampled independently, by a standard Chernoff bound,
\[\Oish\left(\left| E\left(H^{(sp)}\right) \right|\right) \cdot O(k)^{k-1}.\]
The lemma now follows by unioning over all choices of $j \le k$.
\end{proof}

On the other hand, it gets much harder to bound the number of spanner edges in $H$.
We use the following main technical lemma:
\begin{lemma} [Counting Lemma] \label{lem:counting}
Let $c$ be a large enough absolute constant, suppose $H^{(sp)}$ has average degree $\delta \ge cdk$, and also suppose $d \ge cf$.
Then with high probability, $H^{(sp)}$ has at least $n d^k$ SALAD $k$-paths.
\end{lemma}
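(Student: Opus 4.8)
The plan is to generalize the subsampling argument behind Lemma~\ref{lem:3spanedge}, but now reading it as a \emph{lower} bound on the number of SALAD $k$-paths rather than as an upper bound on $|E(H^{(sp)})|$. As in the $k=3$ case, subsampling plays two roles: it lets us work in a graph $H''$ where a clean weight-ordered counting argument applies, and the survival probability of a path then amplifies a modest count in $H''$ into the strong count $nd^k$ we want back in $H^{(sp)}$.

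Concretely, I would first pass to a subgraph of $H^{(sp)}$ of minimum degree $\ge \delta/2$, then keep each remaining vertex independently with probability $p = \Theta(k/\delta)$, and finally keep only the \emph{clean} surviving edges --- those whose forcing set $F_{(u,v)}$ loses all of its vertices to the subsampling --- to obtain $H''$. The hypotheses $\delta \ge cdk$ and $d \ge cf$ give $\delta = \Omega(c^2 kf)$, which makes $fp \le 1/2$, so each surviving edge is clean with probability $\ge 1/2$; hence with high probability $H''$ has $\Theta(np)$ vertices, $\Omega(n\delta p^2) = \Omega(npk)$ edges, and (after discarding low-degree vertices) minimum degree $\Omega(k)$. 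The payoff of cleanness is that \emph{every} path that lives in $H''$ automatically avoids faults --- all its vertices survive and every edge it uses is clean --- so we never pay a $k$-fold union bound to control fault-avoidance.

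The heart of the argument is a counting argument inside $H''$, run by processing its edges in nondecreasing weight order (as in Lemma~\ref{lem:3spanedge}) but now recursed through $\Theta(k)$ levels. When an edge $e=(u,v)$ is added it is the heaviest present, so in any new SALA $j$-path it occupies an \emph{even} position; deleting it splits the path into an odd-length SALA path ending at $u$ and an odd-length SALA path ending at $v$ --- odd lengths on both sides precisely because $k$ is odd --- each built from earlier (lighter) edges and each locally compatible with $e$ at the shared endpoint. Tracking, for every vertex $x$ and odd length $\ell$, the number $S_\ell(x)$ of current local SALA $\ell$-paths with endpoint $x$, the edge $e$ completes at least $\sum_i S_{2i-1}(u)\,S_{k-2i}(v)$ new SALA $k$-paths, up to lower-order corrections for subpath pairs that share a vertex or break locality (locality costs only $k$-dependent factors, since after subsampling each vertex still has enough surviving edges in a common $H^{(sp)}$-bucket to keep extending a local path). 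Summing this over all edges and all lengths $j\le k$, using minimum degree $\Omega(k)$ and an amortization generalizing the ``$|E(H'')|-|V(H'')|$'' count of Lemma~\ref{lem:3spanedge}, yields at least $\Omega_k(|V(H'')|)$ SALA $k$-paths in $H''$. To upgrade ``SALA'' to ``SALAD'' (dispersed with respect to $H^{(sp)}$), I would apply --- recursively, at each split --- the Markov/amortization dichotomy from Case~2 of Lemma~\ref{lem:mfpathcount}: if the SALA paths spread over many endpoint pairs, almost all are among the first $\tau^{\lceil(k-1)/2\rceil}$ for their pair and hence dispersed; if many share one pair, that pair alone supplies $\tau^{\lceil(k-1)/2\rceil}$ dispersed paths. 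Either way a $k$-dependent fraction survives as SALAD, so $H''$ still has $\Omega_k(|V(H'')|)$ SALAD $k$-paths.

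Finally, a SALAD $k$-path living in $H''$ is a SALAD $k$-path of $H^{(sp)}$: simplicity, the alternating pattern, and locality (with respect to the fixed $H^{(sp)}$-buckets) are intrinsic to the edge set, fault-avoidance was noted above, and dispersedness was extracted relative to $H^{(sp)}$ in the previous step. Since a fixed simple $k$-path survives the subsampling with probability exactly $p^{k+1}$,
\[
\mathbb{E}\big[\#\{\text{SALAD }k\text{-paths in }H''\}\big] \;\le\; \#\{\text{SALAD }k\text{-paths in }H^{(sp)}\}\cdot p^{k+1},
\]
so combining with the lower bound $\Omega_k(np)$ on the same expectation and plugging in $p=\Theta(k/\delta)$ and $\delta\ge cdk$ gives $\#\{\text{SALAD }k\text{-paths in }H^{(sp)}\}\ge \Omega_k\big(n(\delta/k)^k\big)\ge nd^k$ once $c$ is a large enough constant (the $k$-dependent overheads from ``local'' and ``dispersed'' are absorbed into ``large enough'', consistent with the $\Oish_k$ already present in Theorem~\ref{thm:UB-main}); a Chernoff/Azuma bound over the subsampling, union-bounded with the earlier high-probability events, gives the ``with high probability'' conclusion. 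I expect the main obstacle to be exactly the recursive count of the third paragraph: getting \textbf{alternating}, \textbf{local}, and \textbf{dispersed} to cooperate simultaneously with the split-over-the-heaviest-edge recursion, while keeping the non-simple/non-local corrections lower-order at every level. The alternating condition is well-behaved only because, for odd $k$, the two halves of a split are again odd-length SALA paths; for even $k$ the final edge breaks this symmetry, which is why the even-$k$ bound is weaker, and re-threading the dispersedness threshold through each recursion level is what produces the $\exp(\Theta(k\log k))$ losses.
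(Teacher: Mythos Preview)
Your high-level subsampling strategy and the final sandwich
\[
\Omega_k(np)\;\le\;\mathbb{E}[\sigma'']\;\le\;\sigma\cdot p^{k+1}
\]
match the paper. But the paper handles the two properties you flag as ``the main obstacle'' --- local and dispersed --- by a structural device that bypasses the recursive SALA-counting you propose, and your sketch of that recursion has real gaps.

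\textbf{Dispersed.} Your Markov dichotomy (``if many share one pair, that pair alone supplies $\tau^{\lceil(k-1)/2\rceil}$ dispersed paths'') does not work as stated: dispersedness is recursive, so an $s\leadsto t$ SALA path is SALAD only if it is splittable \emph{and} among the first $\tau^{\lceil(j-1)/2\rceil}$ splittable paths for its pair. You would need to show, at every level of the split recursion simultaneously, that a constant fraction of the paths you are building are splittable, and this requires per-edge control on how many \emph{concentrated} splittable $j$-paths each spanner edge completes. The paper gets exactly that control (Lemma~\ref{lem:cspathcount}: at most $d^{j-1}/(4k)$ concentrated splittable $j$-paths per completing edge) from the countermass bound (Lemma~\ref{lem:countermassk}), which you never invoke. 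Without it there is no reason the dichotomy keeps a constant fraction through $\Theta(k)$ levels.

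\textbf{Local.} Saying ``locality costs only $k$-dependent factors'' is not justified: locality is a restriction, and after subsampling there is no guarantee that a vertex retains many surviving edges \emph{in a common bucket}. The paper handles this by a node-splitting step: after deleting unclean edges it replaces each surviving vertex $v$ by one vertex per bucket $B_v^i$ and routes each surviving edge to the appropriate copies. Any path in the resulting graph $G''$ is then local in $H^{(sp)}$ by construction.

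The effect of these two devices is that the paper does \emph{not} count SALAD (or even SALA) paths inside the subsampled graph at all. Its definition of ``clean'' is stronger than yours: an edge $(u,v)$ is clean only if (i) no node of $F_{(u,v)}$ survives \emph{and} (ii) no simple concentrated splittable $j$-path completed by $(u,v)$ survives, for any $j\le k$. Lemma~\ref{lem:cspathcount} makes (ii) hold with probability $\ge 3/4$, so cleanness still has probability $\ge 1/2$. With this definition and the bucket-splitting, every edge-simple alternating $k$-path in $G''$ maps injectively to a SALAD $k$-path in $H^{(sp)}$ (simplicity and fault-avoidance from (i), locality from the split, dispersedness from (ii)). So the only counting needed in $G''$ is the elementary fact that an $n$-node graph with $m>kn$ edges has at least $m-kn$ edge-simple alternating $k$-paths (Lemma~\ref{lem:weakcounting}). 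Your recursive $S_\ell(x)$ bookkeeping, and the attendant non-simple/non-local/non-dispersed corrections at every level, are never needed.
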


Before proving this lemma, we can do some simple algebra to show why it implies a bound on spanner edges:

\begin{lemma}\label{lem:final_size}
If we set parameter
\[ d := \begin{cases} \max\left\{ \text{polylog } n \cdot f^{\frac12 - \frac{1}{2k}} n^{1/k}, cf \right\} & \text{if $k$ odd} \\
\max\left\{ \text{polylog } n \cdot f^{\frac12} n^{1/k}, cf \right\} & \text{if $k$ even,} \end{cases}
\] 
with large enough polylogs, then with high probability, we have
\[ \left|E(H^{(sp)}) \right| \leq \begin{cases} \Oish_k\left(f^{\frac12 - \frac{1}{2k}} n^{1+1/k} +fn\right)& \text{if $k$ odd} \\ \Oish_k\left(f^{\frac12} n^{1+1/k} +fn\right) & \text{if $k$ even} \end{cases}
\]
\end{lemma}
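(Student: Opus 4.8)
The plan is to combine Lemma~\ref{lem:counting} with a simple contrapositive argument. Lemma~\ref{lem:counting} says that \emph{if} the average degree $\delta$ of $H^{(sp)}$ is at least $cdk$ (and $d \ge cf$, which is guaranteed by our choice of $d$), then $H^{(sp)}$ has at least $nd^k$ SALAD $k$-paths with high probability. So I would first argue that $H^{(sp)}$ cannot have too many SALAD $k$-paths, and then conclude that $\delta$ must therefore be small.

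First I would establish an absolute upper bound on the number of SALAD $k$-paths in the \emph{final} graph $H^{(sp)}$. By the ``dispersed'' property, for each pair of endpoints $(s,t)$ the number of dispersed $k$-paths from $s$ to $t$ is at most $\tau^{\lceil (k-1)/2 \rceil}$, where $\tau = \Oish(f)$. Unioning over all $n^2$ pairs $(s,t)$, the total number of SALAD $k$-paths is at most $O(n^2 \tau^{\lceil (k-1)/2 \rceil}) = \Oish_k(n^2 f^{\lceil (k-1)/2 \rceil})$. For $k$ odd this is $\lceil (k-1)/2\rceil = (k-1)/2$, and for $k$ even it is $k/2$.

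Next I would combine the bounds. Suppose for contradiction that $\delta \ge cdk$ (for the large enough constant $c$ in Lemma~\ref{lem:counting}). Then with high probability $H^{(sp)}$ has at least $nd^k$ SALAD $k$-paths, so $nd^k \le \Oish_k(n^2 f^{\lceil(k-1)/2\rceil})$, i.e. $d^k \le \Oish_k(n f^{\lceil(k-1)/2\rceil})$. Plugging in our choice $d = \Theta(\text{polylog } n \cdot f^{1/2 - 1/(2k)} n^{1/k})$ for $k$ odd: then $d^k = \text{polylog } n \cdot f^{(k-1)/2} n$, and choosing the polylog factor large enough this exceeds the bound $\Oish_k(n f^{(k-1)/2})$, a contradiction. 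Similarly for $k$ even, $d = \Theta(\text{polylog } n \cdot f^{1/2} n^{1/k})$ gives $d^k = \text{polylog } n \cdot f^{k/2} n$, again exceeding $\Oish_k(n f^{k/2})$ for a large enough polylog. Hence $\delta < cdk$ with high probability, which gives $|E(H^{(sp)})| = \frac{n\delta}{2} < \frac{cdkn}{2} = O_k(dn)$. Substituting the value of $d$:
\[
|E(H^{(sp)})| = O_k(dn) \le \begin{cases} \Oish_k\left(f^{\frac12 - \frac{1}{2k}} n^{1+1/k} + fn\right) & \text{if $k$ odd} \\ \Oish_k\left(f^{\frac12} n^{1+1/k} + fn\right) & \text{if $k$ even,} \end{cases}
\]
where the $fn$ term comes from the $\max\{\cdot, cf\}$ in the definition of $d$ (when $cf$ is the larger term, $O_k(dn) = O_k(fn)$).

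The main obstacle is making sure the polylog factors and the $\Oish$-notation align precisely: I need the polylog in the definition of $d$ to dominate \emph{both} the polylog hidden inside $\tau = \Oish(f)$ raised to the power $\lceil (k-1)/2\rceil \le k$ \emph{and} the high-probability failure terms, all while keeping the $O(k)^{O(k)}$ dependence absorbed into $\Oish_k$. This is a routine but slightly delicate bookkeeping step; since $\tau^{\lceil(k-1)/2\rceil}$ contributes at most a fixed power of a polylog, it suffices to inflate the polylog in $d$ by a correspondingly larger fixed power, which only changes $\Oish_k$ by constant-in-$n$ factors. One also has to be careful that Lemma~\ref{lem:counting}'s hypothesis $d \ge cf$ is met, but this is exactly why the $\max$ with $cf$ appears in the definition of $d$, so that condition holds by construction.
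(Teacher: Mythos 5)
Your proposal is correct and follows essentially the same route as the paper: bound the total number of SALAD $k$-paths by $n^2\tau^{\lceil (k-1)/2\rceil}$ via the dispersion property, show this is strictly below $nd^k$ for the chosen $d$ with large enough polylogs, apply Lemma~\ref{lem:counting} in contrapositive to conclude $\delta = O(dk)$, and hence $|E(H^{(sp)})| = O(n d k)$, with the $fn$ term arising from the $\max\{\cdot, cf\}$ in the definition of $d$. The bookkeeping points you flag (polylog in $d$ dominating the polylog hidden in $\tau$, and the $d \ge cf$ hypothesis being met by construction) are exactly the ones the paper also relies on.
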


\begin{proof} [Proof, assuming Lemma \ref{lem:counting}]
By definition of dispersion, for each node pair $(s, t)$, we can have only $\Oish(f)^{\left \lceil \frac{k-1}{2} \right\rceil}$ total $s \leadsto t$ SALAD $k$-paths, so there are $\le n^2 \cdot \Oish(f)^{\left\lceil \frac{k-1}{2} \right\rceil }$ SALAD $k$-paths in total. Therefore the number of these paths is at most $n^2 \cdot \tilde{O}  (f)^{\lceil \frac{{k-1}}{2} \rceil}= n^2\cdot \tilde{O}(f)^{\frac{k-1}{2}}$ when $k$ is odd. Based on definition of $d$, and by a choice of large enough polylog, this means that $H^{(sp)}$ has \emph{strictly less than} $n^2 \cdot \tilde{O}(f)^{\frac{k-1}{2}} < nd^k$ SALAD $k$-paths.

If $k$ is even, there are at most $n^2 \cdot \tilde{O} (f)^{\lceil \frac{k-1}{2} \rceil}= n^2\cdot \tilde{O}(f)^{\frac{k}{2}}$.
Similarly by choosing a large enough polylogarithmic factor in the definition of $d$ for the even case, we also have that the number of SALAD $k$-paths is strictly less than $nd^k$. 

In both cases, by applying the counting lemma in contrapositive, we conclude that the average degree in $H^{(sp)}$ is $\delta = O(dk)$.
Thus $H^{(sp)}$ has $O(n\delta)$ edges, and by plugging in $d$ the claim follows.
\end{proof}

And now it is trivial to prove Theorem~\ref{thm:UB-main}.
\begin{proof}[Proof of Theorem \ref{thm:UB-main}]
The combination of Lemma~\ref{lem:general_em_size} (which bounds the number of edges of $H^{(sp)}$) and Lemma~\ref{lem:final_size} (which relates the number of emulator edges added to $E(H^{(sp)})$) implies the theorem.
\end{proof}

So it just remains to prove our counting lemma, which is the main technical part of the proof.

\subsection{Counting Lemma}

Towards proving our counting lemma, our first task is to extend Lemma \ref{lem:countermass} from the $k=3$ case.
We will define slightly more expressive variables: let $C^j_{(s, t)}$ count the number of \emph{local} $s \leadsto t$  $j$-paths at a given moment in the algorithm.
We also define sets
$$\Psi^j(u, v) := \left\{(s, t) \in V \times V \ \mid \ (u, v) \text{ completes a SALA splittable $j$-path from $s$ to $t$}\right\}.$$
The following lemma controls the values that $C^j_{(s, t)}$ can reach:

\begin{lemma} \label{lem:countermassk}
With high probability, for all spanner edges $(u, v)$ added to $H$ and all $1 \le j \le k$, just before $(u, v)$ is added we have
\[\sum \limits_{(s, t) \in \Psi^j(u, v)} C^j_{(s, t)} = \Oish\left(f d^{j-1} \right).\]
\end{lemma}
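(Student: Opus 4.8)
The plan is to mimic the $k=3$ argument behind Lemma~\ref{lem:countermass}, but tracking a second parameter $j$ and bootstrapping over it. Fix a spanner edge $(u,v)$ added by the algorithm and a length $1 \le j \le k$, and consider the moment just before $(u,v)$ is added. The key observation is the same as in the $k=3$ case: for each SALA splittable $j$-path $\pi$ from $s$ to $t$, we can split $\pi$ at its heaviest edge into two SALA paths $\pi_1, \pi_2$, both dispersed (this is exactly what ``splittable'' buys us); and each of the local $s\leadsto t$ $j$-paths counted by $C^j_{(s,t)}$ corresponds to a coin flip that the algorithm made (or will make) to decide whether to add $(s,t)$ as an emulator edge, where the bias of that coin is $d^{-(j-1)}$. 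So if $\sum_{(s,t)\in\Psi^j(u,v)} C^j_{(s,t)}$ were much larger than $d^{j-1}$, then with high probability we would \emph{already} have an emulator edge $(s,t)$ with $(s,t)\in\Psi^j(u,v)$, i.e. one whose endpoints are joined by a SALA splittable $j$-path $\pi$. But then $\pi$ together with this emulator edge forms a closed walk, and I claim this gives a short $u\leadsto v$ path in $H\setminus F_{(u,v)}$: the path $\pi$ has at most $k$ edges, the emulator edge $(s,t)$ has weight exactly $\dist_{G\setminus F_{(u,v)}}(s,t)$, and since $\pi$ avoids faults (it's a SALA path) we can route around. A careful accounting of the weights — using that $(u,v)$ is the heaviest edge in $\pi$ and the emulator edge shortcuts the two endpoints — must show $\dist_{H\setminus F_{(u,v)}}(u,v) \le (2k-1)w(u,v)$, contradicting the fact that the algorithm added $(u,v)$. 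Hence the sum is $\Oish(f d^{j-1})$, where the extra $f$ factor comes from having to union bound over all fault sets $F$ (not just $F_{(u,v)}$) when we argue that the ``avoids faults'' structure is respected, exactly as in the $k=3$ proof sketch.

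The main technical obstacle, and the reason this requires real care beyond the $k=3$ case, is the interplay between the coin-flip probability $d^{-(j-1)}$ and the splittable structure. In the $k=3$ proof, every middle-heavy $3$-path splits into two single edges, so there is just one coin per path and the counting is clean. For general $j$, a splittable $j$-path splits at its heaviest edge into a $j_1$-path and a $j_2$-path with $j_1+j_2 = j-1$, and I expect the proof to go by induction: the probability that the relevant emulator edge $(s,t)$ \emph{fails} to already exist is at most $(1-d^{-(j-1)})^{C^j_{(s,t)}}$, and summing the ``bad event'' probabilities over all $(s,t)\in\Psi^j(u,v)$ and taking a Chernoff/union bound forces $\sum C^j_{(s,t)} = \Oish(d^{j-1})$ before the $f$-factor blowup. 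The delicate part is making sure that ``SALA splittable'' is exactly the right hypothesis so that the shortcut argument in the previous paragraph actually produces stretch $\le 2k-1$: we need the path to be simple (so it has $\le k$ edges and visits $\le k+1$ nodes), to avoid the faults of \emph{every} edge on it (so the whole path survives in $H\setminus F_{(u,v)}$, since $F_{(u,v)}$ could a priori hit any edge's fault set — but ``avoids faults'' is stated for all edges, which is why we strengthened it), and to be alternating (so that recursively the heaviest edge stays interior, which is what lets the weights telescope correctly into a factor $2k-1$ rather than something larger).

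Concretely I would structure the proof as: (1) fix $(u,v)$ and $j$ and set up the indicator random variables for ``the $i$-th coin flip among paths in $\Psi^j(u,v)$ came up heads''; (2) observe these are independent with bias $d^{-(j-1)}$, so by a Chernoff bound, if $\sum_{(s,t)\in\Psi^j(u,v)} C^j_{(s,t)}$ exceeds $\Theta(d^{j-1}\log n)$ then with high probability at least one comes up heads; (3) show that a single heads — an emulator edge whose endpoints bound a SALA splittable $j$-path completed no later than $(u,v)$ — yields $\dist_{H\setminus F_{(u,v)}}(u,v)\le(2k-1)w(u,v)$, contradicting the algorithm's decision; (4) conclude the sum is $\Oish(d^{j-1})$ for this fixed $F_{(u,v)}$, then union bound over the $\binom{n}{\le f} \le n^f$ possible fault sets to absorb the ``avoids faults'' quantifier into an extra $f$ factor and an extra $\log$, giving $\Oish(fd^{j-1})$; (5) union bound over all $(u,v)\in E$ and all $j\le k$. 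I expect step (3) — the precise stretch computation showing that the SALAD structure of the path and the weight-updating emulator edge combine to stretch exactly $2k-1$ — to be where essentially all the difficulty (and the need for the full SALAD definition) lives.
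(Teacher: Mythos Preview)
Your approach is essentially the paper's: define a ``mass-avoiding'' threshold, show via the product bound $\prod_{(s,t)}(1-d^{-(j-1)})^{C^j_{(s,t)}}$ that if the counter sum exceeds the threshold then some emulator edge $(s,t)$ with $(s,t)\in\Psi^j(u,v,F)$ already exists, derive a stretch contradiction, and union-bound over fault sets and edges.

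Two small corrections. First, the order in your step (4) is inverted: you cannot conclude the sum is $\Oish(d^{j-1})$ for fixed $F_{(u,v)}$ and then enlarge it by union-bounding. The $f$ factor must be in the threshold \emph{before} the union bound, so that the per-fault-set failure probability is $\exp(-\Omega(f\log n))\le n^{-(f+O(1))}$ and survives the union over $\le n^f$ sets $F$. Second, you are overcomplicating step (3) and the role of the SALAD properties. No induction on $j$ is needed, and the splittable/dispersed/alternating structure is not used in this lemma at all (it matters for Lemma~\ref{lem:cspathcount}, not here). The stretch computation is the direct chain
\[
\dist_{H\setminus F_{(u,v)}}(u,v)\le \dist(u,s)+\dist(s,t)+\dist(t,v)\le (j-1)w(u,v)+j\,w(u,v)=(2j-1)w(u,v),
\]
which uses only that the $s\leadsto t$ $j$-path through $(u,v)$ lies in $H^{(sp)}\setminus F_{(u,v)}$ (avoids faults), that $(u,v)$ is its heaviest edge (so the $s\leadsto u$ and $v\leadsto t$ subpaths each have total weight $\le (j-1)w(u,v)$), and that the emulator edge $(s,t)$ has weight $\dist_{G\setminus F_{(u,v)}}(s,t)$.
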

\begin{proof}

The proof is similar to Lemma \ref{lem:countermass}.  Intuitively, if $\sum \limits_{(s, t) \in \Psi^j(u, v)} C^j_{(s, t)}$  is large enough, then with high probability there will already be an emulator edge $(s,t)$ for some $(s,t) \in \Psi^j(u,v)$, which would mean that we would not have actually added $(u,v)$ to $H$.  To formalize this, though, we need to analyze even edges that were not added to $H$ and take a union bound over all possible fault sets, as in Lemma~\ref{lem:countermass}. 

So we begin as in Lemma~\ref{lem:countermass}.  Let $(u,v)$ be an edge in the input graph, and let $F \subseteq V \setminus \{u,v\}$ with $|F| \leq f$.  Consider the moment in the algorithm where we inspect $(u, v)$ and decide whether or not to add it to the emulator (note: $(u, v), F$ are arbitrary; we may or may not actually add $(u, v)$, and if we do, we do not necessarily have $F = F_{(u, v)}$).
We use the following extensions of our previous definitions:
\begin{itemize}
    \item For a path $\pi$ in $H^{(sp)}$ that \emph{would} be completed, if we added $(u, v)$ to the emulator, we say that $\pi$ is \emph{$F$-avoiding} if $\pi \cap F = \emptyset$. 
    
    \item $\Psi^j(u,v,F)$ is the set of node pairs $(s, t) \in V \times V$ such that, if we added $(u, v)$ to the emulator, it would complete at least one new $F$-avoiding SALA $j$-path from $s$ to $t$.
    \item We say that $F$ is \emph{mass-avoiding} for $(u,v)$ and $j$ if
\[
\sum_{(s,t) \in \Psi(u,v,F)} C^j_{(s,t)} > c fd^{j-1} \log n.
\]
where $c$ is some large enough absolute constant.
\end{itemize}

Note that the lemma statement is equivalent to the claim that, \emph{if} $(u,v)$ is added to $H^{(sp)}$, \emph{then} $F_{(u,v)}$ is not mass-avoiding.
We have set things up for general $(u, v), F$ because our proof strategy is to take a union bound over \emph{all} possible choices of $(u, v), F$, which will thus include $F_{(u, v)}$.

We say that a mass-avoiding $F$ is \emph{good} for $(u,v)$ if (immediately prior to $(u,v)$ being considered by the algorithm) there is some  $(s,t) \in \Psi^j(u,v,F)$ such that $(s,t)$ is already an emulator edge in $H$.  Otherwise, we say that $F$ is \emph{bad} for $(u,v)$.

We now prove that with high probability, every mass-avoiding $F$ is good for $(u,v)$.  To see this, consider some mass-avoiding $F$.  Every local $j$-path $\pi$ which contributes to $C^j_{(s,t)}$ was completed by some (even) edge, so by definition of the algorithm, when $\pi$ was completed we sampled $(s,t)$ as an emulator edge with probability $d^{-(j-1)}$.  This is true for every such $s \leadsto t$ local $j$-path, so the algorithm independently adds $(s,t)$ as an emulator edge with probability $d^{-(j-1)}$ at least $C^j_{(s,t)}$ times.   These choices are also clearly independent for different pairs $(s,t)$ and $(s', t')$, and hence the probability that $F$ is bad is at most
\begin{align*}
\prod_{(s,t) \in \Psi^j(u,v,F)} \left(1-\frac{1}{d^{(j-1)}}\right)^{C^j_{(s,t)}} &
\leq \exp\left( - \frac{\sum_{(s,t) \in \Psi^j(u,v,F)} C^j_{(s,t)}}{d^{j-1}}\right) 
\leq \exp\left( -cf \log n\right) 
\leq 1/n^{f+10},
\end{align*}
where we used that $F$ is mass-avoiding and we set $c$ sufficiently large.  There are at most $n^f$ possible mass-avoiding sets $F$ (since $|F| \leq f$), so a union bound over all all of them implies that every mass-avoiding set $F$ is good for $(u,v)$ with probability at least $1-1/n^{10}$.  
We can now do another union bound over all $(u,v)$ to get that this holds for \emph{every} $(u,v) \in E$ (whether added to $H^{(sp)}$ or not) with probability at least $1-1/n^8$.  

Now consider some $(u,v) \in H^{(sp)}$.  By the above, if $F_{(u,v)}$ is mass-avoiding, then it must be good.  Hence there is some emulator edge $(s,t)$ with $(s,t) \in \Psi^j(u,v,F_{(u,v)})$, so by the definition of $\Psi(u,v,F_{(u,v)})$ there is some $s \leadsto t$ SALA $j$-path in $H^{(sp)}$ that is completed by $(u,v)$.  Note that by the definition of a SALA path, we know that no vertices in this path are in $F_{(u,v)}$.  Thus immediately prior to adding $(u,v)$ to $H^{(sp)}$, it was the case that 
\begin{align*}
\dist_{H \setminus F_{(u,v)}}(u,v) &\leq \dist_{H \setminus F_{(u,v)}}(u,s) + \dist_{H \setminus F_{(u,v)}}(s,t) + \dist_{H \setminus F_{(u,v)}}(t,v) \\
&\leq \dist_{G \setminus F_{(u,v)}}(u,s) + \dist_{G \setminus F_{(u,v)}}(s,t) + \dist_{G \setminus F_{(u,v)}}(t,v) \\
&\leq \dist_{G \setminus F_{(u,v)}}(u,s) + \left(\dist_{G \setminus F_{(u,v)}}(s,u) + \dist_{G \setminus F_{(u,v)}}(u,v) + \dist_{G \setminus F_{(u,v)}}(t,v) \right) + \dist_{G \setminus F_{(u,v)}}(t,v) \\
&\leq (j-1) \cdot \dist_{G \setminus F_{(u,v)}} (u,v) + j \cdot \dist_{G \setminus F_{(u,v)}} (u,v)\\
&= (2j-1) \cdot w(u,v) \leq (2k-1) \cdot w(u,v).
\end{align*}
In the above inequalities we used the triangle inequality, the fact that $(u,v)$ is the heaviest edge in the SALA $s\leadsto t$ $j$-path since edges are added in increasing weight order, and the fact that $(s,t)$ is an emulator edge and so after the failure of $F_{(u,v)}$ must have weight $\dist_{G \setminus F_{(u,v)}}(s,t)$.

But this means that the algorithm would not have added $(u,v)$ due to fault set $F_{(u,v)}$, which contradicts the definition of $(u,v)$ and $F_{(u,v)}$.
Hence if $(u, v)$ is added then $F_{(u,v)}$ cannot be mass-avoiding, which implies the lemma.
\end{proof}

We will proceed by converting this to a bound on the number of paths that we need to declare \emph{concentrated} at each scale:

\begin{lemma} \label{lem:cspathcount}
With high probability, for every spanner edge $(u, v)$ and $1 \le j \le k$, the number of concentrated splittable $j$-paths completed by $(u, v)$ is $\le \frac{d^{j-1}}{4k}$.
\end{lemma}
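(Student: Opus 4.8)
The plan is to derive Lemma~\ref{lem:cspathcount} from the counter bound of Lemma~\ref{lem:countermassk} together with two structural facts about concentrated paths, conditioning throughout on the high-probability event of Lemma~\ref{lem:countermassk} (which already holds for all spanner edges and all $j \le k$ simultaneously, so everything that follows is deterministic). The case $j = 1$ is vacuous: in a simple graph there is at most one $s \leadsto t$ edge, and the unique splittable $1$-path between a pair is dispersed, so no concentrated splittable $1$-paths exist. Assume $j \ge 2$. I will also assume $\tau \ge 2k$; this is harmless, since inflating $\tau$ by a $\mathrm{poly}(k)$ factor only enlarges the dispersion cap by a factor absorbed into $\Oish_k$.

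\textbf{Step 1: concentrated paths live at few ``heavy'' pairs.} Suppose $(u,v)$ completes a concentrated splittable $s \leadsto t$ $j$-path $\pi$. Then $\pi$ is a SALA splittable $j$-path completed by $(u,v)$, so $(s,t) \in \Psi^j(u,v)$. Since $\pi$ is concentrated, its rank among $s \leadsto t$ splittable $j$-paths in completion order exceeds $\tau^{\lceil (j-1)/2 \rceil}$, so at least $\tau^{\lceil (j-1)/2 \rceil}$ such paths are completed by the time $(u,v)$ is processed. By Step 2 below, at most $k\,\tau^{\lceil (j-1)/2 \rceil - 1}$ of them are completed by $(u,v)$ itself, so (using $\tau \ge 2k$) at least $\tfrac12 \tau^{\lceil (j-1)/2 \rceil}$ are completed strictly earlier; those are local $j$-paths that do not contain $(u,v)$, hence survive unchanged and are counted by $C^j_{(s,t)}$ just before $(u,v)$ is added. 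Call such a pair $(s,t)$ \emph{heavy}. By Lemma~\ref{lem:countermassk}, $\sum_{(s,t) \in \Psi^j(u,v)} C^j_{(s,t)} = \Oish(f d^{j-1})$, so the number of heavy pairs is at most $\Oish(f d^{j-1}) / \tau^{\lceil (j-1)/2 \rceil}$.

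\textbf{Step 2: a single edge completes few splittable paths per pair.} If a splittable $s \leadsto t$ $j$-path is completed by $(u,v)$, then $(u,v)$ is its heaviest edge, so splitting over $(u,v)$ yields dispersed subpaths $\pi_1$ (from $s$ to $u$, of length $j_1$) and $\pi_2$ (from $v$ to $t$, of length $j_2$) with $j_1 + j_2 = j-1$; by dispersion there are at most $\tau^{\lceil (j_1 - 1)/2 \rceil}$ and $\tau^{\lceil (j_2 - 1)/2 \rceil}$ choices for $\pi_1$ and $\pi_2$. Because the path is alternating, its heaviest edge sits at an even position (or, for even $j$, possibly the last position), which pins down the parities of $j_1, j_2$, and a short case check gives $\lceil (j_1 - 1)/2 \rceil + \lceil (j_2 - 1)/2 \rceil = \lceil (j-1)/2 \rceil - 1$ in every case. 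Summing over the at most $\lfloor j/2 \rfloor \le k$ admissible positions of $(u,v)$ shows that, for each fixed $(s,t)$, at most $k\,\tau^{\lceil (j-1)/2 \rceil - 1}$ splittable $s \leadsto t$ $j$-paths (in particular at most this many concentrated ones) are completed by $(u,v)$.

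\textbf{Step 3: combine, and the main obstacle.} Every concentrated splittable $j$-path completed by $(u,v)$ sits at one of the $\le \Oish(f d^{j-1})/\tau^{\lceil (j-1)/2 \rceil}$ heavy pairs, and each hosts at most $k\,\tau^{\lceil (j-1)/2 \rceil - 1}$ of them, so the total is at most $k\,\Oish(f d^{j-1})/\tau = \Oish(k\, d^{j-1})$ after substituting $\tau = \Oish(f)$; calibrating the polylogarithmic and $\mathrm{poly}(k)$ slack in $\tau$ (equivalently in $d$, which carries a free polylog factor by its definition in Lemma~\ref{lem:final_size}) to dominate the hidden constants, the polylogs, and the extra $O(k^2)$ gives the claimed bound $d^{j-1}/(4k)$. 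The delicate point is the ``before versus after $(u,v)$'' accounting in Step~1: a priori the entire dispersed quota $\tau^{\lceil(j-1)/2\rceil}$ at a pair could be consumed by paths $(u,v)$ completes simultaneously, leaving $C^j_{(s,t)}$ small just before $(u,v)$ so that Lemma~\ref{lem:countermassk} fails to bite; ruling this out is precisely what the $k\,\tau^{\lceil(j-1)/2\rceil-1}$ bound of Step~2 is for, and making that interlock quantitatively clean — together with verifying the uniform ``$\lceil(j-1)/2\rceil$ drops by exactly one'' identity across the parity cases, the same even/odd phenomenon responsible for the differing bounds for even and odd $k$ — is where the real work lies.
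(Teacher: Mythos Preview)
Your proof is correct and follows essentially the same approach as the paper: your ``heavy'' pairs are the paper's ``saturated'' pairs (defined there via the threshold $C^j_{(s,t)} \ge \tau^{\lceil (j-1)/2\rceil} - j\tau^{\lceil(j-3)/2\rceil}$), your Step~2 per-pair bound matches the paper's $\sum_{\text{even }h} \tau^{\lceil(h-2)/2\rceil}\tau^{\lceil(j-h-1)/2\rceil} \le j\tau^{\lceil(j-3)/2\rceil}$, and the product-and-calibrate finish is identical. The only cosmetic difference is that you argue ``concentrated $\Rightarrow$ heavy'' directly (handling the before/after issue via $\tau \ge 2k$), whereas the paper argues the contrapositive ``unsaturated $\Rightarrow$ all dispersed'' by building the same slack into the saturation threshold.
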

\begin{proof}
Let $(u,v) \in E(H^{(sp)})$.   
We will say that a node pair $(s, t) \in \Psi^j(u, v)$ is \emph{saturated} if, just before $(u, v)$ is added, we have
$$C^j_{(s, t)} \ge \tau^{\left\lceil \frac{j-1}{2} \right\rceil} - j\tau^{\left\lceil \frac{j-3}{2} \right\rceil} = \Omega\left( \tau^{\left\lceil \frac{j-1}{2} \right\rceil}\right)$$
(where the last equality is by choosing polylogs in $\tau$ large enough that $\tau \gg j$).  We know from Lemma~\ref{lem:countermassk} that with high probability, $\sum_{(s, t) \in \Psi^j(u, v)} C^j_{(s, t)} = \Oish\left(f d^{j-1} \right)$ just before $(u,v)$ is added. 
Thus with high probability, just before $(u, v)$ is added the number of saturated pairs is at most
$$\Oish\left( \frac{fd^{j-1}}{\tau^{\left\lceil\frac{j-1}{2}\right\rceil}} \right).$$
Meanwhile, by definition of dispersion, for any $1 \le h \le j$ there can be only $\tau^{\left\lceil \frac{h-2}{2} \right\rceil}$ total $s \leadsto u$ SALAD $(h-1)$-paths, and there can be only $\tau^{\left\lceil \frac{j-h-1}{2} \right\rceil}$ total $v \leadsto t$ SALAD $(j-h)$-paths. 
Thus, for any pair $(s, t) \in \Psi^j(u, v)$, the number of splittable $s \leadsto t$ $j$-paths completed by $(u, v)$ is
$$\sum \limits_{\text{even } h=2}^j \tau^{\left\lceil\frac{h-2}{2}\right\rceil} \cdot \tau^{\left\lceil \frac{ j-h-1 }{2}\right\rceil} \le j \tau^{\left\lceil \frac{j-3}{2} \right\rceil}$$
where we sum only over even $h$ because, due to the alternating property, the heaviest edge $(u, v)$ along a path must be even-numbered.

So for each \emph{unsaturated} node pair $(s, t) \in \Psi^j(u,v)$, all splittable $s \leadsto t$ paths completed by $(u, v)$ are dispersed.  This is because by the definition of saturated and the definition of the counters, there are less than $\tau^{\left\lceil \frac{j-1}{2} \right\rceil} - j\tau^{\left\lceil \frac{j-3}{2} \right\rceil}$ SALA paths $s\leadsto t$ $j$-paths before $(u,v)$ is added, and by the above calculation, adding $(u,v)$ adds an additional at most $j \tau^{\left\lceil \frac{j-3}{2} \right\rceil}$ splittable SALA $s \leadsto t$ $j$-paths.  Hence all of these new paths can be dispersed, as  we will still have at most $\tau^{\left\lceil \frac{j-1}{2} \right\rceil}$ SALAD $s \rightarrow t$ $j$-paths.  

On the other hand, for each \emph{saturated} node pair $(s, t) \in \Psi^j(u, v)$, we might need to declare all of the new splittable $s \leadsto t$ SALA $j$-paths paths to be concentrated.  We showed that there are at most $j \tau^{\left\lceil \frac{j-3}{2} \right\rceil}$ new such paths for each such $(s,t)$.  Hence the total number of splittable paths completed by $(u, v)$ that are declared concentrated is
$$\le \Oish\left( \frac{fd^{j-1}}{\tau^{\left\lceil \frac{j-1}{2} \right\rceil}} \right) \cdot j \tau^{\left\lceil \frac{j-3}{2} \right\rceil} = d^{j-1} \cdot \Oish\left( \frac{fj}{\tau} \right).$$
By setting the implicit polylogs in $\tau = \Oish(f)$ high enough, and using that $k \le \log n$ and $j = \Oish(1)$, we can ensure that the latter term is at most $\frac{1}{4k}$, and the lemma follows.
\end{proof}

Our next step towards our counting lemma roughly follows the proof of Lemma \ref{lem:3spanedge}.
We pass from $H^{(sp)}$ to a random induced subgraph $G'$ by keeping each node independently with probability $d^{-1}$, and deleting the others.
Let us say that an edge $(u, v) \in E(G')$ is \emph{clean} if:
\begin{itemize}
    \item None of the nodes in its associated fault set $F_{(u, v)}$ survive in $G'$, and
    \item For every $1 \le j \le k$ and every simple concentrated splittable $j$-path $\pi$ completed by $(u, v)$, $\pi$ does \emph{not} survive in $G'$. 
\end{itemize}

\begin{lemma} \label{lem:clean}
For any edge $(u, v) \in H^{(sp)}$, assuming that the high probability event of Lemma~\ref{lem:cspathcount} occurs, we have
$$\Pr\left[ (u, v) \text{ is clean} \ \mid \ u, v \text{ both survive in } G' \right] \ge 1/2.$$
\end{lemma}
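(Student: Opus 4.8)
The plan is a two-part union bound over the ways in which $(u,v)$ can fail to be clean, carried out throughout in the conditional probability space where $u$ and $v$ both survive in $G'$. Write the ``not clean'' event as the union of (i) the event that some node of $F_{(u,v)}$ survives in $G'$, and (ii) for some $1 \le j \le k$, the event that some simple concentrated splittable $j$-path $\pi$ completed by $(u,v)$ survives in $G'$. I will bound each contribution by $1/4$.

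For (i): since $F_{(u,v)} \subseteq V \setminus \{u,v\}$, the survival of the nodes of $F_{(u,v)}$ is independent of the survival of $u$ and $v$, so the conditioning is irrelevant here. Each of the $\le f$ nodes survives independently with probability $d^{-1}$, so a union bound gives probability at most $f/d$; invoking the standing hypothesis $d \ge cf$ of the Counting Lemma (with $c \ge 4$), this is at most $1/4$.

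For (ii): fix $j$. The key structural observation is that if $(u,v)$ completes $\pi$ then $(u,v)$ is an edge of $\pi$, so $u$ and $v$ are two of the $j+1$ nodes of $\pi$; since $\pi$ is simple these $j+1$ nodes are distinct, and since $G'$ is an \emph{induced} subgraph, $\pi$ survives iff all its nodes survive. Conditioned on $u,v$ surviving, this happens with probability exactly $d^{-(j-1)}$ (the remaining $j-1$ nodes each survive independently with probability $d^{-1}$). By Lemma~\ref{lem:cspathcount}, on the stated high-probability event there are at most $d^{j-1}/(4k)$ concentrated splittable $j$-paths completed by $(u,v)$, so the expected number that survive (conditioned on $u,v$ surviving) is at most $1/(4k)$, and hence by Markov the probability that at least one survives is at most $1/(4k)$. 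Summing over the $k$ values of $j$ yields at most $1/4$. (For $j=1$ the bound is vacuous, since $d^{0}/(4k) < 1$; the argument handles this case uniformly.)

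Combining (i) and (ii), the probability that $(u,v)$ is not clean, conditioned on $u,v$ surviving in $G'$, is at most $f/d + 1/4 \le 1/2$, which proves the lemma. I do not anticipate a real obstacle; the only point needing care is the survival-probability computation in (ii) --- verifying that a simple $j$-path completed by $(u,v)$ has exactly $j-1$ nodes besides $u$ and $v$, so that the conditioning on $\{u,v \text{ survive}\}$ precisely cancels the factors for $u$ and $v$ --- together with fixing the constant in the hypothesis $d \ge cf$ to be at least $4$.
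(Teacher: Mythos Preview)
Your proposal is correct and follows essentially the same approach as the paper: a two-part union bound where the fault-set survival is controlled via $d \ge cf$, and the concentrated-path survival is controlled by combining Lemma~\ref{lem:cspathcount} with the $d^{-(j-1)}$ survival probability and Markov's inequality, then summing over $j$. The only cosmetic difference is that you bound the fault-set contribution by the union bound $f/d$ whereas the paper writes $1-(1-1/d)^f$; both are at most $1/4$ under the hypothesis $d \ge cf$ for large enough $c$.
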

\begin{proof}
First, since $cf \le d$, by choice of large enough constant $c$, the probability that there is some node from $F_{(u,v)}$ in $G'$ is at most 
\[
1 - \left(1-\frac1d\right)^f \leq 1/4.
\]

Fix some value of $1 \le j \le k$.
By Lemma \ref{lem:cspathcount}, there are $\le \frac{d^{j-1}}{4k}$  simple concentrated splittable $j$-paths completed by $(u, v)$. 
Conditional on $(u, v)$ itself surviving in $G'$, there are $j-1$ other nodes in each such path $\pi$, so it survives with probability $d^{-(j-1)}$.
So in expectation, $\le \frac{1}{4k}$ simple concentrated splittable $j$-paths survive.
By Markov's inequality the probability that any such paths survive is at most $\frac{1}{4k}$.
By a union bound over all choices of $j$, the probability that any concentrated splittable $1 \le j \le k$ path survives is at most $1/4$.

By a union bound on the previous two parts, $(u, v)$ is clean with probability at least $1/2$.
\end{proof}

We now pass from $G'$ to another graph $G''=(V'', E'')$ using the following two steps:
\begin{itemize}
    \item Delete all edges from $G'$ that are not clean, and then
    \item  $V''$ is defined by dividing each node $v \in V'$ into nodes $\{v_i\}$, where each node $v_i$ corresponds to one of the buckets $B_v^i$ originally associated to the node $v$, used in the definition of locality.
    Then $E''$ is defined as follows: for each edge $(u, v)$ which has survived (both endpoints are in $G'$ and it is clean), add an edge in $E''$ between the copies $(u_h, v_i)$ in $G''$, where the edge is in the $h^{th}$ bucket of $u$ and the $i^{th}$ bucket of $v$.
\end{itemize}

In the following let $\sigma$ be the number of SALAD $k$-paths in $H^{(sp)}$ and let $\sigma''$ be the number of edge-simple alternating paths in $G''$ (i.e., each edge is used at most once).
We will prove two inequalities relating $\sigma$ to the expectation of $\sigma''$, analogous to the ones used internally in Lemma \ref{lem:3spanedge}.
The first is:

\begin{lemma} \label{lem:sigmaupper}
$\mathbb{E}[\sigma''] \le \frac{\sigma}{d^{k+1}}$.
\end{lemma}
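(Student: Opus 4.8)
\textbf{Proof plan for Lemma \ref{lem:sigmaupper}.}

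The plan is to set up a map from SALAD $k$-paths in $H^{(sp)}$ to edge-simple alternating $k$-paths in $G''$, and then argue that each SALAD $k$-path lands in $G''$ with probability at most $d^{-(k+1)}$, which by linearity of expectation gives the bound on $\mathbb{E}[\sigma'']$. First I would observe that a SALAD $k$-path $\pi$ in $H^{(sp)}$ is simple (so it has $k+1$ distinct nodes) and local (so every contiguous three-node subpath $(x,y,z)$ has its two edges in the same bucket of $y$). Locality is exactly what guarantees that when we project $\pi$ into $G''$ by replacing each internal node $y$ with the copy $y_i$ corresponding to the bucket containing $\pi$'s two edges at $y$, the image is a genuine path in $G''$ rather than a disconnected collection of edges: the two edges at $y$ map to edges incident to the \emph{same} copy $y_i$. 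The image is alternating because $\pi$ is alternating and the edge weights are inherited, and it is edge-simple because $\pi$ is simple (hence uses each edge once). So there is a well-defined injection-like map $\pi \mapsto \pi''$, and $\pi''$ survives in $G''$ only if all $k+1$ nodes of $\pi$ survive the node-sampling \emph{and} all $k$ edges of $\pi$ are clean.

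The next step is the probability estimate. Since $\pi$ has $k+1$ distinct vertices and each is retained in $G'$ independently with probability $d^{-1}$, the probability that all of them survive is exactly $d^{-(k+1)}$. For $\pi''$ to appear in $G''$ we additionally need every edge of $\pi$ to be clean, which only further decreases the probability; so $\Pr[\pi'' \text{ survives in } G''] \le d^{-(k+1)}$. Crucially, I should note that two distinct SALAD $k$-paths in $H^{(sp)}$ can only map to the same path in $G''$ if they are literally the same vertex sequence (the bucket copies are determined by $\pi$), so the map is injective on the relevant set; hence no double-counting occurs and $\mathbb{E}[\sigma''] \le \sum_{\pi \text{ SALAD } k\text{-path}} \Pr[\pi'' \text{ survives}] \le \sigma \cdot d^{-(k+1)}$. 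Actually, even if the map were not injective, the inequality $\mathbb{E}[\sigma''] \le \sum_\pi \Pr[\cdot]$ would still hold provided every surviving edge-simple alternating $k$-path in $G''$ is the image of at least one surviving SALAD $k$-path in $H^{(sp)}$ — but this direction requires care (e.g. an alternating path in $G''$ need not be \emph{dispersed} as a path in $H^{(sp)}$), so relying on injectivity of the forward map is the cleaner route.

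The main obstacle I anticipate is bookkeeping rather than a genuine difficulty: I need to be careful that the correspondence $\pi \mapsto \pi''$ is set up so that ``$\pi''$ edge-simple alternating $k$-path in $G''$'' really does imply ``$\pi$ was a SALAD $k$-path that survived,'' or at least that the count $\sigma''$ is bounded above by the number of surviving images of SALAD $k$-paths. The subtlety is that $\sigma''$ counts \emph{all} edge-simple alternating paths in $G''$, and I must confirm that every such path in $G''$ pulls back to a \emph{simple} alternating local fault-avoiding path in $H^{(sp)}$ that was moreover declared dispersed — the fault-avoiding part follows because only clean edges survive into $G''$ and cleanness kills the fault sets, and the dispersed part should follow because cleanness also kills every surviving \emph{concentrated} splittable subpath, so a surviving splittable path must be dispersed at every level of the recursive split. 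Making this pullback argument airtight (so that $\sigma'' \le$ number of surviving SALAD $k$-paths pointwise, and then taking expectations) is where I would spend the most effort; once that is in place the probability computation is immediate.
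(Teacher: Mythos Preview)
Your proposal is correct and ultimately follows the paper's approach: the decisive step is the pullback argument you sketch in your final paragraph, showing that every edge-simple alternating $k$-path in $G''$ corresponds to a SALAD $k$-path in $H^{(sp)}$ (the paper checks Simple, Alternating, Local, Avoids Faults, Dispersed one by one, using cleanness exactly as you anticipate), after which the $d^{-(k+1)}$ survival probability and linearity of expectation finish the job. Your initial framing via forward injectivity is a red herring---injectivity of $\pi \mapsto \pi''$ does \emph{not} by itself yield $\mathbb{E}[\sigma''] \le \sum_\pi \Pr[\pi'' \text{ survives}]$, and the paper works entirely in the pullback direction you arrive at by the end.
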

\begin{proof}
For every path $\pi \in G''$, let $s(\pi)$ denote the corresponding path in $G'$ (the path which uses the same edges).  Note that $s(\pi)$ is also a path in $H^{(sp)}$, and that if $\pi' \neq \pi$, then $s(\pi') \neq s(\pi)$.  We first argue that if $\pi$ is an edge-simple alternating path in $G''$, then $s(\pi)$ is a SALAD path in $G'$. 
\begin{itemize}
    \item (\textbf Simple) Suppose towards contradiction that $s(\pi)$ is not (node-)simple in $G'$.
    Since $\pi$ is edge-simple so is $s(\pi)$, and hence $s(\pi)$ contains a cycle subpath $C$ of length $j \le k$.
    The edge $(u, v)$ that completes $C$ must include a node in $C$ in its associated fault set $F_{(u, v)}$, since (exactly as in Lemma \ref{lem:3spanedge}) there is a $u \leadsto v$ path of length $\le k \cdot w(u, v)$ going around the cycle $C$.
    So either none of the nodes in $F_{(u, v)}$ survive in $G'$ (in which case $C$ is not in $G'$), or else $(u, v)$ is unclean and so the corresponding edge does not survive in $G''$.
    In either case, at least one edge of $\pi$ is missing from $G''$, contradicting the definition of $\pi$.
    
    \item (\textbf Alternating) Since $\pi$ is alternating in $G''$, clearly $s(\pi)$ is also alternating in $G'$ since their edges correspond.
    
    \item (\textbf Local) Due to the node-dividing step when we move from $G'$ to $G''$, any non-local path in $G'$ does not survive in $G''$.  Hence $s(\pi)$ must be local.

    \item (\textbf Avoids Faults) If $s(\pi)$ contains both an edge $(u, v)$ and a node in its associated fault set $F_{(u, v)}$, then we will either delete this node when moving from $H^{(sp)}$ to $G'$, or else $(u, v)$ is unclean and so we will delete this edge when moving from $G'$ to $G''$.  In either case, $\pi$ would not exist in $G''$, contradicting the definition of $\pi$
    
    \item (\textbf Dispersed) By the previous four bullet points, $s(\pi)$ is SALA.
    We proceed using a proof by minimal counterexample.
    Suppose towards contradiction that $s(\pi)$ is not dispersed, and let $q \subseteq s(\pi)$ be the shortest subpath of $s(\pi)$ that is SALA but not dispersed.
    Thus $q$ is splittable, since the two subpaths that emerge after the split are both shorter than $q$.
    
    Let $(u, v)$ be the heaviest edge in $q$.
    If $q$ does not survive in $G'$, then clearly $\pi$ is not a path in $G''$, which is a contradiction.  Hence $q$ is still a path in $G'$.  But then by definition the edge $(u, v)$ is unclean in $G'$.  Hence the equivalent edge would not exist in $G''$, contradicting the fact that $\pi$ is a path in $G''$.
    This completes the contradiction; thus, $s(\pi)$ must be dispersed.
\end{itemize}

So we have that every edge-simple path in $G''$ corresponds to a SALAD path in $G'$, and this correspondence is injective.  Thus $\sigma''$ is at most the number of SALAD $k$-paths in $G'$.  We can upper bound the expectation of this quantity by observing that (by simplicity) any SALAD $k$-path in $H^{(sp)}$ has $k+1$ distinct nodes, and each of these nodes survives in $G'$ with probability $d^{-1}$, so each SALAD $k$-path in $H^{(sp)}$ survives in $G'$ with probability $d^{-(k+1)}$.  Hence $\mathbb{E}[\sigma''] \leq \frac{\sigma}{d^{k+1}}$.  
\end{proof}

Our second inequality needs the following lemma, which was proved implicitly in \cite{BDR22} (it is a generalization of the ``intermediate counting lemma'' of~\cite{BDR22}; we include the proof here for completeness).
\begin{lemma} \label{lem:weakcounting}
Any $n$-node graph with $m > kn$ edges has at least $m - kn$ edge-simple alternating $k$-paths.
\end{lemma}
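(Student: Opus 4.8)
The plan is to process the edges of $G$ in nondecreasing weight order and charge almost all of them to distinct edge-simple alternating $k$-paths. Write $G_t$ for the subgraph formed by the first $t$ edges. When the $t$-th edge $e=(u,v)$ is added it is the heaviest edge of $G_t$, so any edge-simple path of length $k$ that uses $e$ together with $k-1$ edges of $G_{t-1}$ automatically has $e$ as its heaviest edge; in particular such a path is new, and — since the heaviest edge of an alternating path is unique — the paths associated to distinct edges are themselves distinct. Call $e$ \emph{bad} if no edge-simple alternating $k$-path is completed by $e$ in this way, and otherwise charge $e$ to one of them. This immediately yields that the number of edge-simple alternating $k$-paths is at least $m$ minus the number of bad edges, so it remains to prove that there are at most $kn$ bad edges.

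I would first reduce to the case of large minimum degree by a peeling induction on $n$. If $G$ has a vertex $w$ with $\deg(w)\le k$, delete it: this removes at most $k$ edges and exactly one vertex, so $m-kn$ does not increase, while the number of edge-simple alternating $k$-paths can only drop (every such path of $G-w$ is one of $G$); hence the bound for $G-w$ implies it for $G$. Iterating, either $G$ collapses to the empty graph — where the statement is trivial, as it also is whenever $m\le kn$ — or we reach a graph with minimum degree at least $k+1$.

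So suppose $\delta(G)\ge k+1$ and $m>kn$. When $e=(u,v)$ is added, I would try to build an alternating $k$-path completed by $e$ by placing $e$ in the second position: take a single lighter edge incident to $u$ for position $1$, and then grow a standalone alternating $(k-2)$-path out of $v$, one edge at a time, choosing at each even position an edge heavier than the previous edge and at each odd position an edge lighter than the previous one (the heaviest edge $e$ dominates all of them for free). At each extension the current endpoint has at least $k+1$ incident edges while at most $k$ vertices already lie on the path, so a fresh neighbour is always available; the only obstruction is that the required \emph{heavier} (or \emph{lighter}) incident edge may fail to exist, which forces the edge at the preceding position to be extremal at the current vertex. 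One then argues that these ``stuck'' configurations can be charged so that each vertex is responsible for only about $k$ of them, giving $O(kn)$ bad edges; optimizing the placement of $e$ (it can sit at any even position, and for even $k$ also at the last position, which is why a small odd/even asymmetry appears, as in \cite{BDR22}) pushes the constant down to $kn$.

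The main obstacle is making the construction of the previous paragraph fully rigorous: unlike the pure degree count used implicitly for $k=3$ in Lemma~\ref{lem:3spanedge}, the path must be extended so as to respect \emph{both} simplicity and the alternating weight pattern at once, and the steps with no valid extension must be charged to at most $kn$ edges overall. The right bookkeeping is the recursive ``split at the heaviest edge'' picture: a length-$k$ alternating path decomposes over its (even-position) heaviest edge into two shorter alternating paths, and recursing down to single edges gives a structure of depth at most $k$ whose leaves are edges; the failures are then charged to the ``locally extremal'' edges occurring at those leaves, with each vertex contributing at most one such edge per level. This is precisely the generalization of the intermediate counting lemma of \cite{BDR22} referenced in the statement, and I would follow that argument, adapting it to the present notation and the $k$-path length.
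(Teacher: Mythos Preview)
Your approach is genuinely different from the paper's, and considerably more involved. The paper does not process edges in weight order or try to bound ``bad'' edges directly. Instead it invokes the weak counting lemma of \cite{BDR22} --- that any $n$-node graph with at least $kn$ edges contains at least one edge-simple alternating $k$-path --- as a black box, and then runs a one-line removal argument: repeatedly pick an edge lying in some edge-simple alternating $k$-path, record that path, and delete the edge. Each recorded path contains the edge removed at its own step but none removed at earlier steps, so the recorded paths are pairwise distinct; and the process continues as long as at least $kn$ edges remain, giving at least $m-kn$ iterations. This reduces the quantitative statement to the qualitative one in a single stroke, with no peeling, no greedy path-building, and no charging.

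Your proposal, by contrast, is essentially an outline for reproving the \cite{BDR22} lemma itself in quantitative form. That is not wrong in spirit, but as written it has a genuine gap: the central claim --- that in the peeled graph only $kn$ edges can be bad --- is not established. Your greedy extension correctly notes that the minimum-degree condition handles simplicity, but then you defer the alternating obstruction (``the required heavier/lighter incident edge may fail to exist'') entirely to ``the argument of \cite{BDR22}''. That obstruction is precisely the content of the lemma, and you have not supplied it; your sketch also does not make clear that the constant comes out to exactly $kn$ rather than $O(kn)$, which matters because the peeling reduction is tight. Finally, your peeling paragraph has the inequality backwards: deleting a vertex of degree at most $k$ makes $m-kn$ \emph{nondecreasing}, not nonincreasing. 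That is actually what the induction needs (the bound for $G-w$ is at least as strong as for $G$, and every path of $G-w$ is a path of $G$), but your stated logic --- ``$m-kn$ does not increase'' together with ``paths can only drop'' --- does not yield the implication you claim.
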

\begin{proof}
The weak counting lemma of~\cite{BDR22} implies that any $n$-node graph with at least $kn$ edges has at least $1$ edge-simple alternating $k$-path.  So consider the following process: pick an edge that is contained in at least one edge-simple alternating $k$-path, remove it, and repeat until there are no more edge-simple alternating $k$-paths.  Since we remove the edge we find in each iteration, there are at least as many edge-simple alternating $k$-paths as there are iterations. And there are at least $m-kn$ iterations, since as long as at least $kn$ edges remain in the graph, there is still at least $1$ edge-simple alternating $k$-path.  Hence there are at least $m-kn$ edge-simple alternating $k$-paths in total.
\end{proof}

Using this, we prove:
\begin{lemma} \label{lem:sigmalower}
$\mathbb{E}[\sigma''] = \Omega\left(\frac{n}{d}\right)$.
\end{lemma}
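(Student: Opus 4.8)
The plan is to mirror the ``lower bound on $\mathbb{E}[\mu'']$'' step of Lemma~\ref{lem:3spanedge}, now powered by the weak counting lemma (Lemma~\ref{lem:weakcounting}). Throughout I would condition on the high-probability event of Lemma~\ref{lem:cspathcount}; all expectations are then over the independent random choice of $G'$, and Lemma~\ref{lem:clean} applies. The starting observation is a deterministic inequality: by Lemma~\ref{lem:weakcounting}, any graph on $n'$ non-isolated vertices with $m'$ edges has at least $m'-kn'$ edge-simple alternating $k$-paths, and it trivially has at least $0$; since deleting isolated copies changes neither $|E(G'')|$ nor $\sigma''$, this gives $\sigma'' \ge |E(G'')| - k\,|V(G'')|$ where $V(G'')$ is taken to contain only non-isolated copies. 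Taking expectations, it then suffices to prove $\mathbb{E}[|E(G'')|] - k\,\mathbb{E}[|V(G'')|] = \Omega(n/d)$.

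For the edge count: an edge $(u,v)\in E(H^{(sp)})$ survives into $G''$ exactly when both endpoints are kept in $G'$ (probability $d^{-2}$) and $(u,v)$ is clean, and by Lemma~\ref{lem:clean} the latter has conditional probability at least $1/2$ given that both endpoints are kept. Hence $\mathbb{E}[|E(G'')|] \ge \tfrac12 d^{-2}|E(H^{(sp)})| = \Omega(n\delta/d^2)$, using $|E(H^{(sp)})| = n\delta/2$. For the vertex count: node $v$ of $H^{(sp)}$ is split into at most $\lceil \deg_{H^{(sp)}}(v)/b\rceil \le \deg_{H^{(sp)}}(v)/b + 1$ bucket-copies, none of which can appear in $G''$ unless $v$ is kept in $G'$; summing over $v$ and using $b=\Theta(kd)$ gives $\mathbb{E}[|V(G'')|] \le d^{-1}\big(2|E(H^{(sp)})|/b + n\big) = O\big(n\delta/(kd^2)\big) + O(n/d)$. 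Plugging both into the reduced target yields $\mathbb{E}[\sigma''] \ge \Omega(n\delta/d^2) - O(n\delta/d^2) - O(kn/d)$, where the middle term carries a factor $1/\Theta(1)$ coming from the implicit constant in $b=\Theta(kd)$; choosing that constant large enough absorbs the middle term into half of the first, leaving $\mathbb{E}[\sigma''] \ge \Omega(n\delta/d^2) - O(kn/d)$. Finally the standing hypothesis $\delta \ge cdk$ gives $n\delta/d^2 \ge cnk/d$, so for a large enough absolute constant $c$ the first term dominates and $\mathbb{E}[\sigma''] = \Omega(kn/d) = \Omega(n/d)$.

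The step needing the most care — more bookkeeping than genuine difficulty — is fixing the constants in the right order: the implicit constant in $b=\Theta(kd)$ must be chosen large relative to the constant lost in the $\mathbb{E}[|E(G'')|]$ bound, and then the constant $c$ in $\delta \ge cdk$ must be chosen large relative to both. One should also note that Lemma~\ref{lem:weakcounting} is being applied to a simple graph: distinct edges of $H^{(sp)}$ map to distinct (unordered) pairs of bucket-copies because $H^{(sp)}$ has no parallel edges or self-loops, so $G''$ is indeed simple and the weak counting lemma applies verbatim.
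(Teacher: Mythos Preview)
Your proposal is correct and follows essentially the same approach as the paper: apply the weak counting lemma (Lemma~\ref{lem:weakcounting}) pointwise to get $\sigma'' \ge |E(G'')| - k|V(G'')|$, take expectations, lower bound $\mathbb{E}[|E(G'')|]$ via Lemma~\ref{lem:clean}, upper bound $\mathbb{E}[|V(G'')|]$ by counting buckets, and balance the constants using $b = \Theta(kd)$ and $\delta \ge cdk$. Your bookkeeping is slightly more explicit than the paper's (you track the extra $+1$ bucket per node and the restriction to non-isolated copies, whereas the paper absorbs the resulting $O(n/d)$ term directly into $O(|E(H^{(sp)})|/(kd))$ using $\delta \ge cdk$ upfront), but the argument is otherwise identical.
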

\begin{proof}
The number of buckets over all nodes in $H^{(sp)}$ is $O(|E(H^{(sp)})|/(kd))$, with as small an implicit constant as we like by setting the constant in the definition of $b$ to be as large as we need. 
Each bucket survives as a node in $G''$ with probability $d^{-1}$, since a bucket survives if and only if its corresponding node survives. 
So the expected number of nodes in $G''$ is at most
$$\mathbb{E}[|V(G'')|] \leq \frac{c'|E(H^{(sp)})|}{kd^2}$$
for as small a constant $c'$ as we want.  
Any particular edge in $H^{(sp)}$ survives in $G'$ with probability $d^{-2}$, and then it is clean (and thus survives in $G''$) with probability $\ge 1/2$ by Lemma~\ref{lem:clean}.
So the expected number of edges in $G''$ is at least
$$\mathbb{E}[|E(G'')|] \ge \frac{|E(H^{(sp)})|}{2d^2}.$$
By pushing $c'$ sufficiently small, it follows from Lemma \ref{lem:weakcounting} that the expected number of edge-simple alternating $k$-paths in $G''$ is
\begin{align*}
\mathbb{E}[\sigma''] &\geq \mathbb{E}[|E(G'')| - k |V(G'')|] = \mathbb{E}[|E(G'')|] - k \mathbb{E}[|V(G'')|] \\
&\geq \frac{|E(H^{(sp)})|}{2d^2} - k\frac{c'|E(H^{(sp)})|}{kd^2} \geq \Omega\left( \frac{|E(H^{(sp)})|}{d^2} \right) \\
&= \Omega\left(\frac{n\delta}{d^2}\right) = \Omega\left(\frac{kn}{d} \right) = \Omega\left(\frac{n}{d}\right),
\end{align*}
where the last line is since we assume $H^{(sp)}$ has average degree $\delta \ge cdk$.
\end{proof}

Finally, putting the pieces together: by Lemmas \ref{lem:sigmaupper} and \ref{lem:sigmalower}, we have
$$\Omega\left(\frac{n}{d}\right) = \mathbb{E}[\sigma''] \le \frac{\sigma}{d^{k+1}}$$
and so, rearranging, we get
$$\sigma = \Omega\left(nd^k\right)$$
which proves our counting lemma.

\section{Polynomial Time} \label{sec:polytime}
Unfortunately, Algorithm~\ref{alg:general-k} does not run in polynomial time.  This is because of the ``if'' condition: we need to check whether there is some $F \subseteq V \setminus \{u,v\}$ with $|F| \leq f$ such that $\dist_{H \setminus F}(u,v) > (2k-1) \cdot w(u,v)$.  The obvious way of doing this takes $\Omega(n^f)$ time in order to check all possible fault sets, which is not polynomial if $f$ is superconstant (which is the interesting case, as we are studying $f$-dependence).  One might hope to design a polynomial-time algorithm to perform this check, but even special cases of this problem are NP-hard: if the graph is unweighted then this is equivalent to the \textsc{Length-Bounded Cut} problem, which is known to be NP-hard~\cite{BEHKSS06}. 

The same problem arises in the context of VFT spanners, where the best-known algorithm is the VFT-greedy algorithm (exactly Algorithm~\ref{alg:general-k} but without adding emulator edges).
This algorithm was shown to produce spanners of existentially optimal size in~\cite{BP19,BDPW18}, but the same running time issue was present.  This issue was recently resolved by~\cite{DR20,BDR21}, who showed how to slightly change the greedy algorithm to make it polytime.  We show that we can adapt the techniques of~\cite{DR20} to the VFT emulator setting, obtaining a polynomial-time algorithm with size bounds that are only polylogarithmically worse than the exponential time algorithm (and we have suppressed our polylogs with $\Oish(\cdot)$ notation anyway).  

The main idea of~\cite{DR20} was to replace the $n^f$ time check with an \emph{approximation algorithm} for the \textsc{Length-Bounded Cut} problem on \emph{unweighted} graphs, and then show that the approximation and the restriction to unweighted graphs did not cost us much even when applied to weighted graphs.  We follow this approach, replacing the line
\begin{center}
``if there is $F \subseteq V \setminus \{u,v\}$ of size $|F| \leq f$ with $\dist_{H \setminus F}(u,v) > (2k-1) \cdot w(u,v)$''
\end{center}
in Algorithm~\ref{alg:general-k} with a new condition:
\begin{center}
``if \ffs$(G,H,u,v,k,f)$ returns YES''
\end{center}
where \ffs\ is a subroutine described below.
Intuitively, the algorithm \ffs\ is supposed to be an approximate version of our previous check.  We now give this algorithm and prove the guarantees that it provides, and then show how changing Algorithm~\ref{alg:general-k} to use \ffs\ affects the proofs from Section~\ref{sec:general-k}.

\subsection{The Distinguishing Algorithm}
Consider the following algorithm.  
\FloatBarrier
\begin{algorithm}
Consider all edges in $H$ and $G$ to be \emph{unweighted}.   So the weight of an emulator edge $(s,t)$ under fault set $f$ is equal to the number of hops between $s$ and $t$ in $G \setminus F$\;
\lIf{$(u,v) \in E(H)$}{\Return{NO}}
Initialize $F = \emptyset$\;
\While{$\dist_{H \setminus F}(u,v) \leq 2k-1$}{
Let $P$ be a shortest $u-v$ path in $H \setminus F$\;
Let $P'$ be the (possibly non-simple) path in $G \setminus F$ obtained by replacing every emulator edge in $P$ with a shortest path between its endpoints in $G \setminus F$\;
Add all vertices in $P' \setminus \{u,v\}$ to $F$\;
}
\leIf{$|F| \leq (2k-2) f$}{\Return{YES}}{\Return{NO}}

\caption{\ffs$(G,H,u,v,k,f)$}
\label{alg:ffs}
\end{algorithm}
\FloatBarrier

We claim that this is effectively a polynomial-time $(2k-2)$-approximation algorithm for the problem of finding the smallest $F$ that makes $\dist_{H \setminus F}(u,v) > 2k-1$ (recall that in this algorithm we use \emph{unweighted} graph edges).  

\begin{lemma}
\ffs\ takes polynomial time.
\end{lemma}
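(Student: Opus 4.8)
The plan is to show that \ffs\ terminates after polynomially many iterations of its \textbf{while} loop, and that each iteration does only polynomial work, so the overall running time is polynomial. First I would observe that everything inside a single iteration — computing a shortest $u$–$v$ path in $H \setminus F$ under the unweighted-with-emulator-reweighting metric, expanding each emulator edge of $P$ into a shortest path in $G \setminus F$, and adding the interior vertices to $F$ — reduces to a constant number of BFS/shortest-path computations on graphs of size $O(n + |E(H)|)$, hence takes $\poly(n)$ time. (To compute $\dist_{H\setminus F}$ one first computes, for each emulator edge $(s,t)\in E(H)$, the value $\dist_{G\setminus F}(s,t)$ via BFS in $G\setminus F$, then runs a single-source shortest path computation from $u$ in $H\setminus F$ with those edge weights; all of this is $\poly(n)$.)

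The crux is bounding the number of iterations. The key observation is that the loop only continues while $\dist_{H\setminus F}(u,v) \le 2k-1$, i.e.\ there is a $u$–$v$ path $P$ in $H\setminus F$ using at most $2k-1$ ``hops'' in the reweighted sense; since emulator edges are reweighted by their hop-distance in $G\setminus F$ and spanner edges have weight $1$, such a $P$ expands into a walk $P'$ in $G\setminus F$ of at most $2k-1$ edges, so $P'$ contains at most $2k-2$ internal vertices. Therefore each iteration adds at most $2k-2$ new vertices to $F$. Moreover, crucially, every vertex added in an iteration is \emph{new}: the path $P$ lies in $H\setminus F$ with the \emph{current} $F$, so none of its vertices (other than $u,v$) were already in $F$, and likewise $P'$ avoids $F$ since it only passes through vertices of $G\setminus F$. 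Hence $|F|$ strictly increases by at least one (in fact is a set, so there is no double-counting) each iteration. Since $|F|$ can never exceed $n$, the loop runs at most $n$ times — and in fact the final check $|F| \le (2k-2)f$ is only reached once $|F|$ has grown past that threshold or the loop has exited, so at most $(2k-2)f + 2k - 2 = O(kf) \le O(kn)$ iterations occur before \ffs\ returns. Combined with the per-iteration bound, the total running time is $\poly(n)$.

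The main obstacle I anticipate is being careful about the reweighting convention: one must check that ``$\dist_{H\setminus F}(u,v)\le 2k-1$'' genuinely forces the expanded walk $P'$ to have at most $2k-1$ edges in $G\setminus F$, which relies on the stated convention that an emulator edge $(s,t)$ has unweighted ``length'' equal to the number of hops between $s$ and $t$ in $G\setminus F$ — so replacing it by an actual shortest $s$–$t$ path in $G\setminus F$ does not increase the edge count of the walk. It is also worth noting explicitly that $P'$ may be non-simple, but that is harmless: we only need the bound on the number of \emph{distinct} internal vertices, which is at most the number of edges minus one. Once these points are nailed down, the argument is routine: bounded work per iteration, a strictly growing $F$ capped at $n$, hence polynomial total time.
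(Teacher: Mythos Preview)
Your proposal is correct and follows essentially the same approach as the paper: bound the per-iteration work by polynomial (shortest-path computations), argue that each iteration adds at least one genuinely new vertex to $F$ (since $P$ and hence $P'$ lie in $G\setminus F$), and conclude that the loop runs at most $n$ times. You actually supply more detail than the paper does on why the added vertices are new.

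One small correction: your aside claiming a tighter bound of $O(kf)$ iterations is not justified. The \textbf{while} loop does not test $|F|$; it only terminates when $\dist_{H\setminus F}(u,v) > 2k-1$, and the comparison $|F| \le (2k-2)f$ happens \emph{after} the loop exits. So the loop may well run for $\Theta(n)$ iterations before the distance condition fails, even if the algorithm ultimately returns NO because $|F|$ overshot $(2k-2)f$. Your earlier bound of $n$ iterations is the right one (and is exactly what the paper uses).
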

\begin{proof}
Each iteration of the while loop involves a shortest path computation for $u-v$, and then a shortest path computation for each emulator edge to update its weight appropriately.  Hence each iteration takes polynomial time if we use a polynomial-time shortest-path algorithm.  In each iteration we add at least one node to $F$ (since the path $P'$ must have at least two edges or else we would have immediately returned NO), and hence the number of iterations is at most $O(n)$.  Thus the total running time is polynomial.
\end{proof}

\begin{lemma} \label{lem:efficiency}
If $\dist_{H \setminus F'}(u,v) \leq 2k-1$ for all $F' \subseteq V \setminus \{u,v\}$ with $|F'| \leq (2k-2)f$, then \ffs$(G,H,u,v,k,f)$ returns NO.
\end{lemma}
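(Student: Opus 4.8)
The plan is to prove the contrapositive: I will show that if \ffs$(G,H,u,v,k,f)$ returns YES, then there is a fault set $F' \subseteq V \setminus \{u,v\}$ with $|F'| \le (2k-2)f$ such that $\dist_{H \setminus F'}(u,v) > 2k-1$. The natural candidate for $F'$ is simply the set $F$ that \ffs\ has accumulated at the moment it terminates and returns YES. Indeed, \ffs\ returns YES precisely when it exits the while loop (so $\dist_{H \setminus F}(u,v) > 2k-1$ with respect to the \emph{unweighted} edge convention) \emph{and} the accumulated set satisfies $|F| \le (2k-2)f$. So the entire content of the lemma is to check that this $F$ really does what we want: that $F \subseteq V \setminus \{u,v\}$, that $|F| \le (2k-2)f$ (immediate from the return condition), and that $\dist_{H \setminus F}(u,v) > 2k-1$ in the correct sense.

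The first two points are essentially bookkeeping: every vertex added to $F$ comes from $P' \setminus \{u,v\}$, so $u,v \notin F$ by construction, and the size bound is exactly the condition tested in the final \textbf{if}. The key subtlety, and where I expect the only real (if minor) obstacle, is reconciling the two notions of ``distance'' in play. Inside \ffs\ we treat all edges as unweighted, so an emulator edge $(s,t)$ contributes $1$ hop, whereas in Definition \ref{def:ftemsintro} its weight under fault set $F$ is $\dist_{G \setminus F}(s,t)$. The point is that the while-loop condition $\dist_{H \setminus F}(u,v) \le 2k-1$ is checking hop-distance in $H \setminus F$, and this hop-distance is always at most the weighted $\dist_{H \setminus F}(u,v)$ of Definition \ref{def:ftemsintro} \emph{as long as each emulator edge is still ``usable''}, i.e., its endpoints are still connected in $G \setminus F$. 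Conversely, once the loop exits we know there is no $u \leadsto v$ path in $H \setminus F$ using at most $2k-1$ edges, which is what we need. So I would make explicit the (easy) observation that a $u\leadsto v$ path in $H \setminus F$ of weighted length $\le 2k-1$ (each emulator edge weighing $\dist_{G\setminus F}(\cdot,\cdot)\ge 1$ since the graph is unweighted) uses at most $2k-1$ edges, hence is also a hop-path of length $\le 2k-1$; therefore ``no short hop-path'' implies ``no short weighted path,'' and the exit condition of the loop gives exactly $\dist_{H\setminus F}(u,v) > 2k-1$ in the sense of Definition \ref{def:ftemsintro}.

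Assembling: suppose $\dist_{H \setminus F'}(u,v) \le 2k-1$ for \emph{all} $F' \subseteq V\setminus\{u,v\}$ with $|F'|\le (2k-2)f$. If \ffs\ returned YES, then its accumulated set $F$ is such an $F'$ (it has size $\le (2k-2)f$ and avoids $u,v$), yet the loop exited because $\dist_{H\setminus F}(u,v) > 2k-1$, a contradiction. Hence \ffs\ must return NO. One small case to dispatch first: \ffs\ also returns NO immediately if $(u,v) \in E(H)$, so that branch is consistent with the claim and causes no trouble; and if the loop never executes because $\dist_{H\setminus \emptyset}(u,v) > 2k-1$ already, then $F = \emptyset$ works and the same argument applies. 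The whole proof is short; the only thing worth spelling out carefully is the unweighted-vs-weighted distance translation, and even that is a one-line inequality since edge weights are at least $1$.
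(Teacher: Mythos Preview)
Your proposal is correct and takes essentially the same approach as the paper: prove the contrapositive by exhibiting the accumulated set $F$ as the witness $F'$, which by the return condition has $|F| \le (2k-2)f$ and by the loop-exit condition has $\dist_{H\setminus F}(u,v) > 2k-1$. Your extra discussion about reconciling unweighted versus weighted distances is unnecessary (and slightly misreads \ffs: an emulator edge $(s,t)$ there contributes its hop-distance $\dist_{G\setminus F}(s,t)$, not $1$), since the lemma's $\dist_{H\setminus F'}$ is already in the same unweighted convention used inside \ffs; but this does not affect the correctness of your argument.
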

\begin{proof}

We prove the contrapositive.  Suppose that \ffs$(G,H,u,v,k,f)$ returns YES.  Then by the definition of the algorithm, the set $F$ that it found has $\dist_{H \setminus F}(u,v) > 2k-1$ with $|F| \leq (2k-2)f$.
\end{proof}
\begin{lemma} \label{lem:soundness}
If there exists an $F' \subseteq V \setminus \{u,v\}$ with $|F'| \leq f$ such that $\dist_{H \setminus F'}(u,v) > 2k-1$, then \ffs$(G,H,u,v,k,f)$ returns YES.
\end{lemma}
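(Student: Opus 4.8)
The plan is to prove the contrapositive: I will show that if \ffs$(G,H,u,v,k,f)$ returns NO, then for every fault set $F' \subseteq V \setminus \{u,v\}$ with $|F'| \le f$ we have $\dist_{H \setminus F'}(u,v) \le 2k-1$ (in the unweighted-edge sense used by the algorithm). So suppose \ffs\ returns NO. If $(u,v) \in E(H)$ this is trivial, so assume the while loop terminated because $\dist_{H \setminus F}(u,v) > 2k-1$ but $|F| > (2k-2)f$, where $F$ is the set accumulated by the algorithm. Fix an arbitrary candidate fault set $F'$ with $|F'| \le f$; I must exhibit a short $u \leadsto v$ path in $H \setminus F'$.

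The key observation is a counting/pigeonhole argument on the iterations of the while loop. In each iteration the algorithm takes a shortest $u$--$v$ path $P$ in $H \setminus F$ (so $P$ has at most $2k-1$ edges), expands each emulator edge of $P$ into a shortest path in $G \setminus F$, and dumps all internal vertices of the resulting walk $P'$ into $F$. The first step is to bound how many vertices get added per iteration: $P$ has at most $2k-1$ hops in $H$, and when we expand an emulator edge $(s,t)$ of $P$ into a shortest path in $G \setminus F$, that path has at most $2k-1$ hops as well (since $(s,t) \in E(H)$ and we only reach this line when $\dist_{H\setminus F}(u,v) \le 2k-1$, actually more carefully: the hop-length of the expanded emulator edge in $G \setminus F$ equals its current weight, and the total weight of $P$ is $\le 2k-1$). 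Adding this up, each iteration contributes at most $2k-2$ new vertices to $F$ — this is exactly where the $(2k-2)$ factor comes from. Hence if the loop runs for $I$ iterations, $|F| \le (2k-2)I$, and since we ended with $|F| > (2k-2)f$, we get $I > f \ge |F'|$.

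Now the punchline. Since the loop ran more than $|F'|$ times, by pigeonhole there is some iteration $i$ whose path $P'_i$ (the expanded walk in $G \setminus F$ at that iteration) avoided $F'$ entirely — more precisely, the set of internal vertices added in iteration $i$ is disjoint from $F'$ is not quite what pigeonhole gives; rather, I should argue that because each iteration's contribution is disjoint from all previous ones (newly added vertices), and there are $> |F'|$ iterations, some iteration $i$ adds no vertex of $F'$. At that iteration, the shortest path $P_i$ in $H \setminus F^{(i)}$ (where $F^{(i)}$ is the accumulator before iteration $i$) has at most $2k-1$ hops and its expansion $P'_i$ uses no vertex of $F'$; in particular $P_i$ itself uses no vertex of $F'$, so $P_i$ is a valid $u \leadsto v$ walk in $H \setminus F'$. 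Moreover its weight under the reweighting by $F'$ is at most its weight under $F^{(i)}$: expanding each emulator edge of $P_i$ along $P'_i$ gives a walk in $G \setminus F'$ (it avoids $F'$), so $\dist_{G \setminus F'}(s,t) \le \dist_{G \setminus F^{(i)}}(s,t)$ for each emulator edge $(s,t)$ of $P_i$, hence $\dist_{H \setminus F'}(u,v) \le 2k-1$. The main obstacle is getting the disjointness-across-iterations bookkeeping exactly right (each iteration strictly grows $F$ by adding the new internal vertices, and the expanded walk at a "good" iteration genuinely avoids $F'$ at the vertices it contributes, which is enough to conclude $P_i$ avoids $F'$); everything else is routine hop-counting.
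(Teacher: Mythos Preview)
Your proposal is correct and follows essentially the same approach as the paper, just phrased in the contrapositive: both hinge on the facts that each iteration adds at most $2k-2$ new (pairwise disjoint) vertices to $F$, and that if some iteration's expanded walk $P'_i$ misses $F'$ then $P_i$ witnesses $\dist_{H\setminus F'}(u,v)\le 2k-1$. Your explicit verification that emulator-edge weights can only decrease when passing from $F^{(i)}$ to $F'$ (because the expansion inside $P'_i$ survives in $G\setminus F'$) is exactly the point the paper is using when it asserts that $F'$ must hit each iteration's path.
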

\begin{proof}
Let $F' \subseteq V \setminus \{u,v\}$ with $|F'| \leq f$ such that $\dist_{H \setminus F'}(u,v) > 2k-1$, and let $F$ be the fault set created by the algorithm.  We need to prove that $|F| \leq (2k-2)f$, since that will imply that the algorithm returns YES.  Let $P$ be a path that the algorithm found in some iteration, and let $P'$ be the associated path in $G$ (so the algorithm added all vertices of $P'$ to $F$ other than $u,v$).  

Clearly the length of $P'$ is at most the length of $P$, since we just replaced every emulator edge by a graph path of the exact same length, and we know from the algorithm that the length of $P$ is at most $2k-1$.  Hence $P'$ has at most $2k-2$ vertices other than $u,v$.  Thus in every iteration $|F|$ grows by at most $2k-2$.

Since $P$ has length at most $2k-1$ and $\dist_{H \setminus F'}(u,v) > 2k-1$, it must be the case that $F' \cap P \neq \emptyset$.  On the other hand, if $P_1$ and $P_2$ are two \emph{different} paths found by the algorithm (so they were found in different iterations), then $P_1 \cap P_2 = \{u,v\}$ (since if $P_1$ appeared in an earlier iteration then all vertices other than $u,v$ in $P_1$ were added to $F$, so were not available for $P_2$).  Thus in every iteration we add at least one vertex from $F'$ to $F$ that was not already in $F$.  Hence the number of iterations is at most $|F'| \leq f$, and so $|F| \leq (2k-2)f$ as required.
\end{proof}

\subsection{Effect on the Analysis of Algorithm~\ref{alg:general-k}}

We now discuss what changes from Section~\ref{sec:general-k} if we replace the if condition in Algorithm~\ref{alg:general-k} with \ffs.  We give only a sketch of this changed analysis, since it simply involves repeating the analysis of Section~\ref{sec:general-k} but keeping track of an extra $O(k)$ factor.

First, it is easy to see that correctness (Lemma~\ref{lem:stretch-k}) still holds thanks to the greedy ordering and Lemma~\ref{lem:soundness}.  If $(u,v) \not \in E(H)$ then Lemma~\ref{lem:soundness} implies that for every fault set $F$ of size at most $f$, there is a $u-v$ path in $H \setminus F$ of length at most $2k-1$ in the \emph{unweighted} version.  But every edge already in $H$ when we are considering adding $(u,v)$ has weight less than $w(u,v)$, and hence this path must also have length at most $(2k-1) \cdot w(u,v)$ in the real $H$ (with weights).  

As before, for any spanner edge $(u,v)$ added by the algorithm, let $F_{(u,v)}$ denote the fault set that caused us to add it.  The main difference in our new polynomial time algorithm is that rather than knowing $|F_{(u,v)}| \leq f$, we just know from Lemma~\ref{lem:efficiency} that $|F_{(u,v)}| \leq (2k-2)f$.  So we need to trace how this change propagates through Section~\ref{sec:general-k}.  The main change is that our definition of $d$ in Lemma~\ref{lem:final_size} will be multiplied by $O(k)$.  Since $k \leq O(\log n)$, this means that the final bound on $|E(H^{(sp)})|$ in Lemma~\ref{lem:final_size} will be unchanged. 

The bound on the number of emulator edges (Lemma~\ref{lem:general_em_size}) continues to hold without change, since it simply involves counting the number of local $j$-paths.  The statement of the Lemma~\ref{lem:counting} is changed online slightly, by changing the assumption that $d \geq cf$ with the assumption that $d \geq cfk$.
The proof of the spanner edge lemma (Lemma~\ref{lem:final_size}) is also unchanged since it is just calculations assuming the counting lemma (Lemma~\ref{lem:counting}); we simply need to change $d$ to have an extra factor of $O(k)$, but this is absorbed by the polylog($n$) factors since $k \leq O(\log n)$.

Now consider Lemma~\ref{lem:countermassk}, the main ``counter mass" lemma.  We do not change the statement as written, but note that there will be an extra factor of $k$ on the right hand side, but this is hidden by the $\Oish$ notation.  Since $F_{(u,v)}$ could have size up to $(2k-2)f$, we need to union bound over all possible fault sets of that size, rather than of $f$.  Hence we need to change our definition of mass-avoiding to be 
\[
\sum_{(s,t) \in \Psi(u,v,F)} C^j_{(s,t)} > c k fd^{j-1} \log n
\]
(i.e., we add an extra factor of $(2k-2)$ to the right hand side, which we simply write as $k$ due to the inclusion of the constant $c$).  Once we make this change, we can union bound over all fault sets of size at most $(2k-2)f$ to get that every mass-avoiding $F$ (of size at most $(2k-2)f$ is good for $(u,v)$ with high probability.  The rest of the proof is unchanged: this means that if $F_{(u,v)}$ it is also good, but that would imply that the algorithm would not have added $(u,v)$ since $F_{(u,v)}$ would not have actually forced us to (contradicting the definition of $F_{(u,v)}$).  

Now consider Lemma~\ref{lem:cspathcount}, the bound on the number of concentrated splittable $j$-paths completed by each spanner edge.  This goes through without change, since the statement of Lemma~\ref{lem:countermass} is unchanged.  

We then define $G'$, clean nodes, and $G''$ as before.  The statement of Lemma~\ref{lem:clean} is unchanged, and the proof only has to change by noticing that the probability that there is some node from $F_{(u,v)}$ in $G'$ is at most $1 - \left(1-\frac{1}{d}\right)^{(2k-2)f}$, since now $|F_{(u,v)}|$ can be up to $(2k-2)f$.  But since $d$ is larger by a factor of $O(k)$, this cancels out and the remaining calculation is identical. Lemmas~\ref{lem:sigmaupper}, \ref{lem:weakcounting}, and \ref{lem:sigmalower} are unchanged, since they just use the previous lemmas (whose statements are unchanged) as black boxes.  Hence they can be combined as before, proving Lemma~\ref{lem:counting} (the main counter lemma).

\section{Lower Bound} \label{sec:LB}
In this section we show that our algorithm is optimal with respect to dependence on $f$.  Interestingly, our lower bound construction for $f$-VFT $(2k-1)$-emulators is essentially the same as the lower bound for $f$-EFT (edge fault-tolerant) $(2k-1)$-spanners from~\cite{BDPW18} (the analysis is more complex, though, as we must account for the extra power of emulator edges).

\begin{conjecture}[Erd\H{o}s girth conjecture~\cite{erdHos1964extremal}] \label{con:girth} 
For every positive integer $k$, there exists an infinite family of graphs on $n$ nodes with $\Omega(n^{1+1/k})$ edges and girth $2k+2$.
\end{conjecture}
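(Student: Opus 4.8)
The final statement is the \Erdos{} girth conjecture, which remains open in general; the plan below describes the routes one would take and exactly where each of them breaks down, so it is a roadmap of partial progress rather than a complete proof. The task is, for each positive integer $k$, to build an infinite family of $n$-node graphs with $\Omega(n^{1+1/k})$ edges and girth $2k+2$, i.e.\ with no cycle of length at most $2k+1$.

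First I would try the probabilistic deletion method: sample $G(n,p)$ with $p \approx n^{1/k-1}$ so that the expected number of edges is $\Theta(n^{1+1/k})$, and then destroy every short cycle by deleting one edge from each cycle of length at most $2k+1$. This does not reach the conjectured density, however: the expected number of short cycles is dominated by the length-$(2k+1)$ cycles and is $\Theta((np)^{2k+1}) = \Theta(n^{2+1/k})$, which overwhelms the edge count. Lowering $p$ until the short cycles are controlled only produces girth-$(2k+2)$ graphs with $\Omega(n^{1+1/(2k+1)})$ edges (the \Erdos--Sachs bound), polynomially short of the target. So the probabilistic method alone cannot prove the statement.

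Next I would turn to algebraic incidence structures, which are the only known source of graphs meeting the conjectured density --- but only for a handful of $k$. For $k=1$, $K_{n/2,n/2}$ has $\Theta(n^2)$ edges and girth $4$. For $k=2$, the point--line incidence graph of a projective plane $\mathrm{PG}(2,q)$ is bipartite on $\Theta(q^2)$ vertices with $\Theta(q^3)=\Theta(n^{3/2})$ edges and girth $6$. For $k=3$ and $k=5$, the incidence graphs of generalized quadrangles and generalized hexagons give girth $8$ with $\Theta(n^{4/3})$ edges and girth $12$ with $\Theta(n^{6/5})$ edges. By the Feit--Higman theorem, finite thick generalized $m$-gons exist only for $m\in\{3,4,6,8\}$, so this approach resolves the conjecture exactly for $k\in\{1,2,3,5\}$ and nothing more. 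For the remaining $k$ the best general construction I know is the family $D(\ell,q)$ of Lazebnik, Ustimenko, and Woldar: for a prime power $q$ it is $q$-regular and bipartite on $2q^\ell$ vertices with $q^{\ell+1}$ edges and girth at least $\ell+5$; taking $\ell=2k-3$ gives $n$-node graphs of girth at least $2k+2$ with $\Omega(n^{1+1/(2k-3)})$ edges.

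The step I expect to be the obstacle is precisely closing this gap for general $k$: for every $k\notin\{1,2,3,5\}$, the algebraic constructions miss the conjectured exponent $1+1/k$ by a polynomial factor, the probabilistic method misses by more, and no current technique bridges the difference. A genuine proof of the statement as written therefore does not exist at present; in this paper it is used only as a hypothesis (as in Theorem~\ref{thm:LB-main}), and the lower bounds that invoke it should be read as conditional. Unconditionally, the most one can extract is the $k\in\{1,2,3,5\}$ cases via the generalized-polygon constructions above.
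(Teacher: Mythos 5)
You are right that this statement is the Erd\H{o}s girth conjecture, which is open; the paper does not prove it but states it as Conjecture~\ref{con:girth} and uses it only as a hypothesis for the conditional lower bound in Theorem~\ref{thm:LB-main}. Your survey of the known partial results (deletion method, generalized polygons for $k\in\{1,2,3,5\}$ via Feit--Higman, and the Lazebnik--Ustimenko--Woldar family for general $k$) is accurate and consistent with the paper's treatment.
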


We first prove our lower bound for the special case of $k = 2$ (stretch $3$), then handle the general case of $k \geq 3$.

\LBsmall*
\begin{proof}
Let $G = (V, E)$ be a girth conjecture graph with $k=2$, i.e., a graph with girth at least $6$ and at least $\Omega(|V|^{3/2})$ edges.  This setting of the girth conjecture has actually been proved~\cite{Wenger91}, so we use this construction.  We construct a new graph $G' = (V', E')$ as follows: let $t = \lfloor f/4 \rfloor$, let $V' = V \times [t]$, and let $E' = \{\{(u,i), (v,j)\} : \{u,v\} \in E, i,j, \in [t]\}$.  In other words, each edge $\{u,v\}$ of $G$ is replaced by a complete bipartite graph between two sets of copies (of size $t$).  For every $u \in V$, let $B_u = \{(u,i) : i \in [t]\}$ be the copies of $u$ in $V'$.  Note that 
\begin{align*}
    |E'| & = t^2 |E| \geq \Omega\left( f^2 |V|^{3/2}\right) = \Omega\left(f^2 \left(\frac{|V'|}{f}\right)^{3/2}\right) = \Omega\left(f^{1/2} |V'|^{3/2}\right),
\end{align*}
so we simply need to show that every $f$-VFT $(2k-1)$-emulator for $G'$ has at least $\Omega(|E'|)$ edges.  In the spanner setting we would accomplish this by showing that every edge in $E'$ must be in every spanner, but in the emulator setting this is not true: emulator edges can be used to replace spanner edges.  Instead, we must show that the number of emulator edges in an emulator must be roughly equal to the number of edges of $G'$ which are not in the emulator.

Let $H = (V_H, E_H)$ be an $f$-fault-tolerant $3$-emulator for $G'$, and let $e = \{(u,i), (v,j)\}$ be some edge in $E'$.  We say that an emulator edge $e' \in E_H$ \emph{protects} $e$ if there is a $2$-path between the endpoints of $e'$ in $G'$ that uses $e$ (such a path will intuitively correspond to a path of length $3$ in $H$, since the emulator edge will have length $2$ under the fault sets that we care about).  We say that $e$ is \emph{$\beta$-protected} if there are at least $\beta$ emulator edges that protect $e$.  

We first claim that if $e$ is not $f/4$-protected then $e \in E_H$.  So suppose for contradiction that $e$ is not $f/4$-protected but $e \not\in E_H$.  Let $F$ consist of all endpoints of emulator edges that protect $e$ other than $(u,i)$ and $(v,j)$, together with all vertices in $B_u \cup B_v$ other than $(u,i)$ and $(v,j)$.  Clearly $|F| \leq 4(f/4) = f$, and clearly $e \in G' \setminus F$ and hence $\dist_{G' \setminus F}((u,i), (v,j)) = 1$.  
 
Since $G$ has girth at least $6$: it is easy to see that any path from $(u,i)$ to $(v,j)$ in $H \setminus F$ that uses only edges in $G' \setminus F$ (i.e., does not use emulator edges) must have length at least $5$.  This is because any such path must essentially follow a $u-v$ path in $G$ that does not use the edge $\{u,v\}$.  

On the other hand, consider some path from $(u,i)$ to $(v,j)$ in $H \setminus F$ that does use an emulator edge $e' = \{(s, i'), (t,j')\}$.  Then by the definition of $F$, we know that $e'$ does not protect $e$.  But then it is easy see that the length of this path is strictly larger than $3$. 
Thus $\dist_{H \setminus F}((u,i), (v,j)) > 3 \cdot \dist_{G' \setminus F}((u,i), (v,j))$, contradicting our assumption that $H$ is an $f$-VFT $3$-emulator.  Hence every $e \in E' \setminus E_H$ must be $f/4$-protected.

Now consider some emulator edge $e' = \{(s,i), (t,j)\} \in E_H$.  If $\dist_G(s,t) \geq 3$ then $e'$ does not protect any edges of $G'$.  If $\dist_G(s,t) = 2$ then since $G$ has girth larger than $4$ there is a unique $2$-path $s-x-t$ between $s$ and $t$ in $G$, and so by definition $e'$ can only protect edges between $(s,i)$ and $B_x$ and between $(t,j)$ and $B_x$.  Thus every emulator edge can protect at most $2t \leq f/2$ edges of $G'$.   Since we showed that any edge in $E'$ which is not in $E_H$ must be $f/4$-protected, this implies that 
\begin{align*}
    |E_H| &= |E_H \cap E'| + |E_H \setminus E'| \geq |E_H \cap E'| + \frac{1}{f/2} \cdot \frac{f}{4} |E' \setminus E_H| \geq \Omega(|E'|),
\end{align*}
as claimed.
\end{proof}

We now modify this lower bound to hold for $k \geq 3$.  This involves using a different parameter for $t$ (basically $\sqrt{f}$ rather than $f$) and generalizing the argument, but the basic construction is the same.

\LBmain*
\begin{proof}
Let $G=(V,E)$ be a graph from the girth conjecture (Conjecture~\ref{con:girth}). We construct a new graph $G'=(V',E')$ as follows: let $t= \lceil \sqrt{f} \rceil$, let $V':= V \times [t]$, and let $E':= \{ \{(u,i), (v,j): \{ u,v \} \in E, i,j 
\in [t]\}\}$. In other words, each edge $\{u,v\}$ of $G$ is replaced by a complete bipartite graph between two sets of copies (of size $\sqrt{f}$).  For every $u \in V$, let $B_u = \{(u,i) : i \in [t]\}$ be the copies of $u$ in $V'$.  Note that 
\begin{align*}
    |E'| & = t^2 |E| \geq \Omega\left( f |V|^{1+1/k}\right) = \Omega\left(f \left(\frac{|V'|}{\sqrt{f}}\right)^{1+1/k}\right) = \Omega\left(f^{\frac12 - \frac{1}{2k}} |V'|^{1+1/k}\right),
\end{align*}
so we simply need to show that every $f$-VFT $(2k-1)$-emulator for $G'$ has at least $\Omega(|E'| / k)$ edges.  In the spanner setting we would accomplish this by showing that every edge in $E'$ must be in every spanner, but in the emulator setting this is not true: emulator edges can be used to replace spanner edges.  Instead, we must show that the number of emulator edges in an emulator must be roughly equal to the number of edges of $G'$ which are not in the emulator.

Let $H = (V_H, E_H)$ be an $f$-fault-tolerant $(2k-1)$-emulator for $G'$, and let $e = \{(u,i), (v,j)\}$ be some edge in $E'$.  We say that an emulator edge $\{(s, i'), (t, j')\} \in E_H$ \emph{protects} $e$ if there is a simple path of the form $s \leadsto u - v \leadsto t$ of length at most $k$ in $G$.  We say that $e$ is \emph{$\beta$-protected} if there are at least $\beta$ emulator edges that protect $e$.

We first claim that if $e$ is not $f/3$-protected then $e \in E_H$.  So suppose for contradiction that $e$ is not $f/3$-protected but $e \not\in E_H$.  Let $F$ consist of all endpoints of emulator edges that protect $e$ (other than $(u,i)$ and $(v,j)$ if they happen to be such an endpoint) together with all vertices in $B_u \cup B_v$ other than $(u,i)$ and $(v,j)$.  Clearly $|F| \leq 2(f/3) + 2t \leq f$, and clearly $e \in G' \setminus F$ and hence $\dist_{G' \setminus F}((u,i), (v,j)) = 1$.  

Since $G$ has girth at least $2k+2$, it is easy to see that any path from $(u,i)$ to $(v,j)$ in $H \setminus F$ that uses only edges in $G' \setminus F$ (i.e., does not use emulator edges) must have length at least $2k+1$.  This is because any such path must essentially follow a $u-v$ path in $G$ that does not use the edge $\{u,v\}$.  

On the other hand, consider some path from $(u,i)$ to $(v,j)$ in $H \setminus F$ that does use an emulator edge $e' = \{(s, i'), (t,j')\}$.  Then by the definition of $F$, we know that $e'$ does not protect $e$.   Hence the length of this path is at least 
\begin{align*}
    \dist_G(u,s) + \dist_G(s,t) + \dist_G(t,v) & \geq 2k+1.
\end{align*}

Thus $\dist_{H \setminus F}((u,i), (v,j)) > (2k-1) \cdot \dist_{G' \setminus F}((u,i), (v,j))$, contradicting our assumption that $H$ is an $f$-VFT $(2k-1)$-emulator.  Hence every $e \in E' \setminus E_H$ must be $f/3$-protected.

On the other hand, consider some emulator edge $e' = \{(s,i), (t,j)\} \in E_H$.  In order to protect \emph{any} edge, it must be the case that $\dist_G(s,t) \leq k$.  Since $G$ has girth at least $2k+2$, there is only one simple $s \leadsto t$ path in $G$ of length at most $k$.  By the definition of protection, any edge protected by $e'$ must be between some node in $B_u$ and some node in $B_v$ where $u$ and $v$ are neighbors on this path.  Hence $e'$ can protect at most $k \cdot t^2 \leq O(kf)$ edges of $G'$.   Since we showed that any edge in $E'$ which is not in $E_H$ must be $f/3$-protected, this implies that 
\begin{align*}
    |E_H| &= |E_H \cap E'| + |E_H \setminus E'| \geq |E_H \cap E'| + \frac{1}{O(kf)} \frac{f}{3} |E' \setminus E_H| \geq \Omega(|E'| / k),
\end{align*}
as claimed.
\end{proof}
\section{Additive Emulators}\label{sec:additive}
In this section we consider emulators with purely additive stretch, proving Theorems~\ref{thm:additive-2} and~\ref{thm:additive-4}. In particular, we provide simple algorithms for constructing $f$-VFT $+2$-emulators and $+4$-emulators. 

We start with the $f$-VFT $+4$-emulator; the algorithm for the $+2$-emulator will be very similar, so we will just sketch how the analysis needs to be modified.

\subsection{$+4$-emulator}
We will prove the following theorem. 

\addfour*

\paragraph{Algorithm.}
Let $d$ be a parameter which depends on $f$: if $f \leq \sqrt{n}$ then we set $d = (fn)^{1/3}$, and if $f > \sqrt{n}$ then we set $d = 2f$.  We say that a node is \emph{light} if its degree in $G$ is at most $d$, and otherwise it is \emph{dense}.

Initially our emulator $H$ is empty.
\begin{enumerate}
    \item Every light node adds all of its incident edges to $H$.
    \item For every dense node $v$, we arbitrarily choose $d$ of its neighbors in $G$ and add edges between $v$ and each of these $d$ nodes to $H$.  
    \item Set $p = \frac{12 d\ln n}{n}$.  For every pair of nodes $\{u,v\} \in V \times V$, add $\{u,v\}$ to $H$ as an emulator edge independently with probability $p$.
\end{enumerate}

\paragraph{Size Analysis.}
It is easy to see that $H$ has $O(dn \log n)$ edges: the first step adds at most $dn$ edges, the second step adds at most $dn$ edges, and a simple Chernoff bound implies that with high probability the third step adds at most $O(p \binom{n}{2}) = O(dn \log n)$ edges.  Hence the total number of edges is at most $O(dn \log n)$.  If $f \leq \sqrt{n}$, then this means that we have at most $O((fn)^{1/3} n\log n) = O(f^{1/3}n^{4/3} \log n)$ edges.  If $f > \sqrt{n}$, then we have at most $O(f n \log n)$ edges.  Hence the total number of edges is at most $O((f^{1/3} n^{4/3} + nf)\log n)$.
\paragraph{Stretch Analysis.}
We now bound the stretch. More formally,
\begin{lemma}
Let $F \subseteq V$ with $|F| \leq f$, and let $u,v \in V \setminus F$.  Then 
\[
\dist_{H\setminus F}(u,v) \leq \dist_{G \setminus F}(u,v) + 4
\]
with probability at least $1 - \frac{1}{n^{3f}}$.
\end{lemma}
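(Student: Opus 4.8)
The plan is to fix the fault set $F$, the endpoints $u,v$, and a shortest $u$–$v$ path $P = (u = x_0, x_1, \dots, x_\ell = v)$ in $G \setminus F$ with $\ell = \dist_{G \setminus F}(u,v)$, and to exhibit a $u$–$v$ walk in $H \setminus F$ of length at most $\ell + 4$. The first observation is structural and deterministic: every vertex of $P$ avoids $F$, and every edge of $P$ incident to a light vertex already lies in $H$ by Step~1 of the construction, so the only edges of $P$ that can be missing from $H \setminus F$ are those joining two dense vertices. Consequently, if $P$ has at most one dense vertex then $P \subseteq H \setminus F$ and we are done with additive error $0$. Otherwise let $h$ and $h'$ be the first and last dense vertices on $P$; then $h \neq h'$, the sub-paths $u \leadsto h$ and $h' \leadsto v$ of $P$ have only light internal vertices and hence survive in $H \setminus F$, and they contribute exactly $\dist_P(u,h) + \dist_P(h',v) = \ell - m$ to our walk, where $m := \dist_P(h,h') \ge 1$. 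So the whole problem reduces to connecting $h$ to $h'$ in $H \setminus F$ with length at most $m+4$.

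For the $h \leadsto h'$ connection I use the chosen-neighbor edges of Step~2 together with the random edges of Step~3. Since $h$ is dense it has more than $d$ neighbors in $G$, of which $d$ specific ones, say $A_h$, were added to $H$; as $|F| \le f$ and the parameters are set so that $d \ge 2f$, at least $d - f \ge d/2$ of them survive, i.e.\ $|A_h \setminus F| \ge d/2$, and likewise $|A_{h'}\setminus F| \ge d/2$. After disposing of the trivial cases (if $\{h,h'\}\in H$, or if $A_h\setminus F$ and $A_{h'}\setminus F$ share a vertex, we already get a walk of length $\le 2 \le m+4$), it suffices to find a single random emulator edge $\{x,y\}$ with $x \in A_h\setminus F$ and $y \in A_{h'}\setminus F$: then $h - x - y - h'$ is a walk in $H\setminus F$ of length $1 + \dist_{G\setminus F}(x,y) + 1$, and since $x,h,h',y$ all avoid $F$ the triangle inequality gives $\dist_{G\setminus F}(x,y) \le \dist_{G\setminus F}(x,h) + \dist_{G\setminus F}(h,h') + \dist_{G\setminus F}(h',y) \le 1 + m + 1$, for a total of $m+4$. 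Here I use that an emulator edge $\{x,y\}$ automatically takes weight $\dist_{G\setminus F}(x,y)$ in $H\setminus F$, per Definition~\ref{def:ftemsintro}.

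It remains to bound the probability that no such random edge exists. The sets $A_h\setminus F$ and $A_{h'}\setminus F$ depend only on $F,u,v$ (not on the random edges), so there are at least $(d/2)^2$ candidate pairs between them, each independently made an edge with probability $p = 12 d \ln n / n$. Hence the failure probability is at most $(1-p)^{d^2/4} \le \exp(-p d^2/4) = \exp(-3 d^3 \ln n / n)$, and substituting $d = (fn)^{1/3}$ (in the regime $f \le \sqrt n$) or $d = 2f$ (in the regime $f > \sqrt n$) makes this exponent at least $3f\ln n$ in either case, so the failure probability is at most $n^{-3f}$, exactly as claimed. This also explains the parameter tuning: we deliberately need $\Omega(d^2)$ independent trials at rate $p$, i.e.\ $d^3 \gtrsim fn$, precisely so that the failure probability drops below $n^{-3f}$ and later survives a union bound over all $n^f$ fault sets and all $\binom n2$ vertex pairs.

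The step I expect to be the most delicate is this final probability estimate, specifically the parameter bookkeeping: the argument needs $d \ge 2f$ (so that $|A_h\setminus F| = \Omega(d)$, giving $\Omega(d^2)$ trials) simultaneously with $d^3 \ge fn$ (so that the exponent reaches $3f\ln n$), and one must check that the stated choice of $d$---possibly after adjusting an absolute constant, with the slack absorbed into the $+nf$ term of Theorem~\ref{thm:additive-4}---meets both conditions across every regime of $f$. Everything else is routine once the decomposition of $P$ at its first and last dense vertices is in place.
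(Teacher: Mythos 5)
Your proof is correct and takes essentially the same route as the paper's: the paper anchors at the endpoints $x_i,x_j$ of the first and last edges of $P$ missing from $H$ rather than at the first and last dense vertices of $P$, but the surviving Step-2 neighborhoods of size $\ge d/2$, the reduction to disjoint neighborhoods (handling the shared-vertex case separately), the triangle-inequality accounting giving total length $\le \dist_{G\setminus F}(u,v)+4$, and the failure-probability estimate $(1-p)^{d^2/4}\le \exp(-3d^3\ln n/n)\le n^{-3f}$ under both settings of $d$ are all identical. The constant-factor tension you flag between $d\ge 2f$ and $d=(fn)^{1/3}$ when $f$ is close to $\sqrt{n}$ is present in the paper's own write-up as well (it uses $d-f\ge d/2$ and $f\le d/2$ without comment), so it is not a gap relative to the published argument.
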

\begin{proof}
Let $P = (u = x_0, x_1, \dots, x_{k-1}, x_k = v)$ be the shortest $u-v$ path in $G \setminus F$ (breaking ties arbitrarily).  If all edges of $P$ are in $E(H)$ then we are done.  Otherwise, let $i$ be the smallest integer such that $\{x_i, x_{i+1}\} \not\in E(H)$, and let $j \leq k$ be the largest integer such that $\{x_{j-1}, x_j\} \not\in E(H)$.  Then clearly $x_i$ and $x_j$ are both dense vertices, or else all of their incident edges would be in $H$.  Hence they each have more than $d$ neighbors in $G$, and so have more than $d-f \geq d/2$ neighbors from step 2 in $H \setminus F$.  Let $N(x_i)$ and $N(x_j)$ denote these nodes.

First, observe that if there is an emulator edge between some node $a \in N(x_i)$ and some node $b \in N(x_j)$ then we are done.  This is because then we would have that
\begin{align*}
    \dist_{H \setminus F}(u,v) &\leq \dist_{H\setminus F}(u, x_i) + \dist_{H \setminus F}(x_i, a) + \dist_{H \setminus F}(a,b) + \dist_{H \setminus F}(b, x_j) + \dist_{H \setminus F}(x_j, v) \\
    &\leq i + 1 + \dist_{G \setminus F}(a,b) + 1 + (k-j) \\
    &\leq i + k - j + 2 + (1 + (j-i) + 1) \\
    &= k + 4\\
    &= \dist_{G \setminus F}(u,v)+4
\end{align*}
This is also true if $a = b$, i.e., if $N(x_i) \cap N(x_j) \neq \emptyset$.  Hence we are already finished if $N(x_i) \cap N(x_j) \neq \emptyset$, so we will assume without loss of generality that $N(x_i) \cap N(x_j) = \emptyset$.  

So if we can prove that the probability that such an edge $\{a,b\}$ exists is at least $1-1/n^{3f}$ then we are finished.  The number of possible such edges is at least $|N(x_i)| \cdot |N(x_j)| \geq (d/2)^2 = d^2/4$ (since $f \leq d/2$ and $N(x_i) \cap N(x_j) = \emptyset$).  Each possible edge is added to $H$ with probability $p$ in the third step of our construction.  Hence the probability that none of these edges are added is at most:
\begin{align*}
    (1-p)^{d^2/4} &= \left(1-\frac{12d\ln n}{n}\right)^{d^2/4} \leq \exp\left(-\frac{3d^3\ln n}{n}\right)
\end{align*}

If $f \leq \sqrt{n}$ then $d = (fn)^{1/3}$, and hence the probability that we fail to get an appropriate emulator edge is at most: 
\[
\exp(-{3f \ln n}) \leq n^{-3f}
\]
  Similarly, if $f \geq \sqrt{n}$ then $d = 2f$, so the probability that we fail to get an appropriate emulator edge is at most
\[
\exp\left(-\frac{24 f^{3} \ln n}{n} \right) \leq \exp(-3 f \ln n) \leq n^{-3f}
\]

Therefore, with probability at least $1-n^{-3f}$ we have added an emulator edge that satisfies the stretch guarantee for $u,v$. 
\end{proof}

We now just need two union bounds over all possible fault sets and pairs $u,v$ to finish the stretch analysis. In particular, the union bound is over all $\binom{n}{f} n^2 \leq n^{f+2}$ choices of $F$ and $u,v$ to get that the desired stretch bound holds with high probability for all possible $F, u, v$.

\subsection{$+2$-emulator}
We now show that the same algorithm with a slightly different parameter setting and analysis leads to a $+2$-emulator. More formally, we show the following theorem.
\addtwo*

\paragraph{Algorithm.}
Let $d = (fn)^{1/2}$. Similar to the previous section, we say that a node is \emph{light} if its degree in $G$ is at most $d$, and otherwise it is \emph{dense}.

We start with an empty emulator $H$.
\begin{enumerate}
    \item Every light node adds all of its incident edges to $H$.
    \item For every dense node $v$, we arbitrarily choose $d$ of its neighbors in $G$ and add edges between $v$ and each of these nodes to $H$.  E
    \item Set $p = \frac{6d\ln n}{n}$.  For every $\{u,v\} \in V \times V$, add $\{u,v\}$ to $H$ independently with probability $p$.
\end{enumerate}

\paragraph{Size Analysis.}
The first step adds at most $dn$ edges, and the second step also adds at most $dn$ edges.  And a simple Chernoff bound implies that with high probability the third step adds at most $O(p \binom{n}{2}) \leq O(nd \log n)$ edges.

\paragraph{Stretch Analysis.}
For any pair of nodes $u,v$, consider the shortest path $P= (u = u_0,...,u_i,...,u_\ell=v)$ between $u$ and $v$ in $G \setminus F$. If all edges on the path are in $H$, we are done. Let $x=u_{i}$ be the first node on the path such that $(u_i,u_{i+1}) \not \in E(H)$, and let $y=u_j$ be the furthest node on $P$ such that $(u_{j-1},u_j) \not \in E(H)$. Then we know that $x$ is dense (as otherwise we would have added all of its incident edges), and hence has $d$ edges incident to it from step 2 of the algorithm.  At least $d - f \geq d/2$ of these neighbors remain in $H \setminus F$.  Let $N(x)$ denote these neighbors.   

It is easy to see that with high probability there is an emulator edge added between $N(x)$ and $y$ in the third step of the algorithm: the probability of not adding such an edge is at most
\begin{align*}
    (1-p)^{d/2} &= \left(1-\frac{6d\ln n}{n}\right)^{d/2} \leq \exp\left(-\frac{3d^2\ln n}{n}\right)
\end{align*}

Hence the probability that no edge is added between $N(x)$ and $y$ can be bounded by $\exp(-3f\ln n)\leq n^{3f}$. Hence with probability at least $1-1/n^{3f}$ we have added an emulator edge $(z,y)$ for some $z \in N(x)$.

If we have such an edge, then by using it and the triangle inequality we have that 
\begin{align*}
    \dist_{H \setminus F}(u,v) &\leq \dist_{H \setminus F}(u,x)+\dist_{H\setminus F}(x,z)+ \dist_{H \setminus F} (z,y)+\dist_{H \setminus F} (y,v)\\
    &\leq i+1 +\dist_{G\setminus F}(z,y)+(\ell-j)\\
    &=\ell+2= \dist_{G \setminus F}(u,v) +2.
\end{align*}

This implies that with probability at least $1-n^{3f}$ the stretch guarantee holds for $u,v$ and $F$. Now as before a union bound over all fault sets of size at most $f$ and all pairs $u,v$ ($n^{f+2}$ choices) implies that the stretch guarantee holds with high probability for all such $u,v,F$.

\bibliography{refs}

\appendix

\section{Proofs Omitted from Section \ref{sec:k=3}} \label{app:proofs}

\countermass*
\begin{proof}

Let $(u,v)$ be an edge in the input graph, and let $F \subseteq V \setminus \{u,v\}$ with $|F| \leq f$.
Consider the moment in the algorithm where we inspect $(u, v)$ and decide whether or not to add it to the emulator (note: $(u, v), F$ are arbitrary; we may or may not actually add $(u, v)$, and if we do we do not necessarily have $F = F_{(u, v)}$).
We use the following extensions of our previous definitions:
\begin{itemize}
    \item For a path $\pi$ in $H^{(sp)}$ that \emph{would} be completed if we added $(u, v)$ to the emulator, we say that $\pi$ is \emph{$F$-avoiding} if $\pi \cap F = \emptyset$. 
    
    \item $\Psi(u,v,F)$ is the set of node pairs $(s, t) \in V \times V$ such that, if we added $(u, v)$ to the emulator, it would complete at least one new middle-heavy $F$-avoiding $3$-path from $s$ to $t$.

\item We say that $F$ is \emph{mass-avoiding} for $(u,v)$ if
\[
\sum_{(s,t) \in \Psi(u,v,F)} C_{(s,t)} > c fd^2 \log n.
\]
where $c$ is some large enough absolute constant.

\end{itemize}

Note that the lemma statement is equivalent to the claim that, \emph{if} $(u,v)$ is added to $H^{(sp)}$, \emph{then} $F_{(u,v)}$ is not mass-avoiding.
We have set up these definitions for general $(u, v), F$ because our proof strategy is to take a union bound over \emph{all} possible choices of $(u, v), F$, which will thus include $F_{(u, v)}$.

We say that a mass-avoiding $F$ is \emph{good} for $(u,v)$ if (immediately prior to $(u,v)$ being considered by the algorithm) there is some  $(s,t) \in \Psi(u,v,F)$ such that $(s,t)$ is already an emulator edge in $H$.  Otherwise, we say that $F$ is \emph{bad} for $(u,v)$.

We now prove that with high probability, every mass-avoiding $F$ is good for $(u,v)$.  To see this, consider some mass-avoiding $F$.  Every middle-heavy fault-avoiding $3$-path $\pi = (s,x,y,t)$ which contributes to $C_{(s,t)}$ was completed by its middle edge $(x,y)$ (since the path is middle-heavy and the algorithm considers edges in increasing weight order) and does not intersect $F_{(x,y)}$ (since it is fault-avoiding).
So, by definition of the algorithm, when $\pi$ was completed we sampled $(s,t)$ as an emulator edge with probability $d^{-2}$.
No two such paths share the same middle edge, and hence we independently add $(s,t)$ as an emulator edge with probability $d^{-2}$ at least $C_{(s,t)}$ times.  These choices are also clearly independent for different pairs $(s,t)$ and $(s', t')$, and hence the probability that $F$ is bad is at most
\begin{align*}
\prod_{(s,t) \in \Psi(u,v,F)} \left(1-\frac{1}{d^2}\right)^{C_{(s,t)}} &
\leq \exp\left( - \frac{\sum_{(s,t) \in \Psi(u,v,F)} C_{(s,t)}}{d^2}\right) 
\leq \exp\left( -cf \log n\right) 
\leq 1/n^{f+10},
\end{align*}
where we used that $F$ is mass-avoiding and we set $c$ sufficiently large.  There are at most $n^f$ possible mass-avoiding sets $F$ (since $|F| \leq f$), so a union bound over all all of them implies that every mass-avoiding set $F$ is good for $(u,v)$ with probability at least $1-1/n^{10}$.  
We can now do another union bound over all $(u,v)$ to get that this holds for \emph{every} $(u,v) \in E$ (whether added to $H^{(sp)}$ or not) with probability at least $1-1/n^8$.  

Now consider some $(u,v) \in H^{(sp)}$.  By the above, if $F_{(u,v)}$ is mass-avoiding, then it must be good.  Hence there is some emulator edge $(s,t)$ with $(s,t) \in \Psi(u,v,F_{(u,v)})$, which implies that $(s,u), (v,t) \in H^{(sp)}$ and $s,t\not\in F_{(u,v)}$.  Thus immediately prior to adding $(u,v)$ to $H^{(sp)}$, it was the case that 
\begin{align*}
\dist_{H \setminus F_{(u,v)}}(u,v) &\leq \dist_{H \setminus F_{(u,v)}}(u,s) + \dist_{H \setminus F_{(u,v)}}(s,t) + \dist_{H \setminus F_{(u,v)}}(t,v) \\
&\leq \dist_{G \setminus F_{(u,v)}}(u,s) + \dist_{G \setminus F_{(u,v)}}(s,t) + \dist_{G \setminus F_{(u,v)}}(t,v) \\
&\leq \dist_{G \setminus F_{(u,v)}}(u,s) + \left(\dist_{G \setminus F_{(u,v)}}(s,u) + \dist_{G \setminus F_{(u,v)}}(u,v) + \dist_{G \setminus F_{(u,v)}}(t,v) \right) + \dist_{G \setminus F_{(u,v)}}(t,v) \\
&\leq 5 \cdot \dist_{G \setminus F_{(u,v)}} (u,v)\\
&= 5 \cdot w(u,v).
\end{align*}
In the above inequalities we used the triangle inequality, the fact that edges are added in increasing weight order and $(u,s)$ and $(t,v)$ have already been added and so are lighter than $(u,v)$, and the fact that $(s,t)$ is an emulator edge and so after the failure of $F_{(u,v)}$ must have weight $\dist_{G \setminus F_{(u,v)}}(s,t)$.

But this means that the algorithm would not have added $(u,v)$ due to fault set $F_{(u,v)}$, which contradicts the definition of $(u,v)$ and $F_{(u,v)}$.
Hence if $(u, v)$ is added then $F_{(u,v)}$ cannot be mass-avoiding, which implies the lemma.
\end{proof}

\kthreealgebra*
\begin{proof}
Let $m$ be the number of middle-heavy fault-avoiding $3$-paths in the final graph $H^{(sp)}$.
By Lemma \ref{lem:mfpathcount}, we have that with high probability
$$m = d^2 \left|E\left(H^{(sp)}\right)\right| + \Oish\left(fn^2 \right).$$
We condition on this event occurring in the remainder of the argument.
There are two cases, depending on which of these terms is larger.

\paragraph{Case 1: the term $d^2 \left|E(H^{(sp)}\right|$ is larger.}
In this case, by Lemma \ref{lem:3emedge}, the total number of emulator edges is
$O\left( \left| E\left(H^{(sp)}\right) \right| \right)$,
and so it suffices to bound the spanner edges.
By Lemma \ref{lem:3spanedge}, the total number of spanner edges is
$$\left| E\left(H^{(sp)}\right) \right| = O\left( \left(d^2 \left|E(H^{(sp)}\right|\right)^{1/3} n^{2/3} + nf\right).$$
If the latter term dominates then $\left| E\left(H^{(sp)}\right) \right| = O\left( nf\right)$, so we are done.
If the former term dominates, then by rearranging we get
$\left| E\left(H^{(sp)}\right) \right| = O\left( nd \right)$,
and so in this case $|E(H)| = O(nd)$.

\paragraph{Case 2: the term $\Oish\left(fn^2 \right)$ is larger.}
In this case, by Lemma \ref{lem:3emedge}, the total number of edges in $H^{(em)}$ is $\Oish\left(fn^2 d^{-2}\right)$.
By Lemma \ref{lem:3spanedge}, the total number of spanner edges in $H^{(sp)}$ is
\begin{align*}
\left| E\left(H^{(sp)}\right) \right| &= O\left( \left( \Oish\left(fn^2 \right) \right)^{1/3} n^{2/3} + nf\right) = \Oish\left( f^{1/3} n^{4/3}\right) + O\left(nf\right).
\end{align*}
So in this case, the number of edges in the final emulator $H$ is
$$\Oish\left(fn^2 d^{-2} + f^{1/3} n^{4/3}\right) + O\left(nf\right).$$

\paragraph{Putting It Together.}

In either case, the total number of edges in the final emulator $H$ is
$$\Oish\left(fn^2 d^{-2} + f^{1/3} n^{4/3}\right) + O\left(nd + nf\right).$$
Setting $d = f^{1/3} n^{1/3}$, we thus get $\Oish\left(f^{1/3} n^{4/3}\right) + O(nf)$ edges in total, as claimed.
\end{proof}

\end{document}